\documentclass[sigplan,screen]{acmart}\settopmatter{}



\setcopyright{acmlicensed}
\acmPrice{15.00}
\acmDOI{10.1145/3315507.3330198}
\acmYear{2019}
\copyrightyear{2019}
\acmISBN{978-1-4503-6718-9/19/06}
\acmConference[DBPL '19]{Proceedings of the 17th ACM SIGPLAN International Symposium on Database Programming Languages}{June 23, 2019}{Phoenix, AZ, USA}
\acmBooktitle{Proceedings of the 17th ACM SIGPLAN International Symposium on Database Programming Languages (DBPL '19), June 23, 2019, Phoenix, AZ, USA}

\bibliographystyle{ACM-Reference-Format}


\usepackage{booktabs}   
\usepackage{subcaption} 
\usepackage{hyphenat} 
\usepackage[utf8]{inputenc}
\usepackage{listings}
\usepackage{mathpartir}
\usepackage{newunicodechar}
\usepackage{xspace}
\usepackage{balance}

\newtheorem{theorem}[figure]{Theorem}
\newtheorem{remark}[figure]{Remark}
\newtheorem{lemma}[figure]{Lemma}
\theoremstyle{definition}
\newtheorem{definition}[figure]{Definition}

\newcommand\rec[1]{\ensuremath{\langle}#1\ensuremath{\rangle}}
\newcommand\coll[1]{\texttt{[}#1\texttt{]}}

\newcommand\intty[0]{\texttt{Int}}
\newcommand\inttyc[0]{\texttt{Int}^*}
\newcommand\boolty[0]{\texttt{Bool}}

\newcommand{\Haskell}{\textsc{Haskell}\xspace}
\newcommand{\TLinks}{\textsc{Links\textsuperscript{T}}\xspace}

\newcommand{\NRC}{\textsc{Nrc}\xspace}
\newcommand{\LLinks}{\textsc{Links\textsuperscript{L}}\xspace}
\newcommand{\Links}{\textsc{Links}\xspace}

\newcommand{\SQL}{\textsc{Sql}\xspace}
\newcommand{\DSH}{\textsc{Dsh}\xspace}

\newcommand{\lamiml}{$\lambda_i^\mathit{ML}$\xspace}
\newcommand{\lamr}{$\lambda_r$\xspace}
\newcommand{\kw}[1]{\text{\tt\bfseries#1}}
\newcommand{\concat}{\ensuremath{+\!\!\!\!+\,}}
\newcommand{\Urweb}{\textsc{Ur/Web}\xspace}

\newcommand{\ttrace}[2]{\llbracket #1 : \mathsf{T}(#2)\rrbracket}
\newcommand{\trace}[1]{\llbracket #1 \rrbracket}

\newunicodechar{⟦}{\ensuremath{\llbracket}}
\newunicodechar{⟧}{\ensuremath{\rrbracket}}
\newunicodechar{↦}{\mapsto}
\newunicodechar{⋅}{\ensuremath{\cdot}}
\newunicodechar{𝔇}{\ensuremath{\mathfrak{D}}}
\newunicodechar{𝔏}{\ensuremath{\mathfrak{L}}}
\newunicodechar{𝔚}{\ensuremath{\mathfrak{W}}}
\newunicodechar{ℰ}{\ensuremath{\mathcal{E}}}

\newunicodechar{Γ}{\ensuremath{\Gamma}}
\newunicodechar{⊢}{\ensuremath{\vdash}}
\newunicodechar{τ}{\ensuremath{\tau}}
\newunicodechar{⟶}{\ensuremath{\longrightarrow}}
\newunicodechar{∧}{\ensuremath{\wedge}}
\newunicodechar{∀}{\ensuremath{\forall}}
\newunicodechar{λ}{\ensuremath{\lambda}}
\newunicodechar{α}{\ensuremath{\alpha}}
\newunicodechar{✓}{\checkmark}
\newunicodechar{⤳}{\ensuremath{\leadsto}}

\definecolor{keyword}{RGB}{96,0,53}
\definecolor{darkgreen}{RGB}{0,128,0}
\definecolor{listingsFrame}{gray}{0.6}
\definecolor{string}{RGB}{0,128,0}
\definecolor{string}{RGB}{0,0,128}


\makeatletter
\lstdefinestyle{normal}{%
  aboveskip=.5\baselineskip,
  belowskip=.5\baselineskip,
  basicstyle=%
    \tt%
    \lst@ifdisplaystyle\footnotesize\fi
}
\makeatother

\lstdefinestyle{infigure}{%
  aboveskip=0em,
  belowskip=0em,
  xleftmargin=0em,
  xrightmargin=0em,
  basicstyle=\footnotesize\tt,
  stringstyle=\footnotesize\it,
}








\newcommand{\mytilde}{%
  \texttt{\resizebox{.48em}{1ex}{\hbox{$\sim$}}}%
}

\lstset{%
  columns=fullflexible,
  keepspaces=true,
  basicstyle=\tt,
  showspaces=false,
  showstringspaces=false,
  style=normal,
  keywordstyle=\bfseries,
  stringstyle=\footnotesize\it,
  literate={%
   {°}{{\texttt{\space}}}{1}
   {~}{{\mytilde}}{1}
   {>}{{\texttt{>}}}{1}   
   {+}{{\texttt{+}}}{1} 
   {*}{{\texttt{*}}}{1}
   {]}{{\texttt{]}}}{1}
   {|}{{\texttt{|}}}{1}
   {\&}{{\texttt{\&}}}{1}
  },
  escapechar=\%,
  escapeinside={(@*}{*@)}
}

\lstdefinelanguage{Links}{%
  keywords={delete,fun,if,else,for,from,sig,server,where,query,table,with,var,typename},
  mathescape=true,
  comment=[l]{\#},
  morestring=[b]/,
  morestring=[b]",
  sensitive=true,
  escapechar=\%,
}

\lstdefinelanguage{WLinks}[]{Links}{%
  language=Links,
  morekeywords={data,prov,default}
}
\lstset{language=Links}

\lstdefinelanguage{LLinks}[]{Links}{%
  language=Links,
  morekeywords={lineage}
}

\lstdefinelanguage{TLinks}[]{Links}{%
  language=Links,
  morekeywords={Typerec, true, false, trace, Typerec, then, Rmap, rmap, rfold, tracecase, typecase, of, fix},
  literate={{/}{{/}}{1}{\{}{{$\langle$}}{1}{\}}{{$\rangle$}}{1}{λ}{{$\lambda$}}{1}}
}

\lstdefinelanguage{LaTeX}[]{Tex}{%
  language=Tex,
  morekeywords={emph, label, ref, begin, end, part, chapter, section,
                subsection, subsubsection, paragraph, subparagraph, cite},
}

\lstdefinelanguage{SugarJ}[]{Java}{%
  language=Java,
  morekeywords={sugar, context, free, syntax, desugarings, sorts, signature,
    constructors, rules, strategies, assert, as, editor, services, colorer, folding,
    outliner, checks, recursive, errors, warnings, where,
    css, outlining, rec, color, completions, completion, template,
    layout, analyses, then, extension},
  mathescape=true,
  deletestring=[b]',
  morecomment=[l]{//},
  escapechar=\%,
}

\lstdefinelanguage{SugarJXML}[]{SugarJ}{%
  morekeywords={xmlschema},
  moredelim=*[s][\color{blue}]{<}{>},
  emph={$}, 
  emphstyle={\bfseries\color{keyword}},
  deletecomment=[l]{//},
  morecomment=[s]{<!--}{-->}
}

\lstdefinelanguage{SugarMDD}[]{SugarJ}{%
  morekeywords={model, transformation},
}

\lstdefinelanguage{SugarJATM}[]{SugarMDD}{%
  morekeywords={statemachine, initial, state, events},
}

\lstdefinelanguage{SugarJEntity}[]{SugarMDD}{%
  morekeywords={entity},
}
\lstdefinelanguage{SugarJATMEntity}[]{SugarJATM}{%
  morekeywords={data, entity},
}

\lstdefinelanguage{SugarJTemplate}[]{SugarMDD}{%
  morekeywords={in, \$for, template},
  mathescape=false
}

\lstdefinelanguage{SugarJFeature}[]{SugarMDD}{%
  morekeywords={featuremodel, features, constraint, config, enable, disable, variable},
  otherkeywords={\#ifdef},
}

\lstdefinelanguage{scala}{
  morekeywords={abstract,case,catch,class,def,%
    do,else,extends,false,final,finally,%
    for,if,implicit,import,match,mixin,%
    new,null,object,override,package,%
    private,protected,requires,return,sealed,%
    super,this,throw,trait,true,try,%
    type,val,var,while,with,yield},
  sensitive=true,
  morecomment=[l]{//},
  morecomment=[n]{/*}{*/},
  morestring=[b]",
  morestring=[b]',
  morestring=[b]"""
}

\lstdefinelanguage{SDF}{%
  morekeywords={context, free, syntax, sorts, signature,
    constructors, rules, strategies},
  escapechar=_,
  mathescape=true,
  comment=[l]{\%\%},
  morestring=[b]",
}

\lstdefinelanguage{MyPython}[]{Python}{%
  escapechar=\%,
  mathescape=true,
}

\lstdefinelanguage{MyHaskell}[]{Haskell}{%
  escapechar=\%,
  mathescape=true,
  deletekeywords={Nothing,Just,False,True,putStrLn,fail,fromJust,lookup,Num,exp,free,snd,String,
  return,error,otherwise,not,show,read,Eval,Read,readsPrec,print},
}

\lstdefinelanguage{SugarHaskell}[]{MyHaskell}{%
  morekeywords={context, free, syntax, desugarings, sorts, signature,
    constructors, rules, strategies, lexical, reject},
  mathescape=false,
}

\lstdefinelanguage{SugarHaskellArrows}[]{SugarHaskell}{%
  morekeywords={proc},
}

\lstdefinelanguage{EBNF}{
  morestring=[b]"
}

\lstdefinelanguage{Constraint}{
}

\lstdefinelanguage{Plain}{}

\lstdefinelanguage{Questionnaire}[]{Java}{%
  morekeywords = {questionnaire, question, value, Boolean, String, Integer, group,
    if, else, define, ask}
}

\lstdefinelanguage{SugarFomega}{
  keywords = {module, val, type, mu, if, then, else, case, of,
    fold, unfold, true, false, as, public, import, syntax, desugaring, typing, context, free, let, in, end, forall, do},
  mathescape = true,
  morestring=[b]",
}



\begin{document}

\lstset{language=TLinks}

\title{Language-integrated provenance by trace analysis}         


\author{Stefan Fehrenbach}
\affiliation{
  \institution{University of Edinburgh}            
  \country{United Kingdom}                    
}
\email{stefan.fehrenbach@gmail.com}          

\author{James Cheney}
\affiliation{
   \institution{University of Edinburgh and The Alan Turing Institute}            
  \country{United Kingdom}                   
}
\email{jcheney@inf.ed.ac.uk}         

\begin{abstract}
  Language-integrated provenance builds on language\hyp{}integrated query
  techniques to make provenance information explaining query results
  readily available to programmers.  In previous work we have explored
  language-integrated approaches to provenance in \Links and \Haskell.
  However, implementing a new form of provenance in a
  language\hyp{}integrated way is still a major challenge.  We propose a
  self-tracing transformation and trace analysis features that,
  together with existing techniques for type-directed generic
  programming, make it possible to define different forms of
  provenance as user code.  We present our design as an extension to a
  core language for \Links called \TLinks, give examples showing its
  capabilities, and outline its metatheory and key correctness
  properties.
\end{abstract}

\begin{CCSXML}
   <ccs2012>
<concept>
<concept_id>10002951.10002952.10002953.10010820.10003623</concept_id>
<concept_desc>Information systems~Data provenance</concept_desc>
<concept_significance>500</concept_significance>
</concept>
<concept>
<concept_id>10011007.10011006.10011008.10011009.10011012</concept_id>
<concept_desc>Software and its engineering~Functional languages</concept_desc>
<concept_significance>500</concept_significance>
</concept>
</ccs2012>
\end{CCSXML}


\ccsdesc[500]{Information systems~Data provenance}
\ccsdesc[500]{Software and its engineering~Functional languages}


\keywords{language-integrated provenance, language\hyp{}integrated query, query normalization, provenance}

\maketitle

\section{Introduction}
Provenance tracking has been heavily investigated as a means of making
database query results explainable~\cite{ICDT2001BunemanKT,FTDB2009CheneyCT}, for example to
explain where in the input some output data came from
(\emph{where-provenance}) or what input records justify the presence of
some output record (\emph{lineage}, \emph{why-provenance}).  Many
prototype implementations of provenance-tracking have been developed
as ad hoc extensions to (or middleware layers wrapping) ordinary relational database systems~\cite{Senellart:2018:PPP:3229863.3275590,GProMArab2014}, 
typically by augmenting the data model with additional annotations and
propagating them through the query using an enriched semantics.  This
approach, however, inhibits reuse and uptake of these techniques since
a special (and usually not maintained) variant of the database system
must be used.  Installing, maintaining and using such research
prototypes is not for the faint of heart.

We advocate a \emph{language-based} approach to provenance, building
on \emph{language-integrated
  query}~\cite{meijer06sigmod,TLDI2012LindleyC,SIGMOD2014CheneyLW}.
In language-integrated query, database queries are embedded in a
programming language as first-class citizens, not uninterpreted
strings, and thus benefit from typechecking and other language
services.  In language-integrated \emph{provenance}, we aim to support
provenance-tracking techniques by modifying the behavior of queries at
the language level to track their own provenance. These modified
queries can then be used with unmodified, mainstream database
systems. To date, Fehrenbach and Cheney~\cite{FEHRENBACH2018103} have demonstrated the
capabilities of language\hyp{}integrated provenance in \Links, a Web and
database programming language, and Stolarek
and Cheney~\cite{Programming2018StolarekC} adapted this approach to
work with \DSH, an existing language\hyp{}integrated query library in
\Haskell~\cite{SIGMOD2015UlrichG}.  In both cases, where-provenance
and lineage are supported as representative forms of provenance.

However, both approaches explored so far have drawbacks.  Our
previous implementations of language-integrated provenance in \Links are ad hoc
language extensions, requiring nontrivial changes to the \Links
front-end and runtime.  It is not obvious how to support both
extensions at once, and supporting additional extensions would
likewise require a major intervention to the language.  In \DSH,
we were able to support both forms of provenance at
once, but did need to make superficial changes to \DSH and carry
out nontrivial type-level
programming to make our translations pass \Haskell's typechecker.
Thus, in both cases, we feel there is significant room for
improvement, to make it easier to develop new forms of
provenance without ad hoc language extensions or
subtle type-level programming.

In this paper, we present a core language design called \TLinks that
extends the query language core of \Links (a variant of the Nested
Relational Calculus~\cite{TCS1995BunemanNTW}) with several powerful
programming constructs.  These include well-studied constructs for
type-directed generic programming (e.g. $\kw{Typerec}$ and
$\kw{typecase}$)~\cite{Harper:1995:CPU:199448.199475}, extended to support
generic programming with record types~\cite{PLDI2010Chlipala}.
In addition, we propose novel primitives for constructing and
analyzing \emph{query traces} (following~\cite{cheney14ppdp}).  We
will show that these features suffice to define forms of provenance
programmatically, using the following recipe.  Given a query $q$, we
first \emph{transform} it to a \emph{self-tracing} query $q^{\mathsf{T}}$.  We
can then \emph{compose} $q^{\mathsf{T}}$ with a \emph{trace analysis function}
$f^{\mathsf{P}}$, which is simply an ordinary \TLinks function that makes use of
the type and trace analysis capabilities.  Each form of provenance we
support can
be defined as a trace analysis function, and can be applied to queries
of any type.  Thus, $f^{\mathsf{P}} \circ q^{\mathsf{T}}$ defines the intended query result
together with the desired provenance.  Finally, we \emph{normalize}
$f^{\mathsf{P}} \circ q^{\mathsf{T}}$ to a \NRC expression, which can be further translated to \SQL and evaluated efficiently on a mainstream database by the existing language-integrated query implementation in \Links \cite{SIGMOD2014CheneyLW}.  Normalization effectively
deforests the traces that would be produced by $q^{\mathsf{T}}$ if we were to
execute it directly; thus, executing the normalized \NRC query is
typically much faster than executing $q^{\mathsf{T}}$ and then $f^{\mathsf{P}}$ separately
would be.

Our main contributions are as follows.
\begin{itemize}
\item We show via examples (Section~\ref{sec:trace-analysis}) how a
  programmer can use type and trace analysis constructs to define
  different modifications of query behavior, for example to extract
  where-provenance and lineage from traces.
\item We present the language design of \TLinks. We informally
  introduce the novel trace constructors in
  Section~\ref{sec:query-traces} and present syntax and type system
  details in Section~\ref{sec:syntax-semantics}. This includes traces
  and trace analysis operations, and reviews the already-studied
  type-directed generic programming features from previous work.
\item We then present the self-tracing transformation
  (Section~\ref{sec:self-tracing}) and the extended rewrite rule
  system needed for normalization, and outline the proofs of type
  preservation and correctness for these components
  (Section~\ref{sec:normalization}).
\end{itemize}
We have a preliminary implementation, but the main
contributions of this paper concern the design and theory, and a
full-scale implementation in \Links is future work.

\section{The problem}
  
As explained earlier, in previous work we have investigated different
ways of implementing where-provenance and lineage on top of existing
language-integrated query systems, namely \Links and \DSH.  In both
cases, given a query $q$, we wish to construct another query $q^{\mathsf{P}}$
that provides both the ordinary query results of $q$ and additional
\emph{annotations} that provide some form of information about how
query results relate to the input data.  Preferably, the transformed
query should still be in the same query language as that handled by the
existing language-integrated query system, so that this implementation
can be reused to generate efficient \SQL queries.  Of course, in a typed programming language, we also expect the generated query to be well-typed.

For example, for
where-provenance, we wish to construct query $q^\mathsf{where}$ in
which each data field in the query result is annotated with a
\emph{source location} in the input database, which we
typically implement as a tuple $(R,A,i)$ consisting of a relation name
$R$, attribute name $A$, and row identifier (or primary key value)
$i$.  Likewise, for lineage, we wish to construct a query
$q^{\mathsf{lineage}}$ in which each output record is annotated with a
collection of references $(R,i)$ to input records that help
``witness'' or ``justify'' the presence of the output record.

\begin{figure}[tb]
\small
\textbf{Agencies}\\[.3em]
\begin{tabular}{rlll}
(oid) & name & based\_in & phone\\
\cmidrule{2-4}
1 & EdinTours & Edinburgh & 412 1200\\
2 & Burns's & Glasgow & 607 3000
\end{tabular}
\\[1em]

\textbf{ExternalTours}\\[.3em]
\begin{tabular}{rlllr}
(oid) & name & destination & type & price (in \pounds)\\
\cmidrule{2-5}
3 & EdinTours & Edinburgh & bus & 20\\
4 & EdinTours & Loch Ness & bus & 50\\
5 & EdinTours & Loch Ness & boat & 200\\
6 & EdinTours & Firth of Forth & boat & 50\\
7 & Burns's & Islay & boat & 100\\
8 & Burns's & Mallaig & train & 40
\end{tabular}
\\[1em]

\textbf{BoatToursQueryResult}\\[.3em]
\begin{tabular}{ll}
  name & phone\\
  \midrule
  EdinTours & 412 1200\\
  EdinTours& 412 1200\\
  Burns's & 607 3000
\end{tabular}

\caption{Example database and boat tours query result.}\label{fig:example-data}
\end{figure}

As a running example, consider the following boat tours query (in
\Links syntax). It uses nested \lstinline|for| comprehensions to
iterate over two tables, filtering by type and joining on the name
columns. It returns a list of records (pairs of field name and value
separated by commas and enclosed in angle brackets) containing the
agencies names and phone numbers. See Figure~\ref{fig:example-data} for
an example input database and result.
\begin{lstlisting}
for (e <- externalTours) where (e.type == "boat")
  for (a <- agencies) where (a.name == e.name) 
    [{name = e.name, phone = a.phone}]
\end{lstlisting}

The where-provenance translation of
this query should annotate the field value Burns's in
the result with where-provenance annotation (ExternalTours, name, 7),
and the lineage translation should annotate the row (Burns's, 607
3000) with lineage annotation [(Agencies,2), (ExternalTours,7)]. (Note
that in lineage, the annotation of each row is a \emph{collection} of
input row references; both \Links and \DSH can already handle such nested query results~\cite{SIGMOD2014CheneyLW,SIGMOD2015UlrichG}.)

In our previous work, we have implemented these translations either by
directly changing the language implementation (in \Links), or by
making nontrivial modifications to a language-integrated query library
(in \DSH).  While this work shows that it is possible to provide
(reasonably efficient) language-integrated provenance via
source-to-source translation of queries, both approaches are still
nontrivial interventions to an existing implementation, and so
developing new forms of provenance, or variations on existing ones, is
still a considerable challenge.

If we wish to provide the necessary query transformation
capability using high-level programming constructs, then we face two
significant challenges.  First, transforming the query expression in
the direct approaches considered so far
relies on fairly heavyweight metaprogramming capabilities, and
type-safe metaprogramming by reflection over object languages with
binding constructs (such as comprehensions in queries) is a
significant challenge.  Based on prior work on
general forms of provenance such as
\emph{traces}~\cite{10.1007/978-3-642-28641-4_22,cheney14ppdp} or \emph{provenance
  polynomials}~\cite{green07pods}, we might hope to avoid the need for
heavyweight metaprogramming by computing a single, general form of
\emph{query trace} once and for all, and specializing it to different forms of
provenance later.  However, this raises the question of how to design
a suitable tracing framework and how to provide appropriate language
constructs that can specialize traces to different forms of provenance,
in a type-safe and efficient way.  (In particular,
we cannot simply reuse the provenance polynomials/semirings framework
since it is not able to capture where-provenance~\cite{FTDB2009CheneyCT}.)

Second, and related to the previous point, we need to change not only the
query \emph{behavior} but also the query \emph{result type}.
Specifically, in the type of $q^{\mathsf{where}}$, each field is
replaced with a record consisting of the ordinary data value and its
where-provenance annotation, whereas in $q^{\mathsf{lineage}}$, each
element of a collection in the query result type becomes a pair
consisting of the original data and a \emph{collection} of input row
references.  In previous implementations, we have added this behavior to the typechecker directly (in \Links), or
(in \DSH) used \emph{type families}~\cite{ChaKelPey05} to define the effect of the
where-provenance or lineage transformations at the type level.  In the
case of \DSH, this necessitated subtle changes to the \DSH library, as
well as defining evidence translations at the type and term levels to
convince \Haskell's typechecker that our definitions were type-correct.

Thus, in both \Links and \DSH, our previous work has
shown that it is possible to implement language-integrated provenance,
but the need to manipulate both query expressions and their types
makes this more difficult than we might hope.  Our goal, therefore, is
to identify a small set of language features that addresses all of the
above needs well: we would like to be able to customize the query
behavior to handle multiple forms of provenance, 
while retaining the existing benefits demonstrated by
previous implementations of language-integrated provenance:
specifically type-safety and efficient query generation.

\section{Query traces}\label{sec:query-traces}

In this section we describe what our traces look like through a series
of examples. We show how to rewrite expressions to compute their own
trace in Section~\ref{sec:self-tracing}. As described earlier, the
intent is to compose a trace analysis function with a self-tracing
query and normalize to deforest the trace and only compute the parts
that we actually need.

The \lstinline|trace| keyword causes a query expression to be traced.
For example, \lstinline|trace 2+3| has type \lstinline|Trace Int| and
evaluates to \lstinline|OpPlus{l=Lit 2,r=Lit 3}|. Here,
\lstinline|OpPlus| represents an addition operation and its argument
is a record of the left and right subtraces, and \lstinline|Lit| is
the constructor for traces of literal values. Traces of records are
just records of traces, and traces of lists are just lists of traces,
e.g., tracing the singleton list of the singleton record
\lstinline|[{answer=42}]| results in \lstinline|[{answer=Lit 42}]|.

In general, the trace of an expression with type $A$ has a type where
every base type is replaced by the traced version of the base type,
but all list and record constructors stay the same. We can express
this in \TLinks directly as the type-level function \lstinline|TRACE|
defined in Figure~\ref{fig:tracetf}. We capitalize type-level entities
(except variables) and trace constructors, and write type-level
functions in all uppercase. $\kw{Typerec}$ folds over a type, in this
case the type variable \lstinline|a|. It uses its first three
arguments for base types (in our case replacing \boolty with
\lstinline|Trace Bool|, etc.). The next argument is used if the
argument is a list type and applied to the original element type and
the recursively transformed element type. The next arguments work
similarly for records and trace types.

\begin{figure}
\begin{lstlisting}[style=infigure]
TRACE = λa.Typerec a (Trace Bool, Trace Int, Trace String,
                      λe e'.[e'], λr r'.{r'}, λb t.t)
\end{lstlisting}
  \caption{The type-level function \lstinline!TRACE!.}
  \label{fig:tracetf}
\end{figure}

Tables are typed as lists of records. Their traces reveal that they
are not constants in the query however. Values originating from tables
are marked with the \lstinline|Cell| constructor. For example, the
trace of the agencies table looks like this:
\begin{lstlisting}
[{oid=Cell{tbl="agencies",col="oid",row=1,val=1},
  name=Cell{tbl="agencies",col="name",row=1,val="EdinTours"},
  based_in=Cell{tbl="agencies",col="based_in",row=1,val="Edinburgh"}
  phone=Cell{tbl="agencies",col="phone",row=1,val="412 1200"}},
 {oid=Cell{tbl="agencies",col="oid",row=2,val=2},
  name=Cell{tbl="agencies",col="name",row=2,val="Burns's"},
  based_in=Cell{tbl="agencies",col="based_in",row=2,val="Glasgow"}
  phone=Cell{tbl="agencies",col="phone",row=2,val="607 3000"}}]
\end{lstlisting}

Conditional expressions record the trace of the condition as well as
the trace of the eventually produced result. Polymorphic operations
such as \lstinline|==| record the type they were applied to. The trace
of a \lstinline|for| comprehension carries both the element type of
the input collection and subtraces of both the input and the output.
For example, the following query is a convoluted way to get \lstinline|["Edinburgh"]|.
\begin{lstlisting}
for (a <- table "agencies" ...) where (a.name == "EdinTours") 
  [a.based_in]
\end{lstlisting}
Its trace is shown below. We treat \lstinline|where ($M$) $N$| as syntactic sugar for \lstinline|if $M$ then $N$ else []|.
\begin{lstlisting}
[ If{cond=OpEq String
       {l=For {oid:Int,name:String,...}
            {in={oid=Cell{tbl="agencies",col="oid",...},
                 name=Cell{tbl="agencies",col="name",...},
                 based_in=Cell{tbl="agencies",...}
                 phone=Cell{tbl="agencies",col="phone",...}},
             out=Cell{tbl="agencies",...},
        r=Lit "EdinTours"},
     out=For {oid:Int,name:String,based_in:String,phone:String}
       {in={oid=Cell{tbl="agencies",col="oid",...},
            name=Cell{tbl="agencies",col="name",...},
            based_in=Cell{tbl="agencies",...}
            phone=Cell{tbl="agencies",col="phone",...}},
        out=Cell{tbl="agencies",col="based_in",row=1,
                 val="Edinburgh"}}}} ]
\end{lstlisting}
Note that the variable \lstinline|a| does not appear explicitly in the
trace. Rather, wherever a variable in an expression would produce a
value, we record the subtrace of the value in the trace. Also note
that the trace of this singleton list is still a singleton list, and
the comprehension marker appears on the (singleton) element. This is a
significant deviation from previous work on tracing queries
\cite{cheney14ppdp} which will make trace analysis much easier as
trace analysis functions will not have to deal with variable binding.

\section{Trace analysis}\label{sec:trace-analysis}

Trace analysis functions need to be flexible enough to work with
queries of any type and any shape. The shape of a query, and thus the
depth of its trace, are not even necessarily known until runtime of
the program. Therefore trace analysis functions need to be polymorphic
and recursive. In the following we use $\Lambda$ for term-level type
abstraction, \lstinline|fix| to define recursive values, \kw{typecase}
to branch on types, and \kw{tracecase} to branch on trace
constructors. We will also use generic record operations to work with
records of any number and type of fields. We will describe these in
more detail in Section~\ref{sec:syntax-semantics}.


\subsection{Where-provenance}

\begin{figure}
\begin{lstlisting}[style=infigure]
W = $\lambda$a:Type.{val:a, tbl:String, col:String, row:Int}

WHERE = $\lambda$a:Type.Typerec a (W Bool, W Int, W String,
                  $\lambda$_ b.List b, $\lambda$_ r.Record r, $\lambda$_ b.b)

wherep : $\forall$a.T(TRACE a) -> T(WHERE a)
wherep = fix (wherep:$\forall$a.T(TRACE a) -> T(WHERE a)).$\Lambda$a:Type.
  typecase a of
    List b   => $\lambda$xs.for (x <- xs) [wherep b x]
    Record r => $\lambda$x.rmap$^\texttt{r}$ wherep x
    Trace b  => $\lambda$x.tracecase x of
      Lit y    => fake b y
      If y     => wherep (Trace b) y.out
      For c y  => wherep (Trace b) y.out
      Cell y   => y
      OpPlus y => fake Int (value (Trace Int) x)
      OpEq c y => fake Bool (value (Trace c) x)

fake : $\forall$a.T(a) -> T(W a)
fake = $\Lambda$a.$\lambda$x:T(a).{val=x,tbl="facts",col="alternative",row=-1}
\end{lstlisting}
  \caption{The \texttt{wherep} trace analysis function and supporting definitions.}
  \label{fig:where-ta}
\end{figure}

Where-provenance annotates every cell of a query result with
information about where in the database the value was copied from.
Figure~\ref{fig:where-ta} shows the \lstinline|wherep| trace analysis
function and helpers. On the type level, \lstinline|WHERE| replaces
every base type by a record with fields for the value, table, column,
and row number. For any type $a$, \lstinline|wherep| takes a trace and
returns a where-provenance--annotated value. \lstinline|T()| wraps
type-level computation, as explained later. To recover
where-provenance from a trace, \lstinline|wherep| distinguishes three
cases: did the traced expression have a list type, a record type, or a
base type. In case of a list type, we map \lstinline|wherep| over the
list of subtraces. (We use a comprehension here, but \Links handles
higher-order functions like \lstinline|map| and \lstinline|filter| just fine.) In case of a
record type, we use \kw{rmap} to map \lstinline|wherep| over the
fields of the record of subtraces. In case the original expression was
of some base type $A$, the trace has type \lstinline|Trace $A$|, which
we further analyze using \lstinline|tracecase|. If the trace constructor is \lstinline|Lit|
the value was a constant in the query and we need to mark it with
fake provenance. In the \lstinline|If| and
\lstinline|For| cases, we continue extracting where-provenance from their
output. If the trace constructor is \lstinline|Cell|, the value
originated from the database and already carries the table and column
names and row number. Finally, we associate fake where-provenance with
the results of operators, whose value is computed by the
\lstinline|value| trace analysis function (see
Section~\ref{sec:value}).

\subsection{Value}\label{sec:value}

The \lstinline|value| trace analysis function is the inverse to
tracing. It recovers a plain value from a trace by recomputing values
from operators' subtraces and otherwise throwing away all tracing
information. It is defined in Appendix~\ref{sec:app:value}.

\subsection{Lineage}

This implementation of lineage aims to emulate the behavior of \LLinks, a variant of \Links with built-in support for lineage \citep{FEHRENBACH2018103}.
This is complicated  by the fact that lineage annotations in \LLinks are on rows (or more generally, list elements) but tracing information in \TLinks is on cells.
We need to collect annotations from the trace leaves and pull them up to the nearest enclosing list constructor.

\begin{figure}
\begin{lstlisting}[style=infigure]
L = λa:Type.{data: a, lineage: [{%table%: String, row: Int}}

LINEAGE = λa:Type.Typerec a (Bool, Int, String,
                    λ_ b.List (L b), λ_ r.Record r, λ_ b.b)

lineage : $\forall$a.T(TRACE a) -> T(LINEAGE a)
lineage = fix (lineage:$\forall$a.T(TRACE a) -> T(LINEAGE a)).$\Lambda$a:Type.
  typecase a of
    List b   => λts.for (t <- ts)
                    [{data = lineage b t,
                      lineage = linnotation b t}]
    Record r => λx.rmap$^\texttt{r}$ lineage x
    Trace b  => λx.value (Trace b) x

linnotation : $\forall$a.T(TRACE a) -> [{%table%: String, row: Int}]
linnotation = fix (linnotation: ...).$\Lambda$a:Type.
  typecase a of
    List b   => λts.for (t <- ts) linnotation b t
    Record r => λx.rfold$^{\texttt{Rmap }(\lambda\texttt{\_.[}\langle\texttt{table:String, row:Int}\rangle\texttt{]})\texttt{ r}}$ (++) []
                      (rmap$^\texttt{r}$ linnotation x)
    Trace b  => λt.tracecase t of
      Lit c    => []
      If i     => linnotation (TRACE b) i.out
      For c f  => linnotation (TRACE c) f.in ++
                  linnotation (TRACE b) f.out
      Cell r   => [{%table% = r.%table%, row = r.row}]
      OpEq c e => linnotation (TRACE c) e.left ++
                  linnotation (TRACE c) e.right
      OpPlus p => linnotation (TRACE Int) p.left ++
                  linnotation (TRACE Int) p.right
\end{lstlisting}

  \caption{The \texttt{lineage} trace analysis function and supporting definitions.}
  \label{fig:lineage-ta}
\end{figure}

The \lstinline|LINEAGE| type function changes list types to carry a list of annotations.
On the value level, the implementation is split into two functions: \lstinline|lineage| and \lstinline|linnotation|, as shown in Figure~\ref{fig:lineage-ta}.
The \lstinline|lineage| function matches on the type of its argument and makes (recursive) calls to \lstinline|lineage|, \lstinline|linnotation|, and \lstinline|value| as appropriate to combine annotations and values.
The \lstinline|linnotation| function does the actual work of computing lineage annotations from traces.
The case for lists concatenates the lineage annotations obtained by calling \lstinline|linnotation| on the list elements.
In the case for records, we first use \lstinline|rmap| to map \lstinline|linnotation| over the record, then we use \lstinline|rfold| to flatten the record of lists of lineage annotations into a single list.
Trace constructors have lineage annotations as follows.
Literals do not have lineage.
Conditional expressions have the lineage of their result.
Comprehensions are the interesting case, where we combine lineage annotations from the input with lineage annotations from the output.
Each table cell has the expected initial singleton annotation consisting of its table's name and its row number.
Finally, the operators just collect their arguments' annotations.

There is an issue with this implementation of lineage: we collect duplicate annotations.
Consider the following query:
\begin{lstlisting}
for (x <- table "xs" {a: Int, b: Bool}) [x.a]
\end{lstlisting}
We just project a table to one of its columns.
The lineage of every element of the result should be one of the rows in the table.
If we apply the \lstinline|lineage| trace analysis function to the trace of the above query (at the appropriate type) and normalize, we get this query expression:
\begin{lstlisting}
for (x <- table "xs" {a: Int, b: Bool})
  [{data=x.a, lineage=[{tbl="xs",row=x.oid}] ++
           [{tbl="xs",row=x.oid}] ++ [{tbl="xs",row=x.oid}]}]
\end{lstlisting}
The lineage is correct, but there is too much of it.
Instead of having one annotation with table and row, we have the same annotation three times.
In fact, a similar query on a table with $n$ columns, would produce $n+1$ annotations.
Looking at the trace expression below, we can see the problem.
\begin{lstlisting}
for (x <- table "xs" {a: Int, b: Bool})
  [For {in={a=Cell {tbl="xs", col="a", row=x.oid, val=x.a},
            b=Cell {tbl="xs", col="b", row=x.oid, val=x.b}},
        out=Cell {tbl="xs", col="a", row=x.oid, val=x.a}}]
\end{lstlisting}
The record case combines the annotations from all of the fields, which interacts badly with the tracing of tables, which puts annotations on all of the fields.
There are at least two solutions to this problem that preserve tracing at the level of cells.
The ad-hoc solution is to introduce a set union operator $M \cup N$ with a special normalization rule that reduces to just $M$ if $M$ and $N$ are known to be equal statically.
The proper solution would be to support set and multiset semantics for different portions of the same query and generate \SQL queries that eliminate duplicates where necessary.

\subsection{Normalization and query generation}
To compute the where-provenance of the earlier boat tour agencies
query (let's call it $Q$), we can specialize the \lstinline|wherep| trace analysis function to the traced type of $Q$ and apply it to the traced query itself as follows:
\begin{lstlisting}
wherep (TRACE [{name:String,phone:String}]) (trace $Q$)
\end{lstlisting}
We have seen that traces can get quite big and trace analysis
functions contain features with no obvious counterpart in \SQL. The
rest of this paper shows how exactly tracing works, describes the
language in detail, and discusses normalization to nested relational
calculus, which we can further translate to \SQL. In the end, all of
the trace construction and trace analysis code will be eliminated and
the above code will result in a simple query like the following.
\begin{lstlisting}[language=sql]
SELECT e.name AS name_val, 'externalTours' AS name_tbl,
       'name' AS name_col, e.oid AS name_row, 
       a.phone AS phone_val, 'agencies' AS phone_tbl,
       'phone' AS phone_col, a.oid AS phone_row
  FROM agencies AS a, externaltours AS e
 WHERE a.name = e.name AND e.type = 'boat'
\end{lstlisting}
Note that \Links flattens nested records into top-level columns and
only reassembles records when fetching the results 
\citep{SIGMOD2014CheneyLW}.



\section{\TLinks syntax \& static semantics}\label{sec:syntax-semantics}

\begin{figure*}[tb!]
  \[
    \begin{array}{lrcl}
\text{Contexts} &      \Gamma & \Coloneqq & \cdot \mid \Gamma, \alpha: K \mid \Gamma, x: A \smallskip\\
\text{Kinds} &      K & \Coloneqq & \mathit{Type} \mid \mathit{Row}  \mid K_1 \rightarrow K_2 \smallskip\\
\text{Constructors} &      C, D & \Coloneqq & \texttt{Bool}^* \mid
                                              \inttyc \mid
                                              \texttt{String}^* \mid
                                              \alpha \mid \lambda
                                              \alpha: K.C \mid C\ D
                                              \mid \texttt{List}^*\ C \mid \texttt{Record}^*\ S \mid \texttt{Trace}^*\ C \\
&             & \mid & \kw{Typerec}~ C~ (C_B, C_I, C_S, C_L, C_R, C_T) \smallskip\\
\text{Row Constructors}&      S & \Coloneqq & \cdot \mid l : C; S \mid \rho \mid \kw{Rmap}\ C\ S \smallskip\\
\text{Types} &      A, B & \Coloneqq & \texttt{T}(C) \mid \texttt{Bool} \mid \intty \mid \texttt{String} \mid A \rightarrow B \mid \texttt{List}\ A \mid \texttt{Record}\ R \mid \texttt{Trace}\ A \mid \forall \alpha:K.A \smallskip\\
\text{Rows} &      R   & \Coloneqq & \cdot \mid l : A; R \smallskip\\
\text{Expressions} &      L, M, N & \Coloneqq & c \mid x \mid \lambda
                                                x:A.M \mid M\ N \mid
                                                \Lambda \alpha:K.M
                                                \mid M\ C \mid
                                                \kw{fix}~f:A.M \\
&             & \mid & \kw{if}~L~\kw{then}~M~\kw{else}~N \mid M + N
                       \mid M == N \mid \langle \rangle \mid \langle l = M; N \rangle \mid M.l \\
\text{(Collections)}&             & \mid & [] \mid [M] \mid M \concat N \mid \kw{for}~(x \leftarrow M)~N \mid \kw{table}~n~\langle R \rangle \\
\text{(Traces)}&             & \mid & \texttt{Lit}\ M \mid \texttt{If}\ M \mid \texttt{For}\ C\ M \mid \texttt{Cell}\ M \mid \texttt{OpEq}\ C\ M \mid \texttt{OpPlus}\ M \\
\text{(Trace Analysis)}&             & \mid & \kw{tracecase}~M~\kw{of}~(x.M_L, x.M_I, \alpha.x.M_F, x.M_C, \alpha.x.M_E, x.M_P) \\
\text{(Type Analysis)}&             & \mid & \kw{typecase}~C~\kw{of}\ (M_B, M_I, M_S, \beta.M_L, \rho.M_R, \beta.M_T)  \mid \kw{rmap}^S~L~M \mid \kw{rfold}^S~L~M~N\\
    \end{array}
  \]
  \caption{The syntax of \TLinks.}
  \label{fig:tlinks-syntax}
\end{figure*}
\begin{figure}[tb!]
  \begin{align*}
    (\lambda\alpha:K.C)\ D & ⤳ C[\alpha \coloneqq D] \\
    \kw{Rmap}\ C\ \cdot & \leadsto \cdot \\
    \kw{Rmap}\ C\ (l:D; S) &\leadsto (l: C\ D; \kw{Rmap}\ C\ S)\\
    \kw{Typerec}~\texttt{Bool}~(C_B, \hdots) &\leadsto C_B \\
    \kw{Typerec}~\coll{D}~(\hdots, C_L, \hdots) &\leadsto C_L\ D\ (\kw{Typerec}~D~(\hdots, C_L, \hdots)) \\
\kw{Typerec}~\rec{S}~(\hdots, C_R, \hdots) &\leadsto \\ & \hspace{-4em} C_R\ S\ (\kw{Rmap}\ (\lambda \alpha.\kw{Typerec}~\alpha~(\hdots, C_R, \hdots))\ S)
  \end{align*}

  \caption{Constructor and row constructor computation.}
  \label{fig:constructor-computation}
\end{figure}

\begin{figure*}[tb!]
  \small
  \begin{mathpar}
    \infer
    { Γ ⊢ M : B \\ Γ ⊢ A = B }
    { Γ ⊢ M : A }

    \infer
    { \cdot ⊢ R : \textit{Row} }
    {Γ ⊢ \kw{table}~n~\rec{\textsf{oid}:\intty; R} : \coll{\rec{\textsf{oid}:\intty; R}}}

    \infer
    { Γ ⊢ M : \forall \alpha: \textit{Type}. \texttt{T}(\alpha) \rightarrow \texttt{T}(C\ \alpha) \\
      Γ ⊢ N : \texttt{T}(\texttt{Record}^*\ S)}
    { Γ ⊢ \kw{rmap}^S~M~N : \texttt{T}(\texttt{Record}^*\ (\kw{Rmap}\ C \ S))}

    \infer
    { Γ ⊢ L: \texttt{T}(C) \rightarrow \texttt{T}(C) \rightarrow \texttt{T}(C) \\
      Γ ⊢ M : \texttt{T}(C) \\
      Γ ⊢ N : \texttt{T}(\texttt{Record}^*\ (\kw{Rmap}\ (\lambda\alpha.\alpha \rightarrow C)\ S)) }
    { Γ ⊢ \kw{rfold}^S\ L\ M\ N:\texttt{T}(C) }

    \infer
    { Γ ⊢ C : \textit{Type} \\
      Γ, \alpha: \textit{Type} ⊢ B : \textit{Type} \\
      \beta, \rho, \gamma \notin \textit{Dom}(Γ) \\
      Γ ⊢ M_B : B[\alpha \coloneqq \texttt{Bool}^*] \\
      Γ ⊢ M_I : B[\alpha \coloneqq \inttyc] \\
      Γ ⊢ M_S : B[\alpha \coloneqq \texttt{String}^*] \\
      Γ, \beta : \textit{Type} ⊢ M_L : B[\alpha \coloneqq \texttt{List}^*\ \beta] \\
      Γ, \rho : \textit{Row} ⊢ M_R : B[\alpha \coloneqq \texttt{Record}^*\ \rho] \\
      Γ, \gamma : \textit{Type} ⊢ M_T : B[\alpha \coloneqq \texttt{Trace}^*\ \gamma]
    }
    { Γ ⊢ \kw{typecase}~C~\kw{of}\ (M_B, M_I, M_S, \beta.M_L, \rho.M_R, \gamma.M_T) : B[\alpha \coloneqq C] }
  \end{mathpar}
  \caption{Term formation $Γ ⊢ M : A$.}
  \label{fig:term-formation}
\small
  \begin{mathpar}
    \infer
    { Γ ⊢ c : \intty }
    { Γ ⊢ \texttt{Lit}\ c : \texttt{Trace}\ \intty }

    \infer
    { Γ ⊢ M : \langle \mathsf{cond} : \texttt{Trace}\ \texttt{Bool}, \mathsf{out} : \texttt{Trace}\ A \rangle  }
    { Γ ⊢ \texttt{If}\ M : \texttt{Trace}\ A }

    \infer
    { Γ ⊢ C : \textit{Type} \\ Γ ⊢ M : \langle \mathsf{in} : \texttt{T(TRACE}\ C\texttt{)}, \mathsf{out} : \texttt{Trace}\ A \rangle  }
    { Γ ⊢ \texttt{For}\ C\ M : \texttt{Trace}\ A }

    \infer
    { Γ ⊢ M : \langle \mathsf{tbl} : \texttt{String}, \mathsf{col} : \texttt{String}, \mathsf{row} : \intty, \mathsf{val} : A \rangle  }
    { Γ ⊢ \texttt{Cell}\ M : \texttt{Trace}\ A }

    \infer
    { Γ ⊢ C : \textit{Type} \\ Γ ⊢ M : \langle \mathsf{l} : \texttt{T(TRACE}\ C\texttt{)}, \mathsf{r} : \texttt{T(TRACE}\ C\texttt{)} \rangle}
    { Γ ⊢ \texttt{OpEq}\ C\ M : \texttt{Trace}\ \texttt{Bool} }

    \infer
    { Γ ⊢ M : \texttt{Trace}\ A \\
      Γ, x_L : A ⊢ M_L : B \\
      Γ, x_I : \langle\mathsf{cond}: \texttt{Trace}\ \texttt{Bool}, \mathsf{then}: \texttt{Trace}\ A\rangle ⊢ M_I : B \\
      Γ, \alpha_F: \textit{Type}, x_F : \langle\mathsf{in}: \texttt{T}(\texttt{TRACE}\ \alpha_F), \mathsf{out}: \texttt{Trace}\ A\rangle ⊢ M_F : B \\
      Γ, x_C : \langle\mathsf{tbl}: \texttt{String}, \mathsf{col}: \texttt{String}, \mathsf{row}: \intty, \mathsf{val} : A\rangle ⊢ M_C : B \\
      Γ, \alpha_E : \textit{Type}, x_E : \langle\mathsf{l}: \texttt{T}(\texttt{TRACE}\ \alpha_E), \mathsf{r}: \texttt{T}(\texttt{TRACE}\ \alpha_E)\rangle ⊢ M_E : B \\
      Γ, x_P : \langle\mathsf{l}: \texttt{Trace}\ \intty, \mathsf{r}: \texttt{Trace}\ \intty\rangle ⊢ M_P : B
    }
    { Γ ⊢ \kw{tracecase}~M~\kw{of}\ (x_L.M_L, x_I.M_I, \alpha_F.x_F.M_F, x_C.M_C, \alpha_E.x_E.M_E, x_P.M_P) : B }
  \end{mathpar}

  \caption{\texttt{Trace} introduction and elimination rules (some \texttt{Lit} cases and \texttt{OpPlus} omitted).}
  \label{fig:trace-intro}
\end{figure*}

The syntax of \TLinks is summarized in Figure~\ref{fig:tlinks-syntax}.
\TLinks is a simplification of the core language for \Links queries
introduced by Lindley and Cheney~\cite{TLDI2012LindleyC}. \Links
employs row typing to typecheck record expressions; \emph{row
  variables} can be used to quantify over parts of record types. The
core \Links calculus of~\cite{TLDI2012LindleyC} also covers ordinary
\Links code and the type-and-effect system used to ensure query
expressions only perform operations that are possible on the database.
We omit these aspects as well as more recent extensions such as
algebraic effects and
handlers~\cite{Hillerstrom:2016:LER:2976022.2976033} and session
types~\cite{lfst}.

In addition to the core query language constructs, \TLinks draws
heavily on the \lamiml calculus \citep{Harper:1995:CPU:199448.199475},
which supports \emph{intensional polymorphism}, that is, the
capability to analyze types at run time ($\kw{typecase}$) and define
types by recursion on the structure of other types
($\kw{Typerec}$). Analogous capabilities are also provided for rows,
similar to the \emph{type-level record computation} used in
\Urweb~\cite{PLDI2010Chlipala}.

We use a single context $\Gamma$ for both type variables $\alpha$ and
term variables $x$. In addition to the usual kinds \textit{Type} and
$\rightarrow$, we have \textit{Row}, the kind of rows.
We distinguish type and row constructors from types and rows (again
following \lamiml).  The
difference is that constructors can be subject to type analysis
(e.g. $\kw{typecase}$), and can contain type-level computation
(e.g. $\kw{Typerec}$), but unlike types, cannot employ polymorphism.  Constructors include base type constructors,
type variables (we write $\rho$ for type variables with kind
\textit{Row}), type-level functions and application, list, record, and
trace type constructors, as well as \kw{Typerec} to analyze type
constructors.  Types do not include any computation, but constructors
can be embedded into types using \texttt{T}($C$).  More often than
not, types and constructors are either equivalent or it is obvious
from the context which we are talking about, so we will write, e.g.,
\lstinline|Bool| to mean either the type, or the constructor
\lstinline|Bool|$^*$.  We write $\coll{A}$ and $\coll{C}$ for list
types and constructors and $\rec{R}$ and $\rec{S}$ for record types
and constructors.

Because type constructors can contain nontrivial computation due to
$\kw{Typerec}$, $\kw{Rmap}$ and type-level lambda-abstraction, \TLinks
employs equivalence judgments for types, rows, and their
constructors. 
The more interesting of
the type-level computation rules are shown in
Figure~\ref{fig:constructor-computation}. 
The full set of equivalence rules and type-level computation rules are
relegated to in the appendix due to space limitations.  We conjecture
that type equivalence and typechecking are decidable for \TLinks
(they are for \lamiml) but this remains to be fully investigated.

Most of the typing rules are standard. The more interesting rules
can be found in Figure~\ref{fig:term-formation}. We require that all tables have an
\lstinline|oid| column and otherwise only contain fields of base
types. We can map a sufficiently polymorphic function over a record
using \kw{rmap}. This is reflected on the type level with the row type
constructor \kw{Rmap}. We can fold a homogeneous record into a single
value using \kw{rfold}. Note that we do not specify the order of
folding, so it is best to use a commutative combining function. The
rule for \kw{typecase} is standard, but the improved rule by
\citet{crary_weirich_morrisett_2002} would work as well.

The most representative introduction and elimination rules for the
\texttt{Trace} type can be found in Figure~\ref{fig:trace-intro}. The
constructors for comprehensions and polymorphic operators carry type
information. This type information is brought back in scope when
analyzing traces using \kw{typecase}: the respective branches bind
both a type and a term variable.

\section{The self-tracing transformation}\label{sec:self-tracing}

\begin{figure}[t]
\small
  \begin{align*}
    \trace{x} &= x \\
    \trace{c} &= \texttt{Lit}\ c \\
    \trace{M + N} &= \texttt{OpPlus}\ \langle \texttt{l} = \trace{M}, \texttt{r} = \trace{N}\rangle \\
    \trace{M == (N: \texttt{T}(C))} &= \texttt{OpEq}\ C\ \langle \texttt{l} = \trace{M}, \texttt{r} = \trace{N}\rangle \\
    \trace{\langle \overline{ l = M } \rangle} &= \langle \overline{l = \trace{M}}\rangle \\
    \trace{M.l} &= \trace{M}.l \\
    \trace{\coll{}} &= \coll{} \\
    \trace{\coll{M}} &= \coll{\trace{M}} \\
    \trace{M \concat N} &= \trace{M} \concat \trace{N} \\[0.5\baselineskip]
    \trace{\kw{table}~n~\langle \overline{l : C} \rangle}
              & = \kw{for}\ (y \leftarrow \kw{table}~n~\langle \overline{l : C} \rangle) \hspace{6em} \\
              & \hspace{-6em} \coll{\rec{\overline{l = \texttt{Cell}\rec{\texttt{tbl}=n,\texttt{col}=l,\texttt{row}=y.\mathtt{oid},\texttt{val}=y.l}}}} \\[0.5\baselineskip]
    \ttrace{\kw{for}~(x \leftarrow M:D)~N}{C} &= \kw{for}~(x \leftarrow \trace{M})\\
              & \hspace{-6em} \mathit{dist}(\texttt{TRACE}\ C, \texttt{For}\ D\ \langle \texttt{in} = x, \texttt{out} = \mathbb{H} \rangle,\trace{N}) \\[0.5\baselineskip]
    \ttrace{\kw{if}\ L\ \kw{then}~M~\kw{else}~N}{C}
              &= \kw{if}~\texttt{value}\ (\mathtt{Trace}\ \mathtt{Bool})\ \trace{L}\\
              & \hspace{-6em} \kw{then}~ \mathit{dist}(\texttt{TRACE}\ C, \texttt{If}\langle \texttt{cond} = \trace{L}, \texttt{out} = \mathbb{H} \rangle, \trace{M}) \\
              & \hspace{-6em} \kw{else}~ \mathit{dist}(\texttt{TRACE}\ C, \texttt{If}\langle \texttt{cond} = \trace{L}, \texttt{out} = \mathbb{H} \rangle, \trace{N}) \\
  \end{align*}
  \begin{align*}
    \mathit{dist}(\langle \overline{l:C} \rangle, k, r) &=\langle \overline{ l = \mathit{dist}(C,k,r.l) }\rangle \\
    \mathit{dist}(\coll{C}, k, l) &= \kw{for}\ (x \leftarrow l)\ \coll{\mathit{dist}(C, k, x)}\\
    \mathit{dist}(\texttt{Trace}\ C, k, t) &= k[\mathbb{H} \coloneqq t]
  \end{align*}
  \caption{The self-tracing transformation.}
  \label{fig:self-tracing}
\end{figure}

The self-tracing transformation turns a \emph{normalized} query
expression into an expression that produces a trace of its own
execution. As seen in Figure~\ref{fig:self-tracing}, most cases are
straightforward. Variables inside a self-tracing query refer to their
subtrace directly. Tables are the only source of \texttt{Cell} trace
constructors. Comprehensions and conditionals need to distribute a
trace constructor over a subtrace of any shape including lists and
record types. We accomplish this with the meta-level helper function
$\mathit{dist}$. It takes a type, an expression with a hole
$\mathbb{H}$ in it, and a value of the given type and traverses lists
and records until it reaches the leaves and wraps the expression
with the hole around them. Alternatively, we could have written
\textit{dist} as a \Links function with the type
\begin{lstlisting}
dist: $\forall$a. ($\forall$b. Trace b -> Trace b) -> TRACE a -> TRACE a
\end{lstlisting}
but using it requires a lot of
boilerplate code for handling impossible cases, so we
prefer the definition in Figure~\ref{fig:self-tracing}.

With these definitions in hand, we check that the self-tracing
transformation preserves well-formedness.  Note that the type-level
function \texttt{TRACE} is needed to state these properties.  Proof
details are in the appendix.
\begin{lemma}\label{lem:dist-type-correctness}
  For all types $C$ that can appear in query types (base types, list types, closed record types), all expressions $k$ with a hole $\mathbb{H}$ that have type $\texttt{Trace}\ D$ assuming the hole $\mathbb{H}$ has type $\texttt{Trace}\ D$,
  and all expressions $M$ of type $\mathtt{TRACE}\ C$,
$\mathit{dist}(\texttt{TRACE}\ C, k, M)$ has type $\texttt{TRACE}\ C$.
\end{lemma}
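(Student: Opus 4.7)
The plan is to proceed by induction on the structure of the query type $C$, exploiting the fact that the three clauses of $\mathit{dist}$ (record, list, trace) are in one-to-one correspondence with the shapes that $\texttt{TRACE}\ C$ can take after reducing $\kw{Typerec}$ and $\kw{Rmap}$ on a query type. In each case, the essential step is first to rewrite $\texttt{TRACE}\ C$ using the constructor-computation rules of Figure~\ref{fig:constructor-computation} into the shape demanded by the matching clause of $\mathit{dist}$, and then either invoke the assumption on $k$ (base case) or the induction hypothesis (structural cases).

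Concretely, when $C$ is a base type such as $\intty$, $\texttt{TRACE}\ C$ reduces to $\texttt{Trace}\ C$, the third clause of $\mathit{dist}$ applies, and $\mathit{dist}(\texttt{Trace}\ C, k, M)$ unfolds to $k[\mathbb{H} \coloneqq M]$; instantiating the hypothesis on $k$ with $D := C$ gives the required type $\texttt{Trace}\ C = \texttt{TRACE}\ C$. When $C = \coll{C'}$, the $\kw{Typerec}$ rule for lists reduces $\texttt{TRACE}\ C$ to $\coll{\texttt{TRACE}\ C'}$, so $M$ is typed as a list, the second clause of $\mathit{dist}$ applies, and each recursive call has type $\texttt{TRACE}\ C'$ by the induction hypothesis, giving a term of type $\coll{\texttt{TRACE}\ C'}$ overall. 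When $C = \rec{\overline{l : C_l}}$, the $\kw{Typerec}$ rule for records combined with the computation of $\kw{Rmap}\ (\lambda\alpha.\kw{Typerec}~\alpha~\dots)\ S$ distributes $\texttt{TRACE}$ across the fields, producing $\rec{\overline{l : \texttt{TRACE}\ C_l}}$; the first clause of $\mathit{dist}$ then assembles a record whose fields are typed by the induction hypothesis on each $C_l$.

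The main obstacle will be bookkeeping the interaction between the object-level typing derivation and these type-level reductions: every recursive invocation of $\mathit{dist}$ in the record and list cases requires projecting or iterating over $M$, which only typechecks once $\texttt{TRACE}\ C$ has been shown equivalent to $\coll{\texttt{TRACE}\ C'}$ or $\rec{\overline{l : \texttt{TRACE}\ C_l}}$ via the constructor-equivalence judgment, and then applied using the conversion rule of Figure~\ref{fig:term-formation}. This is precisely why the lemma restricts $C$ to types appearing in query types: for such $C$, the outermost $\kw{Typerec}$ in $\texttt{TRACE}\ C$ is guaranteed to reduce fully, so no residual neutral $\kw{Typerec}$ obstructs matching a clause of $\mathit{dist}$. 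Once the equivalence step is discharged in each case, the rest is a straightforward structural argument and the induction closes.
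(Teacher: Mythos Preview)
Your proposal is correct and follows essentially the same approach as the paper: induction on the query type $C$ with three cases (base, list, record), each mirroring a clause of $\mathit{dist}$, the base case discharged by the hypothesis on $k$ and the structural cases by the induction hypothesis applied to the components. Your discussion of the constructor-computation steps needed to reduce $\texttt{TRACE}\ C$ into the shape matched by each clause, and of the role of the conversion rule, is more explicit than what the paper writes out, but the underlying argument is the same.
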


\begin{theorem}\label{thm:trace-type-correctness-2}
  If $\Gamma \vdash M : A$ then for all $C$,
  if $\Gamma \vdash A = \mathtt{T}(C)$
  then $\trace{\Gamma} \vdash \trace{M} : \mathtt{T}(\mathtt{TRACE}\ C)$,
  where $\Gamma$ is a context that maps all term variables to closed records with fields of base type
  and $M$ is a plain \Links query term in normal form.
\end{theorem}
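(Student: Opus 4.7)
The plan is to proceed by structural induction on the normal-form query term $M$, extending the translation pointwise to contexts so that $\trace{\Gamma}$ sends each binding $x:\texttt{T}(D)$ to $x:\texttt{T}(\texttt{TRACE}\ D)$. This is well-defined because the hypothesis restricts $\Gamma$ to map variables to closed records whose fields have base type, and on such constructors \texttt{TRACE} computes to a closed constructor by the rules in Figure~\ref{fig:constructor-computation}. Because $M$ is a normal-form \Links query, the cases are precisely those listed in Figure~\ref{fig:self-tracing}; crucially, $\lambda$-abstraction and application need not be handled, which would otherwise be a problem since \texttt{TRACE} is not defined on function constructors.

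The cases for variables, constants, arithmetic and equality operators, record construction, projection, and the basic collection formers $[\,]$, $[M]$, and $M \concat N$ all follow by combining the inductive hypotheses componentwise with a one-step unfolding of \texttt{TRACE}; for instance, $\trace{M + N} = \texttt{OpPlus}\ \rec{\texttt{l} = \trace{M}, \texttt{r} = \trace{N}}$ has type $\texttt{T}(\texttt{Trace}^*\ \inttyc)$, which is equivalent to $\texttt{T}(\texttt{TRACE}\ \inttyc)$ by the base-type clause of \texttt{TRACE}. The \kw{table} case amounts to checking that wrapping each field of each row in a $\texttt{Cell}$ constructor produces the row type $\texttt{T}(\texttt{Record}^*\ (\kw{Rmap}\ (\lambda\alpha.\texttt{Trace}^*\ \alpha)\ R))$, which equals $\texttt{T}(\texttt{TRACE}\ \coll{\rec{R}})$ via the record and list clauses of \texttt{TRACE} together with the $\kw{Rmap}$ and $\kw{Typerec}$ equations.

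The substantive cases are the conditional and the comprehension, both of which rely on Lemma~\ref{lem:dist-type-correctness}. For $\kw{for}\ (x \leftarrow M:D)\ N$ of type $\coll{C}$, the inductive hypothesis on $M$ gives $\trace{M}:\texttt{T}(\texttt{TRACE}\ \coll{D})$, and the inductive hypothesis on $N$ in the extended context $\Gamma, x:\texttt{T}(\texttt{TRACE}\ D)$ gives $\trace{N}:\texttt{T}(\texttt{TRACE}\ C)$. Applying the lemma with continuation $\texttt{For}\ D\ \rec{\texttt{in} = x, \texttt{out} = \mathbb{H}}$, which is of type $\texttt{Trace}\ A$ under a hole of type $\texttt{Trace}\ A$ for arbitrary $A$, yields a body of type $\texttt{T}(\texttt{TRACE}\ C)$, and the outer comprehension then has type $\texttt{T}(\texttt{TRACE}\ \coll{C})$ as required. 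The conditional case is analogous, with the additional minor bookkeeping that $\texttt{value}$ of a $\texttt{Trace}\ \texttt{Bool}$ yields a $\texttt{Bool}$ scrutinee suitable for $\kw{if}$.

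The main obstacle I expect is the routine but unavoidable shuffling between the syntactic shape of types produced by $\trace{\cdot}$ and the shape demanded by the \texttt{Trace} introduction rules of Figure~\ref{fig:trace-intro}: closing these gaps requires repeated appeal to the type-equivalence judgement, which in turn depends on confluence and preservation properties of constructor computation (including $\kw{Rmap}$ and $\kw{Typerec}$) that the paper only conjectures. A clean proof will therefore factor through an auxiliary lemma stating that $\texttt{TRACE}$ commutes with the record and list constructors up to $\equiv$, letting the main induction treat all reshaping as a single appeal to the type-conversion rule rather than unfolding equivalences inline in every case.
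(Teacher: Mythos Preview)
Your proposal is correct and follows essentially the same approach as the paper: both proceed by induction over the normal-form query term (the paper phrases this as induction on the typing derivation, but the cases coincide), handle the straightforward syntactic constructs by unfolding \texttt{TRACE} and applying the inductive hypothesis componentwise, and discharge the two substantive cases (\kw{for} and \kw{if}) via Lemma~\ref{lem:dist-type-correctness}. Your ``minor bookkeeping'' for the conditional scrutinee is exactly what the paper handles by appealing to Lemma~\ref{lem:valuetf-inverse} (that \texttt{VALUE}\,(\texttt{TRACE}\ \texttt{Bool}) reduces to \texttt{Bool}), and your proposed auxiliary lemma about \texttt{TRACE} commuting with record and list constructors is a reasonable refactoring of the inline type-equality reasoning the paper performs in each case.
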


\section{Normalization}\label{sec:normalization}

Our ultimate goal is to translate \TLinks queries --- including
provenance extraction by trace analysis --- to \SQL. We know from
previous work \citep{JCSS1996Wong,DBPL2009Cooper,TLDI2012LindleyC,SIGMOD2014CheneyLW} that \NRC
expressions, extended with sum types and higher-order functions, can be translated
to \SQL as long as their return type is nested relational. In this
section, we extend query normalization to deal with the new features
for tracing and trace analysis.

We show progress and preservation which imply the existence of a
partial normalization function. Unlike standard progress and
preservation, we do not normalize to values, but to a normal form that
includes table references and residual query code which is ultimately
translated to \SQL queries.

We cannot show strong normalization, since we require recursive
functions to be able to analyze arbitrary queries.

\subsection{Reduction rules}\label{sec:reduction-rules}

\begin{figure}
\small
  \begin{align*}
    \kw{fix}~f.M &\leadsto M[f \coloneqq \kw{fix}~f.M] \\
    (\Lambda \alpha.M)\ C &\leadsto M[\alpha \coloneqq C] \\
    \kw{rmap}^{(\overline{l_i:C_i})}\ M\ N &\leadsto \langle \overline{l_i = (M\ C_i)\ N.l_i } \rangle \\
    \kw{rfold}^{(\overline{l_i: C_i})}\ L\ M\ N & \leadsto L\ N.l_1\ (L\ N.l_2 \hdots (L\ N.l_n\ M) \hdots)
  \end{align*}\vspace*{-1.5\baselineskip}%
  \begin{align*}
    \kw{tracecase}~\texttt{Lit}~M~\kw{of}\ (x.M_L,\hdots) &\leadsto M_L[x \coloneqq M] \\
    \kw{tracecase}~\texttt{For}~C~M~\kw{of}\ (\hdots, \alpha x.M_F, \hdots) & \leadsto M_F[\alpha \coloneqq C, x \coloneqq M] \\
    \kw{typecase}~\texttt{Bool}~\kw{of}\ (M_B,\hdots)
                     &\leadsto M_B  \\
    \kw{typecase}~\coll{C}~\kw{of}\ (\hdots, \beta.M_L, \hdots)
                     &\leadsto M_L[\beta \coloneqq C] \\
    \kw{typecase}~\rec{S}~\kw{of}\ (\hdots, \rho.M_R, \hdots)
                     &\leadsto M_R[\rho \coloneqq S]
  \end{align*}

  \caption{Normalization $\beta$-rules.}
  \label{fig:normalization-beta}
\end{figure}

\TLinks uses the same general approach to normalization as plain
\Links \citep{SIGMOD2014CheneyLW}. We define a relation $\leadsto$
between terms. Most rules are standard.
Figure~\ref{fig:normalization-beta} shows the $\beta$-rules for the
new \TLinks features. Since constructors can appear in terms, e.g.,
typecase, we also need to normalize constructors. We use the same
rules as for type-level computation
(Figure~\ref{fig:constructor-computation}). We also need to add
commuting conversions to, e.g., lift if-then-else out of tracecase, to
expose additional $\beta$ reductions. The full rules can
be found in the appendix in Figures
\ref{fig:app:constructor-computation},
\ref{fig:app:normalization-beta}, \ref{fig:app:normalization-cc}, 
\ref{fig:app:normalization-rules-congruence}.

Unlike plain \Links, we allow recursion in queries and unroll
fixpoints as necessary. It is up to the programmer to ensure that
their functions terminate.
Record map and record fold inspect their row
constructor argument only. Record map evaluates to a new record where we
apply the given function to each field's type and value. Record fold
applies the given function to the accumulator and every record field's
value successively. We evaluate tracecase and typecase by
reducing to the appropriate branch and substituting terms and
constructors for term and type variables.


\subsection{Preservation}\label{sec:preservation}

To prove preservation we will need several substitution lemmas.
Substitution of variables in terms, type variables in types, and type
variables in terms are standard for \lamiml~\citep{morrisett1995compiling,crary_weirich_morrisett_2002}. We
additionally need variants for row constructors: substitution of row
variables in types and substitution of row variables in terms. We also
need standard context manipulation lemmas for weakening and swapping
the order of unrelated variables. For details, see
Appendix~\ref{sec:app:additional}.

Now we can prove that the reduction relation $\leadsto$ preserves the kinds of constructors and the types of terms.

\begin{lemma}\label{lem:preservation-constructor}
  For all type constructors $C$ and row constructors $S$, contexts $\Gamma$, and kinds $K$, if $\Gamma \vdash C : K$ and $C \leadsto C'$, then $\Gamma \vdash C' : K$ and if $\Gamma \vdash S : K$ and $S \leadsto S'$, then $\Gamma \vdash S' : K$.
\end{lemma}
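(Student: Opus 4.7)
The plan is to prove both parts simultaneously by induction on the derivation of the reduction step, jointly on $C \leadsto C'$ and $S \leadsto S'$, since the two relations are mutually defined via congruence rules (for example, reduction inside $\kw{Rmap}\ C\ S$ may descend into either $C$ or the row $S$). The cases split naturally into congruence cases and axiom cases, the latter being the reductions displayed in Figure~\ref{fig:constructor-computation} together with the $\kw{Typerec}$ axioms for the base type constructors and the remaining congruence rules listed in the appendix.

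For congruence cases the argument is routine: I would invert the kinding derivation of the outer constructor to obtain kinding derivations for each immediate subterm, apply the induction hypothesis to the one containing the redex, and reassemble the rule. The only piece of bookkeeping is that reductions under a binder (as in $\lambda\alpha{:}K.C$) require the induction hypothesis in an extended context, which is handled by the usual weakening lemma for constructor kinding.

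For the axiom cases, the main tool is the substitution lemma for constructor variables in constructors: if $\Gamma, \alpha:K \vdash C : K'$ and $\Gamma \vdash D : K$, then $\Gamma \vdash C[\alpha \coloneqq D] : K'$, together with its analogue for row variables and for substitution into row constructors. Given these, the $\beta$-rule $(\lambda\alpha{:}K.C)\ D \leadsto C[\alpha \coloneqq D]$ is immediate. For the two $\kw{Rmap}$ reductions, inverting $\kw{Rmap}\ C\ S$ supplies a kind of the shape $\mathit{Type} \to \mathit{Type}$ for $C$ and the row kind for $S$, and the reducts are then kinded using the row-extension and application rules. For each $\kw{Typerec}$ axiom, inverting the $\kw{Typerec}$ rule supplies the kinds of all six branches, and we rebuild the right-hand side by applications, or, in the record case, by an application followed by a fresh type-level abstraction wrapped in an $\kw{Rmap}$.

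The main obstacle is the $\kw{Typerec}$-on-record case, whose right-hand side uses a fresh abstraction $\lambda\alpha.\kw{Typerec}\ \alpha\ (\ldots, C_R, \ldots)$ applied through $\kw{Rmap}$ to $S$. I have to verify that this abstraction has exactly the kind expected by $\kw{Rmap}$, which requires threading the invariant that the record branch $C_R$ is kinded to consume a row whose entries already live at the result kind of the enclosing $\kw{Typerec}$. This is the one place where simple inversion does not suffice: I must re-derive the kind of the inner $\kw{Typerec}$ under the new binding for $\alpha$, and then invoke the row-variable substitution/weakening lemmas to justify passing $S$ into $\kw{Rmap}$ with the correctly-shaped functional argument. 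Once that invariant is spelled out cleanly, all other axiom cases fall out as direct applications of the same substitution machinery.
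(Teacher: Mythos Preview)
Your proposal is correct and follows essentially the same strategy as the paper's proof: case analysis on the reduction rules, inversion of the kinding judgment to recover subterm kinds, and the constructor-substitution lemma for the $\beta$-rule. The paper nominally inducts on the kinding derivation rather than on the reduction derivation, but the resulting case analysis and auxiliary lemmas are identical, and your handling of the $\kw{Typerec}$-on-records case (re-kinding the inner $\kw{Typerec}$ under a fresh $\alpha$ via weakening) is in fact more explicit than the paper's one-line assertion that the abstraction has the required kind.
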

The proof is straightforward by induction on the kinding derivation. For details, see Section~\ref{sec:prf:preservation-constructor}.

\begin{lemma}[Preservation]\label{lem:preservation}
  For all terms $M$ and $M'$, contexts $\Gamma$, and types $A$, if $\Gamma \vdash M : A$ and $M \leadsto M'$, then $\Gamma \vdash M' : A$.
\end{lemma}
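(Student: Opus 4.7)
The plan is to prove Lemma~\ref{lem:preservation} by induction on the derivation of $M \leadsto M'$, with a case for each axiomatic rule (the $\beta$-rules shown in Figure~\ref{fig:normalization-beta} plus those in the appendix), each commuting conversion, and each congruence rule. The template for every case is the same: invert $\Gamma \vdash M : A$ using the typing rules (possibly through the type-equivalence rule at the top of Figure~\ref{fig:term-formation}) to extract typing premises for the subterms, then assemble $\Gamma \vdash M' : A$ from (a) the relevant substitution lemma, (b) Lemma~\ref{lem:preservation-constructor} to track kinds of any constructors now embedded in types, and (c) a final appeal to type equivalence to reconcile the residual type with $A$.

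For the standard $\beta$-cases ($\lambda$-application, record projection, $\kw{fix}$, $\Lambda$-application), I would apply the term substitution lemma and, for $\Lambda$-application, the substitution of type constructors into terms. The $\kw{tracecase}$ reductions use inversion on the appropriate clause of Figure~\ref{fig:trace-intro}: for example, for $\kw{tracecase}~(\texttt{For}~C~M)~\kw{of}~(\dots,\alpha_F.x_F.M_F,\dots) \leadsto M_F[\alpha_F \coloneqq C, x_F \coloneqq M]$, inversion on the $\texttt{For}$ rule supplies $\Gamma \vdash C : \textit{Type}$ and $\Gamma \vdash M : \langle \texttt{in}:\texttt{T}(\texttt{TRACE}~C), \texttt{out}:\texttt{Trace}~A\rangle$, while inversion on the $\kw{tracecase}$ rule gives $\Gamma,\alpha_F:\textit{Type},x_F:\langle\dots\rangle \vdash M_F : B$, and the result follows by one constructor-into-term substitution followed by a term substitution. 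The $\kw{typecase}$ cases work analogously: after inversion, one has to match the annotated result type $B[\alpha \coloneqq \coll{C}]$ with $B[\alpha \coloneqq \texttt{List}^*~\beta][\beta \coloneqq C]$, which is a routine composition-of-substitutions check.

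The $\kw{rmap}$ and $\kw{rfold}$ cases are the main technical obstacle and where constructor computation does real work. When $\kw{rmap}^{(\overline{l_i:C_i})}~M~N$ reduces to $\langle \overline{l_i = (M~C_i)~N.l_i}\rangle$, the typing rule of Figure~\ref{fig:term-formation} gives $M : \forall\alpha.\texttt{T}(\alpha) \rightarrow \texttt{T}(C~\alpha)$ and $N : \texttt{T}(\texttt{Record}^*~(\overline{l_i:C_i}))$, so by instantiation and projection each $(M~C_i)~N.l_i$ has type $\texttt{T}(C~C_i)$. The expected type of the reduct is $\texttt{T}(\texttt{Record}^*~(\kw{Rmap}~C~(\overline{l_i:C_i})))$, and the $\kw{Rmap}$ computation rules of Figure~\ref{fig:constructor-computation} unfold this row constructor to $\overline{l_i : C~C_i}$; Lemma~\ref{lem:preservation-constructor} preserves its kind, and the type-equivalence rule closes the gap. $\kw{rfold}$ works the same way, combining this unfolding with repeated $\rightarrow$-elimination and relying on the programmer's discipline that the combining function is commutative so the unspecified folding order is irrelevant.

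The hardest part overall is not any single case but the bookkeeping around type equivalence induced by constructor computation: reducts of $\kw{rmap}$, $\kw{rfold}$, $\kw{typecase}$, and terms whose types contain $\kw{Typerec}$ or $\beta$-redexes at the constructor level do not syntactically match $A$, and each case must silently invoke equivalence to reconnect. Commuting conversions (e.g.\ lifting $\kw{if}$ out of $\kw{tracecase}$) require inversion through several layers of typing rules and the strongest use of the weakening/swapping context lemmas cited in Section~\ref{sec:preservation}. The congruence cases are routine: invert the outer typing rule, apply the induction hypothesis to the reducing subterm, and reassemble. With the substitution lemmas from Appendix~\ref{sec:app:additional} and Lemma~\ref{lem:preservation-constructor} in hand, the proof is lengthy but fully mechanical.
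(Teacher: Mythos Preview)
Your proposal is essentially correct, but you organize the proof by induction on the reduction derivation $M \leadsto M'$, whereas the paper proceeds by induction on the typing derivation $\Gamma \vdash M : A$. The case content is nearly identical (both analyses single out $\kw{rmap}$/$\kw{rfold}$ as the places where constructor computation does the real work, both handle $\kw{typecase}$ via the substitution-composition argument, and both need weakening/context-swap for the $\kw{for}$--$\kw{for}$ commuting conversion). The trade-off is in how the type-conversion rule is absorbed: in the paper's organization, conversion is just one more case in the induction and is discharged immediately by the IH and symmetry of $=$; in your organization, every inversion step must silently pass through an unknown stack of conversion applications, which is exactly what you flag with ``possibly through the type-equivalence rule.'' To make that rigorous you would need explicit generation (inversion) lemmas of the form ``if $\Gamma \vdash \lambda x{:}A'.M : A$ then $A \equiv A' \to B$ and $\Gamma,x{:}A' \vdash M : B$,'' one per term former. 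These are routine but should be stated; the paper avoids them by its choice of induction.

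One small slip: your remark that the $\kw{rfold}$ case ``rel[ies] on the programmer's discipline that the combining function is commutative'' is a semantic observation, not a typing one. Type preservation for $\kw{rfold}$ holds for any fixed fold order, since every $N.l_i$ has type $\texttt{T}(C)$ and $L : \texttt{T}(C) \to \texttt{T}(C) \to \texttt{T}(C)$; commutativity is irrelevant here.
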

The proof is by induction on the typing derivation
$\Gamma \vdash M : A$. The cases for record map and record fold
require type equivalence under type-level computation. The cases for
typecase require the more exotic substitution lemmas from before. See
Section~\ref{sec:prf:preservation} for the proof.

\subsection{Normal form}\label{sec:normal-form}

The goal of normalization is to perform partial evaluation of those parts of the program that are independent of database values.
In particular, we look to eliminate all language constructs which we cannot translate to \SQL.
The \TLinks normal form (Figure~\ref{fig:tlinks-normal-form}) describes what terms look like after exhaustive application of the rewriting rules.
It appears we were not successful, seeing that record map and fold, tracecase, and typecase are all still present.
However, the normal form grammar splits constructors into normal constructors $C$ and neutral constructors $E$, and row constructors into normal row constructors $S$ and neutral row constructors $U$.

\begin{figure*}
  \[
    \begin{array}{lrcl}
      \text{Normal constructors} & C & \Coloneqq & E \mid \texttt{Bool}^* \mid \inttyc \mid \texttt{String}^* \mid \lambda \alpha: K.C \mid \texttt{List}^*\ C \mid \texttt{Record}^*\ S \mid \texttt{Trace}^*\ C \\
      \text{Neutral constructors} & E & \Coloneqq & \alpha \mid E\ C \mid \kw{Typerec}~E~ (C_B, C_I, C_S, C_L, C_R, C_T) \\
      \text{Normal row constructors} & S & \Coloneqq & U \mid \cdot \mid l : C; S \\
      \text{Neutral row constructors} & U & \Coloneqq & \rho \mid l : C; U \mid \kw{Rmap}\ C\ U \\

      \text{Normal terms} & M, N & \Coloneqq & F \mid c \mid \lambda x:A.M \mid \Lambda
                         \alpha:K.M  \mid
                         \kw{if}~H~\kw{then}~M~\kw{else}~N \mid M + N \mid\rec{} \mid \rec{l = M; N} \\
                            & & \mid & \coll{} \mid \coll{M} \mid M \concat N \mid \kw{for}~(x \leftarrow T)~N \mid \kw{table}\ n\ \rec{R} \\
                            & & \mid & \texttt{Lit}\ M \mid \texttt{If}\ M \mid \texttt{For}\ C\ M \mid \texttt{Cell}\ M \mid \texttt{OpEq}\ C\ M \mid \texttt{OpPlus}\ M \\
      \text{Neutral terms} & F & \Coloneqq & x \mid P.l \mid F\ M \mid F\ C \mid \kw{rfold}^U~L~M~N \mid \kw{rmap}^U~M~N \\
                            & & \mid & \kw{tracecase}~F~\kw{of}~(x.M_L, x.M_I, \alpha.x.M_F, x.M_C, \alpha.x.M_E, x.M_P) \\
                            & & \mid & \kw{typecase}~E~\kw{of}~(M_B, M_I, M_S, \beta.M_L, \rho.M_R, \beta.M_T) \\
      \text{Neutral conditional} & H & \Coloneqq & F \mid M == N \\
      \text{Neutral projection} & P & \Coloneqq & F \mid \kw{rmap}^U~M~N \\
      \text{Neutral table} & T & \Coloneqq & F \mid \kw{table}~n~\langle R \rangle
    \end{array}
  \]
  \caption{\TLinks normal form.}
  \label{fig:tlinks-normal-form}
\end{figure*}

\begin{remark}\label{remark:constructors-free-type-variable}
  Neutral constructors $E$ and neutral row constructors $U$ always contain at least one free type variable $\alpha$ or $\rho$ and those are the only base cases for their respective sort.
\end{remark}
We will later use the above to show that some term forms are impossible within queries.
Queries do not contain free type variables, so $E$ and $U$ collapse into nothing, and terms built from $E$ and $U$ (like \lstinline|rmap|) cannot appear.

Similarly, terms are split into normal terms $M$ and neutral terms $F$.
The latter are stuck on a free variable $x$, a stuck constructor $E$, or a stuck row constructor $U$.
We will later argue that inside a query all variables are references to tables and therefore restricted to be base types or records with fields of base types.
This means they cannot be functions or trace constructors and therefore record map, record fold, tracecase, and typecase do not actually appear in normal form queries.


\subsection{Progress}\label{sec:progress}

Progress states that well typed terms either already are in the normal
form described in the previous section or that there is a further
reduction step possible. Reduction preserves typing, so we can keep
reducing until we reach normal form and thus obtain a partial
normalization function.

Like preservation, progress is split into two lemmas: one for constructors and row constructors and one for terms.

\begin{lemma}\label{lem:progress-constructor}
  All well-kinded type constructors $C$ and row constructors $S$, are either in normal form, or there is a type constructor $C'$ with $C \leadsto C'$, or row constructor $S'$ with $S \leadsto S'$.
\end{lemma}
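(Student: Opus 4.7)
The plan is to proceed by simultaneous structural induction on the kinding derivations $\Gamma \vdash C : K$ and $\Gamma \vdash S : K$, case-splitting on the final rule used. The direct cases are immediate: the base constructors $\texttt{Bool}^*$, $\inttyc$, $\texttt{String}^*$ are normal; a type variable $\alpha$ is a neutral $E$ and hence a normal $C$; a row variable $\rho$ is a neutral $U$ and hence a normal $S$; the empty row $\cdot$ is normal.

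For the non-eliminating compound forms --- $\lambda \alpha{:}K.C$, $\texttt{List}^*\ C$, $\texttt{Record}^*\ S$, $\texttt{Trace}^*\ C$, and row extension $l : C; S$ --- I apply the induction hypothesis to each subterm. Either some subterm reduces, and I conclude by a congruence rule (as collected in Figure~\ref{fig:app:normalization-rules-congruence} of the appendix), or every subterm is already in normal form, in which case the grammar of Figure~\ref{fig:tlinks-normal-form} places the whole expression in normal form. In particular, an extension $l:C;S$ with $C,S$ both normal is a normal extension (when $S$ is $\cdot$ or another extension) or itself a neutral $U$ of the shape $l:C;U$ (when $S$ is neutral).

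The substantive cases are the eliminators: application $C\ D$, $\kw{Typerec}~C~\vec{D}$, and $\kw{Rmap}\ C\ S$. For each I apply the IH to the principal argument first; if it reduces, a congruence rule on the eliminator suffices. If it is already normal, I perform a canonical-forms analysis driven by the kind constraint the eliminator imposes. For $C\ D$, well-kindedness of $C$ at some $K_1 \to K_2$ rules out the \textit{Type}-kinded normal shapes (base constructors, $\texttt{List}^*$, $\texttt{Record}^*$, $\texttt{Trace}^*$ applied), leaving either $C = \lambda \alpha{:}K_1.C'$, where the $\beta$-rule $(\lambda\alpha{:}K.C)\ D \leadsto C[\alpha \coloneqq D]$ fires, or $C$ is neutral $E$, in which case $E\ D$ is itself neutral after using the IH on $D$. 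For $\kw{Typerec}~C~\vec{D}$, the principal has kind \textit{Type}, so its normal forms are the base constructors, $\texttt{List}^*\ D$, $\texttt{Record}^*\ T$, $\texttt{Trace}^*\ D$ (each matched by a reduction rule in Figure~\ref{fig:constructor-computation}), or a neutral $E$, in which case the whole expression is neutral and hence normal. The case for $\kw{Rmap}\ C\ S$ is analogous at kind \textit{Row}: either $S$ is $\cdot$ or $l:D;S'$ (both covered by an $\kw{Rmap}$ reduction), or $S$ is a neutral $U$, yielding the neutral normal form $\kw{Rmap}\ C\ U$.

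The main obstacle I expect is making the canonical-forms reasoning rigorous. Concretely I need inversion lemmas showing that the normal-constructor grammar excludes, for example, a $\lambda$-abstraction at kind \textit{Type}, so that it cannot appear as a $\kw{Typerec}$ scrutinee; this relies on Lemma~\ref{lem:preservation-constructor} together with inversion on the kinding rules, and on Remark~\ref{remark:constructors-free-type-variable} to know precisely what neutral shapes look like. A subsidiary bookkeeping point is verifying that each non-eliminating form has a matching congruence rule in $\leadsto$, since the statement tacitly uses the relation of Section~\ref{sec:reduction-rules} extended to include the congruence and constructor-computation rules from the appendix; any omission there would appear here as a spurious stuck term. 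Beyond these points the argument is a routine syntactic induction.
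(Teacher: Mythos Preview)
Your proposal is correct and follows essentially the same structure as the paper's proof: induction on the kinding derivation, with the eliminator cases (application, $\kw{Typerec}$, $\kw{Rmap}$) handled by a canonical-forms analysis on the normalized principal argument. Two minor points: the congruence rules you need are the constructor-level ones in Figure~\ref{fig:app:constructor-computation}, not the term-level frames of Figure~\ref{fig:app:normalization-rules-congruence}; and in the $\kw{Typerec}$ case with neutral scrutinee $E$, you must also invoke the IH on the branches $C_B,\ldots,C_T$ (reducing by congruence if any is not yet normal) before concluding that $\kw{Typerec}~E~(\ldots)$ is a neutral normal form, since the grammar requires the branches to be normal constructors.
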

The proof is straightforward by induction on the kinding derivations of $C$ and $S$ (see Section~\ref{sec:prf:progress-constructor}).

\begin{lemma}[Progress]\label{lem:progress}
  For all well-typed terms $M$, either $M$ is in normal form, or there is a term $M'$ with $M \leadsto M'$.
\end{lemma}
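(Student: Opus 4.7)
The plan is to proceed by structural induction on the typing derivation $\Gamma \vdash M : A$, invoking Lemma~\ref{lem:progress-constructor} whenever the conclusion depends on an embedded (row) constructor, and relying on Lemma~\ref{lem:preservation-constructor} together with Lemma~\ref{lem:preservation} to keep the whole term well-typed when we reduce inside a subterm. For each typing rule, the goal is to show that $M$ either matches one of the productions of the normal-form grammar in Figure~\ref{fig:tlinks-normal-form}, or that some rewrite rule applies at the root or inside a subterm: a $\beta$-rule from Figure~\ref{fig:normalization-beta} (or one of the constructor/row-constructor rules), a congruence, or a commuting conversion.

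The base and introduction cases are immediate. Variables are neutral terms ($x$ is an $F$); constants, the empty record and list, and tables are normal by construction; lambdas, type abstractions, record cons, singleton lists, concatenation, and the trace introduction forms $\texttt{Lit}$, $\texttt{If}$, $\texttt{For}$, $\texttt{Cell}$, $\texttt{OpEq}$, $\texttt{OpPlus}$ become normal once the induction hypothesis is applied to their subterms, and otherwise reduce by congruence. $\kw{fix}$ always unrolls.

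The elimination forms are handled by cases on the normal form of the head. For $M\ N$ and $M\ C$, the IH says the head is either a (type-)lambda (fire the $\beta$-rule), a neutral term $F$ (so $F\ N$ or $F\ C$ is neutral), or reducible (congruence). Projection $M.l$ is analogous, with neutral head class $P$ that additionally includes $\kw{rmap}^U$. For $\kw{if}\ L\ \kw{then}\ M\ \kw{else}\ N$, if $L$ is a boolean literal we reduce, if $L$ is in the neutral conditional class $H$ the term is normal, and otherwise we reduce in $L$. For $\kw{typecase}$ and $\kw{rmap}^S/\kw{rfold}^S$, Lemma~\ref{lem:progress-constructor} on the embedded (row) constructor yields either a concrete shape ($\texttt{Bool}^*$, $\texttt{List}^*\ C$, $\texttt{Record}^*\ S$, etc.) that fires the corresponding $\beta$-rule, or a neutral $E$/$U$, in which case the whole term matches the neutral-term production. $\kw{tracecase}$ is similar, with the scrutinee classified by IH as either a trace constructor, a neutral $F$, or reducible.

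The main obstacle will be the commuting conversions. The normal form is strict about what may appear in eliminator scrutinees ($F$ for $\kw{tracecase}$, $E$ for $\kw{typecase}$, $H$ for $\kw{if}$-conditions, $P$ for projection targets, $T$ for $\kw{for}$-generators, $U$ for $\kw{rmap}/\kw{rfold}$ rows), so any normal-but-non-neutral scrutinee must be eliminated by a commuting conversion rather than simply being classified as normal. The heart of the proof is therefore a coverage argument: for every normal term $M$ whose shape is neither a head-matching $\beta$-redex nor one of the admitted neutral forms for the relevant eliminator, there must be a commuting conversion that fires. For the constructs inherited from \NRC this repeats the coverage check of \citet{TLDI2012LindleyC}; the new trace introduction and elimination forms enlarge the case analysis and motivate the additional commuting conversions in the appendix, and checking exhaustiveness against the extended normal-form grammar is the primary technical cost of the proof.
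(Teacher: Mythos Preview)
Your proposal is correct and follows the same structure as the paper's proof: induction on the typing derivation, constructor-level progress (Lemma~\ref{lem:progress-constructor}) for embedded (row) constructors, and a case split on the normal-form shape of each eliminator's head to choose among a $\beta$-rule, a commuting conversion, or the appropriate neutral production. One nit: you need not invoke preservation (Lemmas~\ref{lem:preservation-constructor} and~\ref{lem:preservation}) here---progress only asserts the existence of a reduct, so congruence steps stand on their own without re-establishing well-typedness of the contractum.
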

The proof (see Section~\ref{sec:prf:progress}) is by induction on the
typing derivation of $M$. Most nontrivial cases have three parts:
reduce in subterms via congruence rules; a
$\beta$-rule applies; or a commuting conversion applies.


\subsection{Normal terms with query types are NRC}

\TLinks normal form still includes language constructs such as
typecase, which do not have an obvious \SQL counterpart. In this
section, we will argue that these cannot actually occur in a query.
Queries are closed expressions with nested relational type. Inside a
query, all variables refer to tables. This is captured in the
following definition of query contexts.

\begin{definition}[Query context]\label{def:query-context}
~  \begin{itemize}
  \item The empty context $\cdot$ is a query context.
  \item The context $\Gamma, x:\langle \overline{l_i : A_i} \rangle$ is a query context, if $\Gamma$ is a query context, $x$ is not bound in $\Gamma$ already, and each type $A_i$ is a base type.
  \end{itemize}
\end{definition}

The \TLinks normal form includes neutral terms $F$, which include
record map and fold, tracecase, and typecase. With the following
Lemma, we will further restrict which terms $F$ can appear in queries
to just variables $x$ and projections $x.l$.

\begin{lemma}\label{lem:f-collapses}
  A term in neutral form $F$ that is well-typed in a query context $\Gamma$, is of the form $x$ or $x.l$.
\end{lemma}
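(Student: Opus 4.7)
The plan is to do structural induction on the neutral term $F$, using Remark~\ref{remark:constructors-free-type-variable} to eliminate cases built from neutral (row) constructors, and a type-shape analysis to eliminate cases whose form would require $F$ to have a function, polymorphic, or \texttt{Trace} type — all of which are forbidden for variables in a query context.

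First I observe two invariants of any query context $\Gamma$: (i) $\Gamma$ binds no type variables, so by Remark~\ref{remark:constructors-free-type-variable} no neutral constructor $E$ and no neutral row constructor $U$ is well-kinded in $\Gamma$; (ii) every term variable $x$ has a closed record type $\langle \overline{l_i : A_i}\rangle$ in which each $A_i$ is a base type. Invariant (i) immediately disposes of the syntactic forms $\kw{rmap}^U~M~N$, $\kw{rfold}^U~L~M~N$, and $\kw{typecase}~E~\kw{of}~(\ldots)$, since their typing rules demand a well-kinded $U$ or $E$ in $\Gamma$ but none exists. It also rules out $\kw{rmap}^U$ as the $P$ inside a projection $P.l$.

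For the remaining forms I proceed by induction on $F$. The base case $F = x$ is immediate. For $P.l$, invariant (i) collapses $P$ to some neutral $F'$; by the induction hypothesis $F'$ is $x$ or $x.l'$, but by invariant (ii) $x.l'$ has a base type from which no further projection is typable, so we must be in the case $F = x.l$, as required. For $F'\ M$, the induction hypothesis forces $F'$ to be $x$ or $x.l$, which have a record type and a base type respectively — neither can be a function type — so this case is impossible; the analogous argument using the absence of $\forall$-types for record and base types rules out $F'\ C$. For $\kw{tracecase}~F'~\kw{of}~(\ldots)$, the typing rule requires $F'$ to have type $\texttt{Trace}~A$, but by induction $F'$ is $x$ or $x.l$, whose types are a record or a base type, never $\texttt{Trace}~A$.

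Thus the only surviving syntactic shapes are $x$ and $x.l$, completing the induction. The main subtlety to keep straight is making sure that the type-shape arguments use the stronger invariant (ii) — that fields of the variable's record type are \emph{base} types, not merely not-function types — which is what blocks both nested projection and the $\kw{tracecase}$ case. Remark~\ref{remark:constructors-free-type-variable} is what does the heavy lifting for the type-analysis and record-generic forms, so the lemma is essentially a reconciliation of the ``no free type variables in a query'' principle with the ``variables stand for rows of base-typed columns'' restriction built into Definition~\ref{def:query-context}.
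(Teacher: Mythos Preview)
Your proof is correct and follows essentially the same approach as the paper's: both use Remark~\ref{remark:constructors-free-type-variable} together with the absence of type variables in a query context to rule out the $\kw{rmap}^U$, $\kw{rfold}^U$, and $\kw{typecase}$ forms, and then use the induction hypothesis plus the restricted shape of variable types in $\Gamma$ to eliminate application, type application, $\kw{tracecase}$, and nested projection. The only cosmetic difference is that the paper phrases the induction as being on the typing derivation rather than on the structure of $F$, but the case analysis is identical.
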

\begin{proof}
  By induction on the typing derivation. The term cannot be a record
  fold or typecase, because those necessarily contain a (row) type
  variable (Remark~\ref{remark:constructors-free-type-variable}),
  which is unbound in the query context $\Gamma$
  (Definition~\ref{def:query-context}). It cannot be a term
  application, type application, or tracecase, because the term in
  function position or the scrutinee, by IH, is of the form $x$ or
  $x.l$, both of which are ill-typed given that the query context
  $\Gamma$ does not contain function types, polymorphic types, or
  trace types. Projections $P.l$ are of the form $F.l$ or
  $(\kw{rmap}^U\ M\ N).l$. The former case reduces by IH to $x.l$ or
  $x.l'.l$, the first of which is okay, and the second is ill-typed.
  The latter case is impossible, because $U$ necessarily contains a
  row variable and would therefore be ill-typed. This leaves variables
  $x$ and projections of variables $x.l$.
\end{proof}

\begin{figure}
  \footnotesize
  \[
    \begin{array}{lrcl}
      \text{Types} & A & \Coloneqq & \mathtt{Bool} \mid \mathtt{Int} \mid \mathtt{String} \mid [A] \mid \langle\overline{l:A}\rangle \\
      \text{Terms} & M, N, L & \Coloneqq & c \mid x \mid \langle
                                           \overline{l = M} \rangle
                                           \mid M.l \mid M + N \mid M
                                           == N \\
&&\mid& \kw{if}~L~\kw{then}~M~\kw{else}~N \mid \kw{table}\ n\ \langle \overline{l: A} \rangle\\
                   & & \mid & [] \mid [M] \mid M \concat N \mid \kw{for}~(x \leftarrow N)~M 
    \end{array}
  \]
  \caption{Target normal form for queries: \NRC.}
  \label{fig:normalization-target-normal-form}
\end{figure}

Finally, we can use this to show that query terms in \TLinks normal
form are actually in nested relational calculus already.

\begin{theorem}\label{thm:nf-query-nrc}
  If $M$ is a term in normal form with a nested relational type in a query context $\Gamma$, then $M$ is in the nested relational calculus (Figure
\ref{fig:normalization-target-normal-form}).
\end{theorem}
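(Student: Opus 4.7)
The plan is to proceed by induction on the structure of the normal term $M$, leaning on Lemma~\ref{lem:f-collapses} to eliminate all neutral-term cases and on the constraint that $M$ has a nested relational type to rule out the remaining non-\NRC constructs. The NRC grammar in Figure~\ref{fig:normalization-target-normal-form} omits exactly the following features that \TLinks normal form admits: neutral terms $F$ (which cover $\kw{rmap}$, $\kw{rfold}$, $\kw{tracecase}$, $\kw{typecase}$, term and type applications, and projections), $\lambda$- and $\Lambda$-abstractions, and the six trace constructors. So the proof reduces to showing each of these cannot occur with the assumed type in the assumed context, while every other production (constants, records, projections, conditionals, $+$, $==$, lists, concatenation, for-comprehensions, and table literals) maps directly into \NRC once the IH has been applied to subterms.

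For the neutral case, Lemma~\ref{lem:f-collapses} immediately yields $F = x$ or $F = x.l$, both of which are \NRC terms. For $\lambda x{:}A.M$ and $\Lambda\alpha{:}K.M$, the type would be a function type or a $\forall$-type, neither of which is nested relational, so these productions are excluded by hypothesis. The same reasoning rules out the trace constructors $\texttt{Lit}$, $\texttt{If}$, $\texttt{For}$, $\texttt{Cell}$, $\texttt{OpEq}$, $\texttt{OpPlus}$, since each introduces a $\texttt{Trace}\ A$ type by the rules of Figure~\ref{fig:trace-intro}. The remaining productions are handled by applying the induction hypothesis to their subterms: the condition in $\kw{if}~H~\kw{then}~M~\kw{else}~N$ is either an $F$ (handled above) or an equality $M == N$ of base-typed subterms; record fields, arithmetic operands, singleton list contents, and concatenation arguments all inherit nested relational types from the enclosing construct.

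The one case requiring extra care is the for-comprehension $\kw{for}~(x \leftarrow T)~N$, because the IH on $N$ needs the extended context $\Gamma, x{:}A$ to still be a query context in the sense of Definition~\ref{def:query-context}. By the grammar, $T$ is either a neutral $F$ or a table literal $\kw{table}~n~\rec{R}$. In the former case Lemma~\ref{lem:f-collapses} forces $T$ to be $x'$ or $x'.l$; but Definition~\ref{def:query-context} gives every variable in $\Gamma$ a record-of-base-types type, so neither $x'$ nor $x'.l$ can have a list type, contradicting the typing of $T$. Hence $T$ must be a table literal, the bound variable $x$ receives a type of the form $\rec{\mathsf{oid}{:}\intty; R}$ with base-type fields, the extended context is again a query context, and the IH applies to $N$.

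The main obstacle is this interaction between the for-comprehension case and the query-context discipline: without the restriction that query-context variables are flat records, $T$ could in principle be an $F$ yielding a nested list and we would need a stronger induction invariant. As it stands, the definitional restriction combined with Lemma~\ref{lem:f-collapses} is exactly what makes the argument go through cleanly. Everything else is essentially bookkeeping on the normal form grammar of Figure~\ref{fig:tlinks-normal-form}.
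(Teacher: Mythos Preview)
Your proposal is correct and follows essentially the same argument as the paper. The paper phrases the induction as being ``on the typing derivation'' and therefore treats application, type instantiation, $\kw{tracecase}$, $\kw{typecase}$, $\kw{rmap}$, and $\kw{rfold}$ as separate cases (each dispatched via Lemma~\ref{lem:f-collapses} or Remark~\ref{remark:constructors-free-type-variable}), whereas you induct on the normal-form grammar and discharge all of $F$ at once with a single appeal to Lemma~\ref{lem:f-collapses}; the for-comprehension case, which you correctly identify as the only subtle one, is handled identically in both proofs. One small slip: projections are not a separate production of the normal-term grammar but live inside $F$ as $P.l$, so they are already covered by your neutral-term case rather than by ``apply IH to subterms''.
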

The proof (Section~\ref{sec:prf:nf-query-nrc}) is by induction on the
typing derivation, making use of query contexts
(Definition~\ref{def:query-context}),
Remark~\ref{remark:constructors-free-type-variable}, and
Lemma~\ref{lem:f-collapses}.

From here, we can use previous work such as query shredding
\citep{SIGMOD2014CheneyLW} or flattening \citep{SIGMOD2015UlrichG} to
produce \SQL.


\section{Related work}

Extracting provenance from traces is not a new idea
\citep{acar13jcs,cheney14ppdp,misinterpretingsql}.
What makes our work different is that traces and trace analysis are
defined in the language itself. In combination with query
normalization, this makes \TLinks the first, to our knowledge, system
that can execute user-defined query trace analysis on the database.

The traces in \TLinks take inspiration from work on \emph{slicing} of
database queries and programs
\citep{cheney14ppdp,ICFP2012PereraACL,Ricciotti:2017:IFP:3136534.3110258}.
Compared to theirs, our traces contain less information. Some
information would be easy to add, like concatenation operations or
projections. Other information requires changing the structure of
traces in a more invasive way. In particular, our traces are
cell-level only and do not include information about the binding
structure of queries. We also trace only after a first normalization
phase, so traces do not include information about, e.g.,
functions in the original query code. Expression-shaped
traces with explicit representation of variables like those proposed
by \citet{cheney14ppdp}, seem to make writing well-typed
analysis functions more difficult.

\citet{misinterpretingsql} trace query execution and show how
non-standard interpretations of the \SQL semantics produce
where-provenance and lineage instead of query results. They decompose
traces into a static part that resembles the shape of the query, and a
dynamic part which records control-flow decisions made by the database
during query execution. Their work extends to \SQL features like
grouping and aggregation that are not implemented in \Links, let alone
traced in \TLinks. Unlike in \TLinks, alternative interpretation of
queries happens after a trace has been recorded. Thus it is not
possible for the database to optimize, for example, filters based on
provenance information.

\TLinks builds on \lamiml \citep{morrisett1995compiling}.  The \lamr
calculus of Crary et
al.~\citep{crary_weirich_morrisett_2002} improves on \lamiml in making runtime type information explicit, avoiding passing types where unnecessary, and improving the ergonomics of the typecase typing rule by refining types in context.
An actual implementation would benefit from these improvements.

\TLinks features generic record programming in the form of record
mapping and folding. \Urweb \citep{POPL2015Chlipala} features ``first
class, type-level names and records'' \citep{PLDI2010Chlipala}. Its
generic and metaprogramming features seem suitable for our needs, but \Urweb currently lacks the advanced query normalization features we require. Type inference for \TLinks is an open problem. Type inference
for \Urweb is undecidable. However, \citet{PLDI2010Chlipala} claims
that heuristics work well-enough in practice to mostly avoid proof
terms and complex type annotations. Maybe this could be a model for
\TLinks, too.

While we present this work as an extension of \Links and its query
normalization rules, it is conceivable that one could similarly extend
other systems such as the flattening transformation implemented in
\DSH \citep{SIGMOD2015UlrichG}, or the tagless final implementation of
query shredding by \citet{suzuki16pepm}.

\section{Conclusions}

Language-integrated support for queries and their provenance seems
promising, but currently requires nontrivial interventions in the
language implementation or sophisticated metaprogramming
capabilities.  In this paper, we take a step towards making
language-integrated provenance easily customizable by factoring
provenance translations into a self-tracing transformation (that can
be implemented once and for all) and generic programming and trace
analysis capabilities (that can be used to implement different
provenance transformations).  Nevertheless, our work so far is a
foundational language design and more remains to be done to make it
practical.  We have not said anything about typechecking or inference
or, more generally, how \TLinks interfaces with the rest of \Links.  The expressiveness
and generality of our approach to traces needs to be tested further,
by using it to implement other forms of provenance.  Conversely, the
features of \TLinks may have further applications beyond provenance,
like the row-generic programming techniques employed by \Urweb. In
particular, even without traces and trace analysis, our results
extend the theory of conservativity for \NRC queries to 
normalization of typecase and typerec constructs (albeit in the
presence of nonterminating fixedpoint computations).  Sharpening these
results to ensure termination of trace analysis functions would also
be an interesting challenge.

\paragraph*{Acknowledgments}
This work was supported by a Google Faculty Research Award and ERC Consolidator Grant Skye (grant number
  \grantnum{ERC}{682315}).

\balance
\bibliography{paper}

\appendix
\onecolumn

\section{The value trace analysis function}\label{sec:app:value}
\begin{lstlisting}[style=infigure]
VALUE = $\lambda$a:Type.Typerec a (Bool, Int, String, $\lambda$_ b.List b, $\lambda$_ r.Record r, $\lambda$c _.c)

value : $\forall$a.T(a) -> T(VALUE a)
value = fix (value: $\forall$a.T(a) -> T(VALUE a)).$\Lambda$a:Type.
  typecase a of
    Bool     => $\lambda$x:Bool.x
    Int      => $\lambda$x:Int.x
    String   => $\lambda$x:String.x
    List b   => $\lambda$x:List b.for (y <- x) [value b y]
    Record r => $\lambda$x:Record r.rmap$^\texttt{r}$ value x
    Trace b  => $\lambda$x:Trace b.tracecase x of
      Lit y    => y
      If y     => value (Trace b) y.out
      For c y  => value (Trace b) y.out
      Cell y   => y.data
      OpPlus y => value (Trace Int) y.left + value (Trace Int) y.right
      OpEq c y => value (TRACE c) y.left == value (TRACE c) y.right
\end{lstlisting}

\section{Full formalization of \TLinks}

\subsection{Kinding judgments}
  \begin{itemize}
  \item Figure~\ref{fig:app:well-formed-contexts} gives the rules for
    well-typed contexts ($Γ, \alpha : K \text{ is well-formed}$)
\item Figure~\ref{fig:app:constructor-formation} defines the
  well-formedness judgment for type constructors ($Γ ⊢ C: K$)
\item Figure~\ref{fig:app:type-formation} defines the well-formedness
  judgment for types ($Γ ⊢ A: K$)
  \end{itemize}

\begin{figure*}
  \footnotesize
  \begin{mathpar}
    \infer
    { }
    { \cdot \text{ is well-formed} }

    \infer
    {Γ ⊢ A : \textit{Type} \and x \notin \textit{Dom}(Γ)}
    {Γ, x : A \text{ is well-formed}}

    \infer
    {Γ \text{ is well-formed} \and \alpha \notin \textit{Dom}(\Gamma) }
    {Γ, \alpha : K \text{ is well-formed}}
  \end{mathpar}
  \caption{Well-formed contexts $\Gamma$.}
  \label{fig:app:well-formed-contexts}
\end{figure*}

\begin{figure*}
  \footnotesize
  \begin{mathpar}
    \infer
    { Γ \text{ well-formed} }
    { Γ ⊢ \texttt{Bool}^* : \textit{Type} }

    \infer
    { Γ \text{ well-formed} }
    { Γ ⊢ \inttyc : \textit{Type} }

    \infer
    { Γ \text{ well-formed} }
    { Γ ⊢ \texttt{String}^* : \textit{Type} }

    \infer
    { Γ(\alpha) = K }
    { Γ ⊢ \alpha: K }

    \infer
    { Γ, \alpha: K_1 ⊢ C: K_2 }
    { Γ ⊢ \lambda \alpha:K_1.C : K_1 \rightarrow K_2 }

    \infer
    { Γ ⊢ C : K_1 \rightarrow K_2 \\ Γ ⊢ D : K_1 }
    { Γ ⊢ C\ D: K_2 }

    \infer
    { Γ ⊢ C : \textit{Type} }
    { Γ ⊢ \texttt{List}^*\ C : \textit{Type} }

    \infer
    { Γ ⊢ S : \textit{Row} }
    { Γ ⊢ \texttt{Record}^*\ S: \textit{Type} }

    \infer
    { Γ ⊢ C : \textit{Type} }
    { Γ ⊢ \texttt{Trace}^*\ C : \textit{Type} }

    \infer
    { Γ ⊢ C : \textit{Type} \\
      Γ ⊢ C_B : K \\
      Γ ⊢ C_I : K \\
      Γ ⊢ C_S : K \\
      Γ ⊢ C_L : \textit{Type} \rightarrow K \rightarrow K \\
      Γ ⊢ C_R : \textit{Row} \rightarrow \textit{Row} \rightarrow K \\
      Γ ⊢ C_T : \textit{Type} \rightarrow K \rightarrow K }
    { Γ ⊢ \kw{Typerec}~C~(C_B, C_I, C_S, C_L, C_R, C_T) : K }

    \infer
    { Γ \text{ well-formed} }
    { Γ ⊢ \cdot : \textit{Row} }

    \infer
    { Γ ⊢ C:\textit{Type} \\ Γ ⊢ S : \textit{Row} }
    { Γ ⊢ l:C;S : \textit{Row} }

    \infer
    { Γ ⊢ C : \textit{Type} \rightarrow \textit{Type} \\ Γ ⊢ S : \textit{Row} }
    { Γ ⊢ \kw{Rmap}\ C\ S : \textit{Row} }
  \end{mathpar}

  \caption{Constructor and row constructor kinding.}
  \label{fig:app:constructor-formation}
\end{figure*}

\begin{figure*}
  \footnotesize
  \begin{mathpar}
    \infer
    { Γ ⊢ C: \textit{Type} }
    { Γ ⊢ \texttt{T}(C): \textit{Type} }

    \infer
    { Γ \text{ well-formed} }
    { Γ ⊢ \texttt{Bool}: \textit{Type} }

    \infer
    { Γ \text{ well-formed} }
    { Γ ⊢ \intty: \textit{Type} }

    \infer
    { Γ \text{ well-formed} }
    { Γ ⊢ \texttt{String}: \textit{Type} }


    \infer
    { Γ, \alpha:K ⊢ A: \textit{Type} \\ \alpha \notin \textit{Dom}(Γ) }
    { Γ ⊢ \forall \alpha:K.A : \textit{Type} }

    \infer
    { Γ ⊢ A : \textit{Type} \\ Γ ⊢ B : \textit{Type} }
    { Γ ⊢ A \rightarrow B : \textit{Type} }

    \infer
    { Γ ⊢ A : \textit{Type} }
    { Γ ⊢ \texttt{List}\ A : \textit{Type} }

    \infer
    { Γ ⊢ R : \textit{Row} }
    { Γ ⊢ \texttt{Record}\ R : \textit{Type} }

    \infer
    { Γ ⊢ A : \textit{Type} }
    { Γ ⊢ \texttt{Trace}\ A : \textit{Type} }

    \infer
    { Γ ⊢ S: \textit{Row} }
    { Γ ⊢ \texttt{T}(S) : \textit{Row}}

    \infer
    { Γ \text{ well-formed} }
    { Γ ⊢ \cdot : \textit{Row} }

    \infer
    { Γ ⊢ A : \textit{Type} \\ Γ ⊢ R : \textit{Row} }
    { Γ ⊢ l:A;R : \textit{Row}}
  \end{mathpar}
  \caption{Type and row type kinding.}
  \label{fig:app:type-formation}
\end{figure*}

\subsection{Type-level computation and equivalence}

\begin{itemize}
\item Figure~\ref{fig:app:constructor-computation} defines the
  reduction relation for type and row constructors ($C \leadsto C'$,
  $S \leadsto S'$)
\item Figure~\ref{fig:app:constructor-equivalence} defines equivalence
  for type and row constructors ($Γ ⊢ C = C' : K$, $Γ ⊢ S = S' : K$)
\item Figure~\ref{fig:app:type-equivalence} defines type and row
  equivalence ($Γ ⊢ A = B: K$, $Γ ⊢ S = S': \mathit{Type}$)
\end{itemize}

\begin{figure*}
  \footnotesize
  \begin{align*}
    S \leadsto S' \Rightarrow l:C; S &\leadsto l:C; S' \\
    C \leadsto C' \Rightarrow l:C; S &\leadsto l:C'; S \\
    \\
    C \leadsto C' \Rightarrow C\ D &\leadsto C'\ D \\
    D \leadsto D' \Rightarrow C\ D &\leadsto C\ D' \\
    (\lambda\alpha:K.C)\ D & \leadsto C[\alpha \coloneqq D] \\
    \\
    C \leadsto C' \Rightarrow \lambda \alpha:K.C &\leadsto \lambda \alpha:K.C' \\
    C \leadsto C' \Rightarrow \mathtt{List}^*\ C &\leadsto \mathtt{List}^*\ C' \\
    C \leadsto C' \Rightarrow \mathtt{Trace}^*\ C &\leadsto \mathtt{Trace}^*\ C' \\
    S \leadsto S' \Rightarrow \mathtt{Record}^*\ S &\leadsto \mathtt{Record}^*\ S' \\
    \\
    \kw{Rmap}\ C\ \cdot & \leadsto \cdot \\
    \kw{Rmap}\ C\ (l:D; S) &\leadsto (l: C\ D; \kw{Rmap}\ C\ S)\\
    S \leadsto S' \Rightarrow \kw{Rmap}\ C\ S &\leadsto \kw{Rmap}\ C\ S' \\
    C \leadsto C' \Rightarrow \kw{Rmap}\ C\ S &\leadsto \kw{Rmap}\ C'\ S \\
    \\
    C \leadsto C' \Rightarrow \kw{Typerec}~C~(C_B, C_I, C_S, C_L, C_R, C_T) &\leadsto \kw{Typerec}~C'~(C_B, C_I, C_S, C_L, C_R, C_T) \\
    C_B \leadsto C_B' \Rightarrow \kw{Typerec}~C~(C_B, C_I, C_S, C_L, C_R, C_T) &\leadsto \kw{Typerec}~C~(C_B', C_I, C_S, C_L, C_R, C_T) \\
    & \vdots \\
    \kw{Typerec}~\texttt{Bool}^*~(C_B, C_I, C_S, C_L, C_R, C_T) &\leadsto C_B \\
    \kw{Typerec}~\inttyc~(C_B, C_I, C_S, C_L, C_R, C_T) &\leadsto C_I \\
    \kw{Typerec}~\texttt{String}^*~(C_B, C_I, C_S, C_L, C_R, C_T) &\leadsto C_S \\
    \kw{Typerec}~\texttt{List}^*~D~(C_B, C_I, C_S, C_L, C_R, C_T) &\leadsto C_L\ D\ (\kw{Typerec}~D~(C_B, C_I, C_S, C_L, C_R, C_T)) \\
    \kw{Typerec}~\texttt{Record}^*~S~(C_B, C_I, C_S, C_L, C_R, C_T) &\leadsto C_R\ S\ (\kw{Rmap}\ (\lambda \alpha.\kw{Typerec}~\alpha~(C_B, C_I, C_S, C_L, C_R, C_T))\ S) \\
    \kw{Typerec}~\texttt{Trace}^*~D~(C_B, C_I, C_S, C_L, C_R, C_T) &\leadsto C_T\ D\ (\kw{Typerec}~D~(C_B, C_I, C_S, C_L, C_R, C_T))
  \end{align*}

  \caption{Constructor and row constructor computation.}
  \label{fig:app:constructor-computation}
\end{figure*}

\begin{figure*}
  \footnotesize
  \begin{mathpar}
    \infer
    { Γ ⊢ C : K }
    { Γ ⊢ C = C : K }

    \infer
    { Γ ⊢ D = C : K }
    { Γ ⊢ C = D : K }

    \infer
    { Γ ⊢ C = C' : K \\ Γ ⊢ C' = C'' : K }
    { Γ ⊢ C = C'' : K }

    \infer 
    { Γ ⊢ C : K \rightarrow K' }
    { Γ ⊢ \lambda\alpha:K.C\ \alpha = C: K \rightarrow K' }

    \infer
    { Γ,\alpha:K ⊢ C = D: K' \\ \alpha \notin \textit{Dom}(Γ) }
    { Γ ⊢ \lambda\alpha:K.C = \lambda\alpha:K.D: K \rightarrow K' }

    \infer
    { Γ ⊢ C = C' : K' \rightarrow K \\ Γ ⊢ D = D' : K' }
    { Γ ⊢ C\ D = C'\ D' : K }

    \infer
    { Γ ⊢ C = D : K }
    { Γ ⊢ \texttt{List}^*\ C = \texttt{List}^*\ D : K }

    \infer
    { Γ ⊢ S = S' : K }
    { Γ ⊢ \texttt{Record}^*\ S = \texttt{Record}^*\ S' : K }

    \infer
    { Γ ⊢ C = D : K }
    { Γ ⊢ \texttt{Trace}^*\ C = \texttt{Trace}^*\ D : K }

    \infer
    { Γ ⊢ C : K \\ Γ ⊢ D : K \\ C \leadsto D }
    { Γ ⊢ C = D : K }

    \infer
    { Γ \text{ well-formed} }
    { Γ ⊢ \cdot = \cdot : \textit{Row} }

    \infer
    { Γ ⊢ C = D: \textit{Type} \\ Γ ⊢ S = S': \textit{Row} }
    { Γ ⊢ (l:C; S) = (l:D; S') : \textit{Row} }

    \infer
    { Γ ⊢ C = D : \textit{Type} \rightarrow \textit{Type} \\ Γ ⊢ S = S' : \textit{Row} }
    { Γ ⊢ \kw{Rmap}\ C\ S = \kw{Rmap}\ D\ S' : \textit{Row} }

    \infer
    { Γ ⊢ C = C' : K \\
      Γ ⊢ C_B = C_B' : K \\
      Γ ⊢ C_I = C_I' : K \\
      Γ ⊢ C_S = C_S' : K \\
      Γ ⊢ C_L = C_L' : \textit{Type} \rightarrow K \rightarrow K \\
      Γ ⊢ C_R = C_R' : \textit{Row} \rightarrow \textit{Row} \rightarrow K \\
      Γ ⊢ C_T = C_T' : \textit{Type} \rightarrow K \rightarrow K }
    { Γ ⊢ \kw{Typerec}~C~(C_B, C_I, C_S, C_L, C_R, C_T) = \kw{Typerec}~C'~(C_B', C_I', C_S', C_L', C_R', C_T') : K }
  \end{mathpar}
  \caption{Constructor and row constructor equivalence.}
  \label{fig:app:constructor-equivalence}
\end{figure*}

\begin{figure*}
  \footnotesize
  \begin{mathpar}
    \infer
    { Γ \text{ well-formed} }
    { Γ ⊢ \texttt{T}(\texttt{Bool}^*) = \texttt{Bool} : \textit{Type} }

    \infer
    { Γ \text{ well-formed} }
    { Γ ⊢ \texttt{T}(\inttyc) = \intty : \textit{Type} }

    \infer
    { Γ \text{ well-formed} }
    { Γ ⊢ \texttt{T}(\texttt{String}^*) = \texttt{String} : \textit{Type} }

    \infer
    { Γ ⊢ C : \textit{Type}}
    { Γ ⊢ \texttt{T}(\texttt{List}^*\ C) = \texttt{List}\ \texttt{T}(C) : \textit{Type} }

    \infer
    { Γ ⊢ S: \textit{Row} }
    { Γ ⊢ \texttt{T}(\texttt{Record}^*\ S) = \texttt{Record}\ \texttt{T}(S) : \textit{Type} }

    \infer
    { Γ ⊢ C : \textit{Type} }
    { Γ ⊢ \texttt{T}(\texttt{Trace}^*\ C) = \texttt{Trace}\ \texttt{T}(C) : \textit{Type} }

    \infer
    {  }
    { Γ ⊢ T(\cdot) = \cdot : \textit{Row} }

    \infer
    { Γ ⊢ C : \textit{Type} \\ Γ ⊢ S: \textit{Row} }
    { Γ ⊢ T(l:C; S) = (l:T(C); T(S)) : \textit{Row} }

    \infer
    { Γ ⊢ C = D : \textit{Type} }
    { Γ ⊢ \texttt{T}(C) = \texttt{T}(D) : \textit{Type} }

    \infer
    { Γ ⊢ S = S' : \textit{Row} }
    { Γ ⊢ \texttt{T}(S) = \texttt{T}(S') : \textit{Row} }

    \infer
    { Γ ⊢ A = B : \textit{Type}}
    { Γ ⊢ \texttt{List}\ A = \texttt{List}\ B: \textit{Type} }

    \infer
    { Γ ⊢ R = R' : \textit{Row} }
    { Γ ⊢ \texttt{Record}\ R = \texttt{Record}\ R' : \textit{Type} }

    \infer
    { Γ ⊢ A = B : \textit{Type} }
    { Γ ⊢ \texttt{Trace}\ A = \texttt{Trace}\ B : \textit{Type} }

    \infer
    { Γ ⊢ A = A' : \textit{Type} \\ Γ ⊢ B = B' : \textit{Type} }
    { Γ ⊢ A \rightarrow B = A' \rightarrow B' : \textit{Type} }

    \infer
    { Γ ⊢ A = B : \textit{Type}}
    { Γ ⊢ \forall \alpha. A = \forall \alpha. B : \textit{Type} }

    \infer
    { Γ \text{ well-formed} }
    { Γ ⊢ \cdot = \cdot : \textit{Row} }

    \infer
    { Γ ⊢ A = B: \textit{Type} \\ Γ ⊢ R = R': \textit{Row} }
    { Γ ⊢ (l:A; R) = (l:B; R') : \textit{Row} }

  \end{mathpar}
  \caption{Type and row type equivalence.}
  \label{fig:app:type-equivalence}
\end{figure*}

\subsection{Type judgments}

\begin{itemize}
\item Figure~\ref{fig:app:term-formation} defines the typing judgment
  for most of the \TLinks constructs ($Γ ⊢ M : A$)
\item Figure~\ref{fig:app:trace-intro} defines the typing rules
  introducing and eliminating traces.
\end{itemize}

\begin{figure*}
  \footnotesize
  \begin{mathpar}
    \infer
    { \Sigma(c) = A }
    { Γ ⊢ c : A}

    \infer
    { Γ(x) = A }
    { Γ ⊢ x : A}

    \infer
    { Γ ⊢ A : \textit{Type} \\
      Γ, x : A ⊢ M: B \\ x \notin \textit{Dom}(Γ)}
    { Γ ⊢ \lambda x:A.M : A \rightarrow B }

    \infer
    { Γ ⊢ M : A \rightarrow B \\ Γ ⊢ N : A }
    { Γ ⊢ M\ N : B }

    \infer
    { Γ,\alpha: K ⊢ M: A \\ \alpha \notin \textit{Dom}(Γ) }
    { Γ ⊢ \Lambda \alpha: K.M: \forall \alpha:K.A }

    \infer
    { Γ ⊢ M: \forall \alpha:K.A \\ Γ ⊢ C: K}
    { Γ ⊢ M\ C : A[\alpha \coloneqq C]}

    \infer
    { Γ ⊢ A \\ Γ,f:A ⊢ M:A}
    { Γ ⊢ \kw{fix}~f:A.M : A }

    \infer
    { Γ ⊢ L : \texttt{Bool} \\
      Γ ⊢ M : A \\
      Γ ⊢ N : A }
    { Γ ⊢ \kw{if}~L~\kw{then}~M~\kw{else}~N : A }

    \infer
    {Γ ⊢ M : \intty \\ Γ ⊢ N : \intty }
    {Γ ⊢ M + N : \intty }

    \infer
    {Γ ⊢ M : A \\ Γ ⊢ N : A \\ Γ ⊢ A : \textit{Type} }
    {Γ ⊢ M == N : \texttt{Bool} }

    \infer
    { \cdot ⊢ R : \textit{Row} }
    {Γ ⊢ \kw{table}~n~\langle \textsf{oid}:\intty, R \rangle : \texttt{List} \langle \textsf{oid}:\intty; R \rangle}

    \infer 
    { Γ ⊢ A: \textit{Type} }
    { Γ ⊢ [] : \texttt{List}\ A }

    \infer
    { Γ ⊢ M : A }
    { Γ ⊢ [M] : \texttt{List}\ A }

    \infer
    { Γ ⊢ M : \texttt{List}\ A \\ Γ ⊢ N : \texttt{List}\ A }
    { Γ ⊢ M \concat N  : \texttt{List}\ A }

    \infer 
    { Γ ⊢ M : \texttt{List}\ A \\ Γ,x:A ⊢ N : \texttt{List}\ B }
    { Γ ⊢ \kw{for}~(x \leftarrow M)~N : \texttt{List}\ B }

    \infer
    { Γ \text{ well-formed} }
    { Γ ⊢ \langle \rangle : \texttt{Record}\ ()}

    \infer
    { Γ ⊢ M : A \\ Γ ⊢ N : \texttt{Record}\ R }
    { Γ ⊢ \langle l = M; N \rangle : \texttt{Record}\ (l : A; R)}

    \infer
    { Γ ⊢ M : \texttt{Record}\ (l : A; R) }
    { Γ ⊢ M.l : A }

    \infer
    { Γ ⊢ M : B \\ Γ ⊢ A = B }
    { Γ ⊢ M : A }

    \infer
    { Γ ⊢ M : \forall \alpha: \textit{Type}. \texttt{T}(\alpha) \rightarrow \texttt{T}(C\ \alpha) \\
      Γ ⊢ N : \texttt{T}(\texttt{Record}^*\ S)}
    { Γ ⊢ \kw{rmap}^S~M~N : \texttt{T}(\texttt{Record}^*\ (\kw{Rmap}\ C \ S))}

    \infer
    { Γ ⊢ L: \texttt{T}(C) \rightarrow \texttt{T}(C) \rightarrow \texttt{T}(C) \\
      Γ ⊢ M : \texttt{T}(C) \\
      Γ ⊢ N : \texttt{T}(\texttt{Record}^*\ (\kw{Rmap}\ (\lambda\alpha.\alpha \rightarrow C)\ S)) }
    { Γ ⊢ \kw{rfold}^S\ L\ M\ N:\texttt{T}(C) }

    \infer
    { Γ ⊢ C : \textit{Type} \\
      Γ, \alpha: \textit{Type} ⊢ B : \textit{Type} \\
      \beta, \rho, \gamma \notin \textit{Dom}(Γ) \\
      Γ ⊢ M_B : B[\alpha \coloneqq \texttt{Bool}^*] \\
      Γ ⊢ M_I : B[\alpha \coloneqq \inttyc] \\
      Γ ⊢ M_S : B[\alpha \coloneqq \texttt{String}^*] \\
      Γ, \beta : \textit{Type} ⊢ M_L : B[\alpha \coloneqq \texttt{List}^*\ \beta] \\
      Γ, \rho : \textit{Row} ⊢ M_R : B[\alpha \coloneqq \texttt{Record}^*\ \rho] \\
      Γ, \gamma : \textit{Type} ⊢ M_T : B[\alpha \coloneqq \texttt{Trace}^*\ \gamma]
    }
    { Γ ⊢ \kw{typecase}~C~\kw{of}\ (M_B, M_I, M_S, \beta.M_L, \rho.M_R, \gamma.M_T) : B[\alpha \coloneqq C] }
  \end{mathpar}
  \caption{Term formation $Γ ⊢ M : A$.}
  \label{fig:app:term-formation}
\end{figure*}

\begin{figure*}
  \footnotesize
  \begin{mathpar}
    \infer
    { Γ ⊢ c : \texttt{Bool} }
    { Γ ⊢ \texttt{Lit}\ c : \texttt{Trace}\ \texttt{Bool} }

    \infer
    { Γ ⊢ c : \intty }
    { Γ ⊢ \texttt{Lit}\ c : \texttt{Trace}\ \intty }

    \infer
    { Γ ⊢ c : \texttt{String} }
    { Γ ⊢ \texttt{Lit}\ c : \texttt{Trace}\ \texttt{String} }

    \infer
    { Γ ⊢ M : \langle \mathsf{cond} : \texttt{Trace}\ \texttt{Bool}, \mathsf{out} : \texttt{Trace}\ A \rangle  }
    { Γ ⊢ \texttt{If}\ M : \texttt{Trace}\ A }

    \infer
    { Γ ⊢ C : \textit{Type} \\ Γ ⊢ M : \langle \mathsf{in} : \texttt{T(TRACE}\ C\texttt{)}, \mathsf{out} : \texttt{Trace}\ A \rangle  }
    { Γ ⊢ \texttt{For}\ C\ M : \texttt{Trace}\ A }

    \infer
    { Γ ⊢ M : \langle \mathsf{table} : \texttt{String}, \mathsf{column} : \texttt{String}, \mathsf{row} : \intty, \mathsf{data} : A \rangle  }
    { Γ ⊢ \texttt{Cell}\ M : \texttt{Trace}\ A }

    \infer
    { Γ ⊢ C : \textit{Type} \\ Γ ⊢ M : \langle \mathsf{left} : \texttt{T(TRACE}\ C\texttt{)}, \mathsf{right} : \texttt{T(TRACE}\ C\texttt{)} \rangle}
    { Γ ⊢ \texttt{OpEq}\ C\ M : \texttt{Trace}\ \texttt{Bool} }

    \infer
    { Γ ⊢ M : \langle \mathsf{left} : \texttt{Trace Int}, \mathsf{right} : \texttt{Trace Int}\rangle}
    { Γ ⊢ \texttt{OpPlus}\ M : \texttt{Trace}\ \intty }

    \infer
    { Γ ⊢ M : \texttt{Trace}\ A \\
      Γ, x_L : A ⊢ M_L : B \\
      Γ, x_I : \langle\mathsf{cond}: \texttt{Trace}\ \texttt{Bool}, \mathsf{then}: \texttt{Trace}\ A\rangle ⊢ M_I : B \\
      Γ, \alpha_F: \textit{Type}, x_F : \langle\mathsf{in}: \texttt{T}(\texttt{TRACE}\ \alpha_F), \mathsf{out}: \texttt{Trace}\ A\rangle ⊢ M_F : B \\
      Γ, x_C : \langle\mathsf{table}: \texttt{String}, \mathsf{column}: \texttt{String}, \mathsf{row}: \intty, \mathsf{data} : A\rangle ⊢ M_C : B \\
      Γ, \alpha_E : \textit{Type}, x_E : \langle\mathsf{left}: \texttt{T}(\texttt{TRACE}\ \alpha_E), \mathsf{right}: \texttt{T}(\texttt{TRACE}\ \alpha_E)\rangle ⊢ M_E : B \\
      Γ, x_P : \langle\mathsf{left}: \texttt{Trace}\ \intty, \mathsf{right}: \texttt{Trace}\ \intty\rangle ⊢ M_P : B \\
    }
    { Γ ⊢ \kw{tracecase}~M~\kw{of}\ (x_L.M_L, x_I.M_I, \alpha_F.x_F.M_F, x_C.M_C, \alpha_E.x_E.M_E, x_P.M_P) : B }
  \end{mathpar}

  \caption{\texttt{Trace} introduction and elimination rules.}
  \label{fig:app:trace-intro}
\end{figure*}

\subsection{Normalization}

\begin{itemize}
\item Figure~\ref{fig:app:normalization-beta} defines the main
  computational rules ($\beta$-rules) for normalization ($M \leadsto M'$)
\item Figure~\ref{fig:app:normalization-cc} defines commuting
  conversion rules for normalization  ($M \leadsto M'$)
\item Figure~\ref{fig:app:normalization-rules-congruence} defines
  congruence rules for normalization   ($M \leadsto M'$)
\end{itemize}

\begin{figure*}
  \footnotesize
  \begin{align*}
    (\lambda x.M)\ N &\leadsto M[x \coloneqq N] \\
    \kw{fix}~f.M &\leadsto M[f \coloneqq \kw{fix}~f.M] \\
    (\Lambda \alpha.M)\ C &\leadsto M[\alpha \coloneqq C] \\
    \kw{if}~\kw{true}~\kw{then}~M~\kw{else}~N &\leadsto M \\
    \kw{if}~\kw{false}~\kw{then}~M~\kw{else}~N &\leadsto N \\
    \langle \overline{l_i=M_i}  \rangle.l_i &\leadsto M_i \\
    \kw{rmap}^{(\overline{l_i:C_i})}\ M\ N &\leadsto \langle \overline{l_i = (M\ C_i)\ N.l_i } \rangle \\
    \kw{rfold}^{(\overline{l_i: C_i})}\ L\ M\ N & \leadsto L\ N.l_1\ (L\ N.l_2 \hdots (L\ N.l_n\ M) \hdots) \\
    \kw{for}~(x \leftarrow [])~N &\leadsto [] \\
    \kw{for}~(x \leftarrow [M])~N &\leadsto N[x \coloneqq M]
\\    \kw{tracecase}~\texttt{Lit}~M~\kw{of}\ (x.M_L, M_I, M_F, M_C, M_E, M_P)
                     &\leadsto M_L[x \coloneqq M] \\
    \kw{tracecase}~\texttt{If}~M~\kw{of}\ (M_L, x.M_I, M_F, M_C, M_E, M_P)
                     &\leadsto M_I[x \coloneqq M] \\
    \kw{tracecase}~\texttt{For}~C~M~\kw{of}\ (x.M_L, x.M_I, \alpha.x.M_F, x.M_C, \alpha.x.M_E, x.M_P)
                     &\leadsto M_F[\alpha \coloneqq C, x \coloneqq M] \\
    \kw{tracecase}~\texttt{Cell}~M~\kw{of}\ (x.M_L, x.M_I, \alpha.x.M_F, x.M_C, \alpha.x.M_E, x.M_P)
                     &\leadsto M_C[x \coloneqq M] \\
    \kw{tracecase}~\texttt{OpEq}~C~M~\kw{of}\ (x.M_L, x.M_I, \alpha.x.M_F, x.M_C, \alpha.x.M_E, x.M_P)
                     &\leadsto M_E[\alpha \coloneqq C, x \coloneqq M] \\
    \kw{tracecase}~\texttt{OpPlus}~M~\kw{of}\ (x.M_L, x.M_I, \alpha.x.M_F, x.M_C, \alpha.x.M_E, x.M_P)
                                           &\leadsto M_P[x \coloneqq M]\\
    \kw{typecase}~\texttt{Bool}~\kw{of}\ (M_B, M_I, M_S, \beta.M_L, \rho.M_R, \gamma.M_T)
                     &\leadsto M_B  \\
    \kw{typecase}~\intty~\kw{of}\ (M_B, M_I, M_S, \beta.M_L, \rho.M_R, \gamma.M_T)
                     &\leadsto M_I  \\
    \kw{typecase}~\texttt{String}~\kw{of}\ (M_B, M_I, M_S, \beta.M_L, \rho.M_R, \gamma.M_T)
                     &\leadsto M_S  \\
    \kw{typecase}~\texttt{List}~C~\kw{of}\ (M_B, M_I, M_S, \beta.M_L, \rho.M_R, \gamma.M_T)
                     &\leadsto M_L[\beta \coloneqq C] \\
    \kw{typecase}~\texttt{Record}~S~\kw{of}\ (M_B, M_I, M_S, \beta.M_L, \rho.M_R, \gamma.M_T)
                     &\leadsto M_R[\rho \coloneqq S] \\
    \kw{typecase}~\texttt{Trace}~C~\kw{of}\ (M_B, M_I, M_S, \beta.M_L, \rho.M_R, \gamma.M_T)
                     &\leadsto M_T[\gamma \coloneqq C] \\
  \end{align*}

  \caption{Normalization $\beta$-rules. See also commuting conversions in Figure~\ref{fig:app:normalization-cc}, congruence rules in Figure \ref{fig:app:normalization-rules-congruence}, and constructor computation rules in Figure~\ref{fig:app:constructor-computation} }
  \label{fig:app:normalization-beta}
\end{figure*}

\begin{figure*}
  \footnotesize
  \begin{align*}
    (\kw{if}~L~\kw{then}~M_1~\kw{else}~M_2)\ N &\leadsto \kw{if}~L~\kw{then}~M_1\ N~\kw{else}~M_2\ N \\
    (\kw{if}~L~\kw{then}~M_1~\kw{else}~M_2)\ C &\leadsto \kw{if}~L~\kw{then}~M_1\ C~\kw{else}~M_2\ C \\
    (\kw{if}~L~\kw{then}~M~\kw{else}~N).l &\leadsto \kw{if}~L~\kw{then}~M.l~\kw{else}~N.l \\
    \kw{for}~(x \leftarrow M_1 \concat M_2)~N &\leadsto (\kw{for}~(x \leftarrow M_1)~N) \concat (\kw{for}~(x \leftarrow M_2)~N) \\
    \kw{for}~(x \leftarrow \kw{for}~(y \leftarrow L)~M)~N &\leadsto \kw{for}~(y \leftarrow L)~\kw{for}~(x \leftarrow M)~N
\\
                \kw{if}~(\kw{if}~L~\kw{then}~M_1~\kw{else}~M_2)~\kw{then}~N_1~\kw{else}~N_2 
                &\leadsto \kw{if}~L~\kw{then}~(\kw{if}~M_1~\kw{then}~N_1~\kw{else}~N_2)~\kw{else}~(\kw{if}~M_2~\kw{then}~N_1~\kw{else}~N_2) \\[.5\baselineskip]
                \kw{for}~(x \leftarrow \kw{if}~L~\kw{then}~M_1~\kw{else}~M_2)~N 
                & \leadsto \kw{if}~L~\kw{then}~\kw{for}~(x \leftarrow M_1)~N~\kw{else}~\kw{for}~(x \leftarrow M_2)~N \\[.5\baselineskip]
                \kw{tracecase}~\kw{if}~L~\kw{then}~M_1~\kw{else}~M_2~\kw{of}\ (M_L, M_I, M_F, M_C, M_E, M_P) 
                & \leadsto \kw{if}~L~\kw{then}~\kw{tracecase}~M_1~\kw{of}\ (M_L, M_I, M_F, M_C, M_E, M_P)\\
                & \qquad \quad \kw{else}~\kw{tracecase}~M_2~\kw{of}\ (M_L, M_I, M_F, M_C, M_E, M_P)
  \end{align*}

  \caption{Commuting conversions reorder expressions to expose more $\beta$-reductions.}
  \label{fig:app:normalization-cc}
\end{figure*}

\begin{figure*}
\footnotesize
\centering
  \begin{mathpar}
    \infer
    {M \rightsquigarrow M'}
    {I[M] \rightsquigarrow I[M']}

    \infer
    {C \rightsquigarrow C'}
    {J[C] \rightsquigarrow J[C']}
  \end{mathpar}\\
  \[
    \begin{array}{lrcl}
      \text{Term frames} & I[] & \Coloneqq
      & \lambda x.[] \mid
        []\ N \mid M\ [] \mid
        \Lambda \alpha.[] \mid
        []\ C \mid
        \kw{if}~[]~\kw{then}~M~\kw{else}~N \mid
        \kw{if}~L~\kw{then}~[]~\kw{else}~N
        \kw{if}~L~\kw{then}~M~\kw{else}~[]
      \\ & & \mid &
        \langle l = []; N \rangle \mid
        \langle l = M; [] \rangle \mid
        [].l \mid
        \kw{rmap}^S\ []\ N \mid
        \kw{rmap}^S\ M\ [] \mid
        \kw{rfold}^S\ []\ M\ N \mid
        \kw{rfold}^S\ L\ []\ N \mid
        \kw{rfold}^S\ L\ M\ []
      \\ & & \mid &
        \texttt{[} [] \texttt{]} \mid
        [] \concat N \mid
        M \concat N \mid
        [] == N \mid
        M == N \mid
        [] + N \mid
        M + N \mid
                   \kw{for}~(x \leftarrow [])~N \mid
        \kw{for}~(x \leftarrow M)~[] 
      \\ & & \mid &
        \texttt{Lit}\ [] \mid
        \texttt{If}\ [] \mid
        \texttt{For}\ C\ [] \mid
                 \texttt{Cell}\ [] \mid
        \texttt{OpEq}\ C\ [] \mid
        \texttt{OpPlus}\ [] \\ & &  \mid &
        \kw{tracecase}~[]~\kw{of}\ (M_L, M_I, M_F, M_C, M_E, M_P) \mid
        \kw{tracecase}~M~\kw{of}\ ([], M_I, M_F, M_C, M_E, M_P) \\ & & \mid &
        \kw{tracecase}~M~\kw{of}\ (M_L, [], M_F, M_C, M_E, M_P) \mid
        \kw{tracecase}~M~\kw{of}\ (M_L, M_I, [], M_C, M_E, M_P) \\ & & \mid &
        \kw{tracecase}~M~\kw{of}\ (M_L, M_I, M_F, [], M_E, M_P) \mid
        \kw{tracecase}~M~\kw{of}\ (M_L, M_I, M_F, M_C, [], M_P) \\ & & \mid &
        \kw{tracecase}~M~\kw{of}\ (M_L, M_I, M_F, M_C, M_E, []) \\ & & \mid &
        \kw{typecase}~C~\kw{of}\ ([], M_I, M_S, \beta.M_L, \rho.M_R, \gamma.M_T) \\ & & \mid &
        \kw{typecase}~C~\kw{of}\ (M_B, [], M_S, \beta.M_L, \rho.M_R, \gamma.M_T) \mid
        \kw{typecase}~C~\kw{of}\ (M_B, M_I, [], \beta.M_L, \rho.M_R, \gamma.M_T) \\ & & \mid &
        \kw{typecase}~C~\kw{of}\ (M_B, M_I, M_S, \beta.[], \rho.M_R, \gamma.M_T) \mid
        \kw{typecase}~C~\kw{of}\ (M_B, M_I, M_S, \beta.M_L, \rho.[], \gamma.M_T) \\ & & \mid &
        \kw{typecase}~C~\kw{of}\ (M_B, M_I, M_S, \beta.M_L, \rho.M_R, \gamma.[]) \\
      \\
      \text{Constructor frames} & J[] & \Coloneqq
      & M\ [] \mid
        \kw{rmap}^{[]}\ M\ N \mid
        \kw{rfold}^{[]}\ L\ M\ N \mid
        \texttt{For}\ []\ M \mid
                 \texttt{OpEq}\ []\ M \\
      & & \mid &
        \kw{typecase}~[]~\kw{of}\ (M_B, M_I, M_S, \beta.M_L, \rho.M_R, \gamma.M_T)
    \end{array}
  \]
  \caption{Congruence rules allow subterms to reduce independently.}
  \label{fig:app:normalization-rules-congruence}
\end{figure*}

\section{Proofs}

\subsection{Additional properties}\label{sec:app:additional}

Besides the properties stated in the main body of the paper, the
following additional properties are needed:

\begin{lemma}[Substitution lemmas]\label{lem:substitution-lemmas}
  ~
  \begin{enumerate}
  \item If $\Gamma, x:A \vdash M: B$ and $\Gamma \vdash N:A$ then $\Gamma \vdash M[x\coloneqq N]:B$.
  \item If $\Gamma, \alpha:K \vdash A: K'$ and $\Gamma \vdash C:K$ then $\Gamma[\alpha \coloneqq C] \vdash A[\alpha \coloneqq C]:K'[\alpha \coloneqq C]$.
  \item If $\Gamma, \rho:K \vdash A: K'$ and $\Gamma \vdash S:K$ then $\Gamma[\rho \coloneqq S] \vdash A[\rho \coloneqq S]:K'[\rho \coloneqq S]$.
  \item If $\Gamma, \alpha:K \vdash M: A$ and $\Gamma \vdash C:K$ then $\Gamma[\alpha \coloneqq C] \vdash M[\alpha \coloneqq C]:A[\alpha \coloneqq C]$.
  \item If $\Gamma, \rho:K \vdash M: A$ and $\Gamma \vdash S:K$ then $\Gamma[\rho \coloneqq S] \vdash M[\rho \coloneqq S]:A[\rho \coloneqq S]$.
  \end{enumerate}
\end{lemma}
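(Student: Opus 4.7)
The plan is to prove the five parts by induction on the appropriate kinding or typing derivation, in the order (2), (3), (4), (5), (1), so that later parts may invoke earlier ones. Parts (2) and (3) (substituting type and row constructors through types) require the analogous statements at the constructor level as helper lemmas, which are proved by a simultaneous induction on the kinding derivations of $C$ and $S$. Most cases follow from the induction hypothesis applied to each premise. Binders such as $\lambda\alpha:K.C$, $\forall\alpha:K.A$, and $\kw{Typerec}$ use Barendregt's convention to avoid capture; the $\kw{Rmap}\ C\ S$ case applies the induction hypothesis to both arguments. These parts are routine and do not interact with term-level features.

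For parts (4) and (5), substituting a type or row variable through a term, I would induct on the typing derivation of $M$. Term-only cases follow from the induction hypothesis. The cases that require care are those whose typing rules mention the substituted variable in nontrivial ways: type abstraction $\Lambda\alpha':K.M$ and type application $M\ C$ (using parts (2) and (3) for substitution into annotations and the resulting type); $\kw{rmap}^S$ and $\kw{rfold}^S$, where the row-constructor superscript must also be substituted using the constructor-level analogue of (3); the trace constructors $\texttt{For}\ C\ M$ and $\texttt{OpEq}\ C\ M$, which carry a constructor; and $\kw{typecase}\ C\ \kw{of}\ \ldots$, whose motive $B$ has the scrutinee type variable free, so the substitution must commute with the motive instantiations $B[\alpha\coloneqq\texttt{Bool}^*]$ and so on via a standard non-clashing substitution lemma. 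Part (1) then goes by induction on the typing derivation of $M$, with the usual manipulations for binders and appeals to the assumed weakening and swapping lemmas.

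The main obstacle will be handling the conversion rule in parts (4) and (5). To discharge it, I would need an auxiliary result that constructor substitution preserves the equivalence judgments $\Gamma \vdash C = D:K$ and $\Gamma \vdash S = S':K$, and analogously for types and rows. This in turn follows from an induction on the equivalence derivation, using that constructor substitution commutes with the computation relation $\leadsto$ on constructors (Figure~\ref{fig:app:constructor-computation}). The most technical point is to check that the $\kw{Typerec}$ and $\kw{Rmap}$ reduction rules remain applicable after substitution: substituting a well-kinded constructor for a variable must not block a reduction at the head position, since these reductions pattern-match on the outermost constructor shape. Once this commutation lemma is in hand, the $\kw{typecase}$ case of part (4) --- which is otherwise the trickiest, since its motive embeds the scrutinee variable into each of six branch types --- falls out by routine bookkeeping, yielding the desired substitution lemmas in the style of \lamiml and \lamr.
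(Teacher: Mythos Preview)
Your proposal is correct and follows the standard approach for substitution lemmas in systems derived from \lamiml. The paper itself does not prove this lemma: it simply states the five parts in Appendix~\ref{sec:app:additional} and, in Section~\ref{sec:preservation}, remarks that ``substitution of variables in terms, type variables in types, and type variables in terms are standard for \lamiml'' (citing Morrisett and Crary--Weirich--Morrisett), adding only that the row-constructor variants are needed as well. So there is nothing in the paper to compare against beyond this appeal to the literature; your outline is precisely the kind of argument those references contain, and the extra care you take with the conversion rule, $\kw{typecase}$ motive, and the row-carrying constructs ($\kw{rmap}^S$, $\kw{rfold}^S$, $\texttt{For}$, $\texttt{OpEq}$) is exactly what the \TLinks-specific extensions demand.
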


\begin{lemma}[Weakening]\label{lem:weakening}
  If $\Gamma \vdash M : A$, $\Gamma \vdash B : K$, and $x$ does not appear free in $\Gamma$, $M$, $A$, then $\Gamma, x:B \vdash M:A$.
\end{lemma}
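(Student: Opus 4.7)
The plan is to prove this by structural induction on the typing derivation $\Gamma \vdash M : A$, simultaneously with analogous weakening lemmas for the kinding judgments $\Gamma \vdash C : K$ and $\Gamma \vdash S : K$ and the equivalence judgments $\Gamma \vdash C = D : K$, $\Gamma \vdash S = S' : K$, and $\Gamma \vdash A = B : K$. These must be weakened together because the conversion rule allows term typing to depend on type equivalence, which in turn rests on kinding. The side condition $\Gamma \vdash B : K$ together with the freshness of $x$ guarantees that $\Gamma, x{:}B$ is itself a well-formed context, so inserting it preserves validity throughout.

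For most rules the argument is mechanical: apply the IH to each premise and reassemble the conclusion. The variable rule succeeds because $x$ is fresh, so existing lookups in $\Gamma$ remain valid in $\Gamma, x{:}B$; the constant rule is immediate. The rules for arithmetic, records, projection, lists, comprehensions, equality, conditionals, function and type application, fixpoints, $\kw{rmap}$, $\kw{rfold}$, and all of the trace introductions ($\texttt{Lit}$, $\texttt{If}$, $\texttt{For}$, $\texttt{Cell}$, $\texttt{OpEq}$, $\texttt{OpPlus}$) all reduce to direct application of the IHs. For the $\kw{table}$ rule the premise $\cdot \vdash R : \mathit{Row}$ is checked in the empty context, so there is nothing to weaken.

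The rules that extend the context---$\lambda$, $\Lambda$, $\kw{fix}$, $\kw{for}$, $\kw{tracecase}$, and especially $\kw{typecase}$---need mild bureaucracy. By alpha-conversion we may assume each bound variable $y$, $\alpha$, $\rho$, or $\gamma$ is distinct from $x$ and does not appear free in $B$. The IH then yields a derivation in, say, $\Gamma, y{:}A', x{:}B$, after which an exchange step (the standard context permutation lemma alluded to in Section~\ref{sec:preservation}, itself provable in parallel) moves $x{:}B$ past the newly bound variable to produce $\Gamma, x{:}B, y{:}A'$. In the $\kw{typecase}$ case this is applied uniformly to all six branches; the motive $B[\alpha \coloneqq \ldots]$ is undisturbed because $x$ is a term variable and therefore cannot appear in any constructor substitutee.

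The main obstacle is coordinating the six simultaneous weakening statements without circularity, given that the conversion rule can fire at every term rule. The cleanest organization is a single induction on the height of the combined derivation, at each step selecting the appropriate IH by judgment form; this amortizes the bookkeeping across all judgment forms rather than nesting one inside another. Beyond this, all the work is of the same bureaucratic character as in the substitution lemmas of Lemma~\ref{lem:substitution-lemmas}, and no genuinely new ideas are required.
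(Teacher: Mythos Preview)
The paper does not actually give a proof of this lemma: it lists weakening (together with the substitution and context-swap lemmas) in Appendix~\ref{sec:app:additional} as a ``standard context manipulation lemma'' and leaves it at that. Your proposal is the expected standard argument---a simultaneous induction over typing, kinding, and equivalence derivations, with alpha-renaming and exchange to handle the binder cases---and it is correct. In that sense you have supplied the routine proof the paper chose to omit; there is nothing to compare against beyond the implicit claim that the result is standard, which your outline confirms.
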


\begin{lemma}[Context swap]\label{lem:context-swap}~
  \begin{enumerate}
  \item If $\Gamma, x:A_x, y:A_y \vdash M : B$ then $\Gamma, y:A_y, x:A_x \vdash M : B$.
  \item If $\Gamma, x:A_x, y:A_y \vdash B : K_B$ then $\Gamma, y:A_y, x:A_x \vdash B : K_B$.
  \item If $\Gamma, \alpha:K_\alpha, y:A_y \vdash M : B$ and $\alpha$ does not appear free in $A_y$ then $\Gamma, y:A_y, \alpha:K_\alpha \vdash M : B$.
  \item If $\Gamma, \alpha:K_\alpha, y:A_y \vdash B : K_B$ and $\alpha$ does not appear free in $A_y$ then $\Gamma, y:A_y, \alpha:K_\alpha \vdash B : K_B$.
  \item If $\Gamma, x:A_x, \beta:K_\beta \vdash M : B$ then $\Gamma, \beta:K_\beta, x:A_x,\vdash M : B$.
  \item If $\Gamma, x:A_x, \beta:K_\beta \vdash B : K_B$ then $\Gamma, \beta:K_\beta, x:A_x,\vdash B : K_B$.
  \item If $\Gamma, \alpha:K_\alpha \beta:K_\beta \vdash M : B$ and $\alpha$ does not appear free in $K_\beta$ then $\Gamma, \beta:K_\beta, \alpha:K_\alpha\vdash M : B$.
  \item If $\Gamma, \alpha:K_\alpha \beta:K_\beta \vdash B : K_B$ and $\alpha$ does not appear free in $K_\beta$ then $\Gamma, \beta:K_\beta, \alpha:K_\alpha\vdash B : K_B$.
  \end{enumerate}
\end{lemma}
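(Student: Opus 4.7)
The plan is to prove all eight parts simultaneously by mutual induction on the derivations of $\Gamma, \ldots \vdash M : B$ (for the term-typing parts) and $\Gamma, \ldots \vdash B : K_B$ (for the kinding parts). The two families need to be handled together because term formation rules reference type/kind well-formedness (e.g.\ for $\lambda x : A. M$ we need $\Gamma \vdash A : \textit{Type}$, and for $\Lambda\alpha{:}K.M$ we need the kind context to be well-formed), and conversely conversion at the term level invokes the equivalence judgment, which in turn uses kinding. The eight statements differ only in what pair of adjacent bindings is being swapped (term/term, type/term, term/type, type/type) and in the side condition needed to keep each binding well-scoped in its new position; the induction scheme is the same in every case.

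First I would dispatch the variable and type-variable lookup rules. For parts 1--2 the lookup $\Gamma, x{:}A_x, y{:}A_y\,(z) = C$ succeeds in exactly the same way in the swapped context $\Gamma, y{:}A_y, x{:}A_x$ because $x\neq y$ (otherwise the original context would shadow, which is ruled out by the side conditions implicit in context well-formedness from Figure~\ref{fig:app:well-formed-contexts}). The analogous observation handles $\alpha$ lookup in parts 3--8. Next I would handle the leaf rules that only require the context to be well-formed (e.g.\ $\Gamma \vdash \texttt{Bool} : \textit{Type}$, $\Gamma \vdash c : A$): here it suffices to show that if $\Gamma, \xi_1, \xi_2$ is well-formed then so is $\Gamma, \xi_2, \xi_1$, which follows from the side condition ``$\alpha\notin\mathit{FV}(A_y)$'' (parts 3--4) or ``$\alpha\notin\mathit{FV}(K_\beta)$'' (parts 7--8) exactly when a type variable is being moved past a binding that might mention it.

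For the congruence/inductive cases (application, abstraction, record operations, $\kw{for}$, $\kw{if}$, $\kw{rmap}$, $\kw{rfold}$, $\kw{tracecase}$, $\kw{typecase}$, the trace introduction rules, and the conversion rule), the pattern is uniform: apply the IH to each premise, using the kinding part of the mutual induction for premises of the form $\Gamma' \vdash A : K$ or $\Gamma' \vdash C = D : K$, and reassemble with the same rule. Abstraction and quantifier introduction rules push a fresh binding onto the right of the context; since the bound variable is fresh (Barendregt convention), it commutes with both bindings being swapped, so after invoking IH on the premise judged under $\Gamma, \xi_1, \xi_2, \xi_{\mathrm{new}}$ we get the premise under $\Gamma, \xi_2, \xi_1, \xi_{\mathrm{new}}$ as required. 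The $\kw{typecase}$ rule, whose premises bind $\beta$, $\rho$, $\gamma$, is handled the same way.

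The only real obstacle is bookkeeping around the side conditions in parts 3--4 and 7--8. When swapping $\alpha : K_\alpha$ past a later binding, the original derivation may have used the fact that $\alpha$ is in scope when kinding $A_y$ (for parts 3--4) or $K_\beta$ (for parts 7--8); the hypothesis ``$\alpha$ does not appear free'' is precisely what ensures $\Gamma \vdash A_y : \textit{Type}$ (resp.\ $\Gamma \vdash K_\beta$) without $\alpha$, so that $\Gamma, y{:}A_y, \alpha{:}K_\alpha$ is a well-formed context. I would state and prove as a small preliminary that well-formedness of a context is invariant under admissible swaps, then lift it through the mutual induction. No new machinery is needed; the argument is wholly structural, parallel to the standard exchange lemma for simply-typed systems extended to a dependent-style context with type variables.
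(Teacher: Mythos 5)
Your proof is correct and is exactly the standard exchange argument; the paper itself gives no proof of this lemma at all, merely invoking it as one of the ``standard context manipulation lemmas'' in Section~\ref{sec:preservation}, so your structural induction supplies precisely what the authors left implicit. The one bookkeeping point worth adding is that the conversion rule's premise $\Gamma \vdash A = B$ belongs to the separate type/constructor equivalence judgments (Figures~\ref{fig:app:constructor-equivalence} and~\ref{fig:app:type-equivalence}), so those judgments must be included as further components of your mutual induction rather than subsumed under ``the kinding part''; with that routine extension the argument goes through as you describe.
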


\begin{lemma}\label{lem:valuetf-inverse}
  For all query type constructors $C$ and row constructors $S$ and well-formed contexts $\Gamma$:
  \[ \Gamma \vdash \mathtt{VALUE}(\mathtt{TRACE}\ C) = C \] and
  \[ \Gamma \vdash \kw{Rmap}\ \mathtt{VALUE}\ (\kw{Rmap}\ \mathtt{TRACE}\ S) = S \]
\end{lemma}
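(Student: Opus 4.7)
The plan is to prove both claims simultaneously by mutual induction on the structure of query type constructors $C$ and row constructors $S$. By the convention established in the statement of Theorem \ref{thm:trace-type-correctness-2}, query type constructors are built from base types, $\texttt{List}^*$, and closed $\texttt{Record}^*$, and the row constructors under consideration are the closed sequences $\cdot$ and $l : C; S$ built from such $C$. In each case the strategy is to unfold $\mathtt{TRACE}$ and $\mathtt{VALUE}$ as the $\lambda$-abstracted $\kw{Typerec}$ that they are, exposing a $\beta$-redex whose reduct then matches one of the $\kw{Typerec}$ computation rules in Figure~\ref{fig:app:constructor-computation}, and finally close each chain of reductions into an equivalence judgment via the reflexive/transitive closure rules of Figure~\ref{fig:app:constructor-equivalence}.

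For the type constructor claim I would handle the three groups of cases separately. For a base type such as $\texttt{Bool}^*$, unfolding $\mathtt{TRACE}\,\texttt{Bool}^*$ via $\beta$ and the base-case $\kw{Typerec}$ rule gives $\texttt{Trace}^*\,\texttt{Bool}^*$; applying $\mathtt{VALUE}$ and using its $\texttt{Trace}^*$ branch $\lambda c\,\_.\,c$ then returns $\texttt{Bool}^*$ (analogously for $\inttyc$ and $\texttt{String}^*$). For $C = \texttt{List}^*\,C'$, the list case of $\kw{Typerec}$ rewrites $\mathtt{TRACE}\,(\texttt{List}^*\,C')$ to $\texttt{List}^*\,(\mathtt{TRACE}\,C')$, and $\mathtt{VALUE}$ similarly rewrites to $\texttt{List}^*\,(\mathtt{VALUE}\,(\mathtt{TRACE}\,C'))$, at which point the inductive hypothesis on $C'$ finishes the case. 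For $C = \texttt{Record}^*\,S$, the record case of $\kw{Typerec}$ produces $\texttt{Record}^*\,(\kw{Rmap}\,(\lambda\alpha.\kw{Typerec}\,\alpha\,\ldots)\,S)$, which is equivalent (by $\eta$-like reasoning on the $\kw{Typerec}$ reassembled under $\lambda$) to $\texttt{Record}^*\,(\kw{Rmap}\,\mathtt{TRACE}\,S)$; one more unfolding of $\mathtt{VALUE}$ yields $\texttt{Record}^*\,(\kw{Rmap}\,\mathtt{VALUE}\,(\kw{Rmap}\,\mathtt{TRACE}\,S))$, and the row claim (applied to $S$) collapses this back to $\texttt{Record}^*\,S$.

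For the row claim the induction is on $S$. When $S = \cdot$, the rule $\kw{Rmap}\,C\,\cdot \leadsto \cdot$ applied twice gives the result directly. When $S = (l : C'; S')$, two applications of the rule $\kw{Rmap}\,C\,(l : D; S) \leadsto (l : C\,D; \kw{Rmap}\,C\,S)$ reduce $\kw{Rmap}\,\mathtt{VALUE}\,(\kw{Rmap}\,\mathtt{TRACE}\,(l : C'; S'))$ to $(l : \mathtt{VALUE}\,(\mathtt{TRACE}\,C'); \kw{Rmap}\,\mathtt{VALUE}\,(\kw{Rmap}\,\mathtt{TRACE}\,S'))$, and the type constructor IH on $C'$ together with the row IH on $S'$ finish the case.

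The principal obstacle I anticipate is the bookkeeping around the $\texttt{Record}^*$ case of the first claim: the $\kw{Typerec}$ rule for records hands one back a constructor of the form $\kw{Rmap}\,(\lambda\alpha.\kw{Typerec}\,\alpha\,\ldots)\,S$ rather than $\kw{Rmap}\,\mathtt{TRACE}\,S$ directly, so one must argue that these are equivalent by $\beta$-equivalence on the abstracted body before the inductive hypothesis on the row claim can be invoked. The restriction to query type constructors is essential: a type variable $\alpha$ would leave $\mathtt{TRACE}\,\alpha$ as a neutral constructor with no further reduction, and a $\texttt{Trace}^*\,C$ subterm would make $\mathtt{VALUE}$ strip off the $\texttt{Trace}^*$ that $\mathtt{TRACE}$ never introduced, so the identity would fail; we therefore need to be careful to record these hypotheses explicitly in each case of the induction.
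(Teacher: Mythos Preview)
Your proposal is correct and follows essentially the same mutual induction on query type constructors and closed rows that the paper's own proof uses. You are in fact more careful than the paper about the $\texttt{Record}^*$ case, explicitly noting the gap between $\kw{Rmap}\,(\lambda\alpha.\kw{Typerec}\,\alpha\,\ldots)\,S$ and $\kw{Rmap}\,\mathtt{TRACE}\,S$ that the paper silently elides.
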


\begin{lemma}\label{lem:trace-non-base}
  For all query types $C$, \lstinline|TRACE| $C$ is not a base type.
\end{lemma}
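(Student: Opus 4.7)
The plan is to proceed by case analysis on the head constructor of the query type $C$, using the definition of \lstinline|TRACE| from Figure~\ref{fig:tracetf} together with the $\kw{Typerec}$ computation rules in Figure~\ref{fig:constructor-computation}. Recall that query types are built from the base types $\mathtt{Bool}$, $\mathtt{Int}$, $\mathtt{String}$, list constructors $\mathtt{List}^*$, and closed record constructors $\mathtt{Record}^*$ (with fields themselves of query type). It therefore suffices to check that in each of these cases, applying \lstinline|TRACE| produces a type whose head is not $\mathtt{Bool}^*$, $\inttyc$, or $\mathtt{String}^*$.

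First I would dispatch the three base cases by direct computation: for $C \in \{\mathtt{Bool}^*, \inttyc, \mathtt{String}^*\}$, the $\kw{Typerec}$ rules reduce $\mathtt{TRACE}\ C$ to $\mathtt{Trace}^*\ \mathtt{Bool}^*$, $\mathtt{Trace}^*\ \inttyc$, or $\mathtt{Trace}^*\ \mathtt{String}^*$ respectively, each of which has head $\mathtt{Trace}^*$ and hence is not a base type. Next, for $C = \mathtt{List}^*\ D$, the $\kw{Typerec}$ rule together with the $\lambda e\, e'.[e']$ branch gives $\mathtt{TRACE}\ (\mathtt{List}^*\ D) \leadsto \mathtt{List}^*\ (\mathtt{TRACE}\ D)$, whose head is $\mathtt{List}^*$. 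For $C = \mathtt{Record}^*\ S$, the record branch $\lambda r\, r'.\{r'\}$ yields $\mathtt{Record}^*\ (\kw{Rmap}\ (\lambda\alpha.\mathtt{TRACE}\ \alpha)\ S)$, whose head is $\mathtt{Record}^*$.

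Since every outcome of \lstinline|TRACE| on a query type is either a $\mathtt{Trace}^*$, $\mathtt{List}^*$, or $\mathtt{Record}^*$ constructor (up to the equivalence judgment of Figure~\ref{fig:app:constructor-equivalence}), and none of these are equivalent to $\mathtt{Bool}^*$, $\inttyc$, or $\mathtt{String}^*$, the lemma follows. Induction on $C$ is not strictly needed because we only inspect the outermost constructor, though one could phrase it inductively if the definition of ``query type'' is given inductively.

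The only subtlety I anticipate is being explicit about what ``is not a base type'' means in the presence of type-level computation: we must argue that $\mathtt{Trace}^*\ D$, $\mathtt{List}^*\ D$, and $\mathtt{Record}^*\ S$ are not equivalent under $\leadsto$ (Figure~\ref{fig:app:constructor-computation}) to any of the three base constructors. This is immediate by confluence/normalization of constructor reduction, since each already has a distinct head symbol that no reduction rule can change. Aside from stating this non-confusion property of the type-level calculus, the proof is routine computation.
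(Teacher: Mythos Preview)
Your proposal is correct and takes essentially the same approach as the paper: both argue by cases on the outermost constructor of the query type and observe that $\mathtt{TRACE}$ yields a $\mathtt{Trace}^*$, $\mathtt{List}^*$, or $\mathtt{Record}^*$ head, never a base constructor. You are slightly more explicit than the paper---you spell out the $\kw{Typerec}$ reductions and address the non-confusion of head symbols under constructor equivalence---and you correctly observe that induction is not actually needed, whereas the paper phrases the proof as an induction (and even mentions an induction hypothesis for the trace case that is never invoked, since traces are not query types).
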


\begin{definition}[Trace context]\label{def:trace-context}
  $\trace{\Gamma}$ maps term variable $x$ to \lstinline|T(TRACE $C$)| if and only if $\Gamma$ maps $x$ to $A$, where $C$ is the obvious constructor with $\cdot \vdash A = \texttt{T}(C)$.
\end{definition}

\begin{lemma}\label{lem:nrc-types-constr}
  For every query type $A$ made of base types, list constructors, and closed records, there exists $C$ such that $\Gamma \vdash A = \mathtt{T}(C)$ in a well-formed context $\Gamma$.
\end{lemma}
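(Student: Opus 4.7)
The plan is to prove Lemma~\ref{lem:nrc-types-constr} by straightforward structural induction on the query type $A$, using the type equivalence rules in Figure~\ref{fig:app:type-equivalence} to build the witnessing constructor at each step. The restriction that $A$ is built from base types, list constructors, and \emph{closed} records is essential: this excludes the need to produce row constructors for open row variables $\rho$, so at each step the constructor $C$ we build is entirely on the $^*$-constructor side (no appeal to $\mathtt{T}(\cdot)$ embeddings of open pieces is required).

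First I would dispatch the three base cases. For $A = \mathtt{Bool}$, take $C = \mathtt{Bool}^*$; the equivalence $\Gamma \vdash \mathtt{T}(\mathtt{Bool}^*) = \mathtt{Bool}$ holds directly by the rule in Figure~\ref{fig:app:type-equivalence}, provided $\Gamma$ is well-formed (which is in the hypothesis). The cases $A = \intty$ and $A = \mathtt{String}$ are identical, using $\inttyc$ and $\mathtt{String}^*$ respectively.

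For the list case $A = \mathtt{List}\ A'$, the induction hypothesis gives $C'$ with $\Gamma \vdash A' = \mathtt{T}(C')$. I would take $C \coloneqq \mathtt{List}^*\ C'$ and chain two equivalences: $\Gamma \vdash \mathtt{T}(\mathtt{List}^*\ C') = \mathtt{List}\ \mathtt{T}(C')$ by the list rule in Figure~\ref{fig:app:type-equivalence}, followed by congruence $\Gamma \vdash \mathtt{List}\ \mathtt{T}(C') = \mathtt{List}\ A'$ from the IH, and then transitivity. For closed records $A = \mathtt{Record}\ \langle \overline{l_i : A_i} \rangle$, I would perform a nested induction on the row, producing constructors $C_i$ with $\Gamma \vdash A_i = \mathtt{T}(C_i)$ from the outer IH and assembling the row constructor $S = \overline{l_i : C_i}$; the empty row is the base case for the inner induction. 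Then $C \coloneqq \mathtt{Record}^*\ S$ works by the record rule plus repeated use of the row congruence rule $\Gamma \vdash (l:C;S) = (l:D;S')$ promoted across $\mathtt{T}(\cdot)$.

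There is no real obstacle here; the only subtlety is making sure the inner induction over the row for the record case is framed correctly, since ``closed'' has to be read as ``contains no row variable $\rho$ and terminates in $\cdot$''. If the definition of query type were to allow open rows, this argument would fail because there is no constructor-side witness for a bare row variable that would round-trip through $\mathtt{T}(\cdot)$; the closedness assumption sidesteps exactly that difficulty and keeps the proof purely structural.
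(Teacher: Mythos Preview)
The paper does not actually give a proof of this lemma; it is listed among the auxiliary ``Additional properties'' in Appendix~\ref{sec:app:additional} and left implicit as routine. Your structural induction on $A$, with the nested induction on closed rows for the record case, is exactly the obvious argument and is correct; it would serve perfectly well as the omitted proof.
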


\subsection{Proof of Lemma~\ref{lem:valuetf-inverse}}
\begin{proof}
  By induction on query types $C$ and closed rows of query types $S$.
  \begin{itemize}
  \item Base types \texttt{Bool}$^*$, \intty$^*$, \texttt{String}$^*$:
    \[ \mathtt{VALUE}(\mathtt{TRACE}\ \mathtt{Bool}^* )
      = \mathtt{VALUE}(\mathtt{Trace}\ \mathtt{Bool}^* )
      = \mathtt{Bool}^*
    \]
  \item List types \texttt{List}$^*$ $D$:
    \begin{align*}
      \mathtt{VALUE}(\mathtt{TRACE}\ (\mathtt{List}^*\ D))
      &= \mathtt{VALUE}(\mathtt{List}^*\ (\mathtt{TRACE}\ D)) \\
      &= \mathtt{List}^*\ (\mathtt{VALUE}(\mathtt{TRACE}\ D)) \\
      &= \mathtt{List}^*\ D
    \end{align*}
  \item Record types \texttt{Record}$^*\ S$:
    \begin{align*}
      \mathtt{VALUE}(\mathtt{TRACE}\ (\mathtt{Record}^*\ S))
      &= \mathtt{VALUE}(\mathtt{Record}^* (\kw{Rmap}\ \mathtt{TRACE}\ S)) \\
      &= \mathtt{Record}^*\ (\kw{Rmap}\ \mathtt{VALUE}\ (\kw{Rmap}\ \mathtt{TRACE}\ S)) \\
      &= \mathtt{Record}^*\ S
    \end{align*}
  \item Empty row $\cdot$: $\kw{Rmap}\ \mathtt{VALUE}\ (\kw{Rmap}\ \mathtt{TRACE}\ \cdot) = \cdot$
  \item Row cons $(l:A, S)$:
    \begin{align*}
      & \kw{Rmap}\ \mathtt{VALUE}\ (\kw{Rmap}\ \mathtt{TRACE}\ (l:A, S)) \\
      =& \kw{Rmap}\ \mathtt{VALUE}\ (l:\mathtt{TRACE}\ A, \kw{Rmap}\ \mathtt{TRACE}\ S) \\
      =& (l:\mathtt{VALUE}\ (\mathtt{TRACE}\ A), \kw{Rmap}\ \mathtt{VALUE}\ (\kw{Rmap}\ \mathtt{TRACE}\ S)) \\
      =& (l:A, \kw{Rmap}\ \mathtt{VALUE}\ (\kw{Rmap}\ \mathtt{TRACE}\ S)) \\
      =& (l:A, S)
    \end{align*}\qedhere
  \end{itemize}
\end{proof}

\subsection{Proof of Lemma~\ref{lem:trace-non-base}}
\begin{proof}
  By induction on query types $C$ made up from base types, lists, and closed records.
  Applying \lstinline|TRACE| to base types \lstinline|Bool|, \lstinline|Int|, and \lstinline|String| results in traced base types \lstinline|Trace Bool|, \lstinline|Trace Int|, and \lstinline|Trace String|, respectively.
  List types are guarded by the \lstinline|List| type constructor, and similarly for records.
  Traces are not query types, but if they were, the induction hypothesis would apply.
\end{proof}

\subsection{Proof of Lemma~\ref{lem:dist-type-correctness}}
\begin{proof}
  By induction on the query type $C$.

  \begin{itemize}
  \item The base cases are $\texttt{Bool}$, $\texttt{Int}$, and $\texttt{String}$.
    For any base type $O$ out of these, we have $\texttt{TRACE}\ O = \texttt{Trace}\ O$.
    We have $\mathit{dist}(\texttt{Trace}\ O, k, t) = k[\mathbb{H} \coloneqq t]$ and need to show that it has type $\texttt{Trace}\ O$.
    Both $t$ and $\mathbb{H}$ have type $\texttt{Trace}\ O$, so substituting one for the other in $k$ does not change the type (Lemma \ref{lem:substitution-lemmas}).

  \item Case $C = \texttt{List}\ (\texttt{TRACE}\ C')$:
    We need the right-hand side $\kw{for}\ (x \leftarrow l)\ [\mathit{dist}(\texttt{TRACE}\ C', k, x)] $ to have type $\texttt{TRACE}\ (\texttt{List}\ C')$.
    We use the rules for comprehension and singleton list.
    We now need to show that $\mathit{dist}(\texttt{TRACE}\ C', k, x)$ has type $\texttt{TRACE}\ C'$ which is true by induction hypothesis with the same $k$.

  \item Case $C = \langle \overline{l:\texttt{TRACE}\ C'} \rangle$:
    The right-hand side $\langle \overline{ l = \mathit{dist}(\texttt{TRACE}\ C',k,r.l) }\rangle$ needs to have type $\langle \overline{l:\texttt{TRACE}\ C'} \rangle$.
    Thus, by record construction and record projection, we need each of the expressions $\mathit{dist}(\texttt{TRACE}\ C',k,r.l)$ to have type $\texttt{TRACE}\ C'$ which they do by induction hypothesis.
    \qedhere
  \end{itemize}
\end{proof}

\subsection{Proof of Theorem~\ref{thm:trace-type-correctness-2}}
\begin{proof}
  By induction on the typing derivation for $M : \texttt{T}(C)$.
  Almost all cases require that some subterms have a type $\mathtt{T}(C')$ that is equal to some query type $A$.
  We can obtain this constructor $C'$ by Lemma~\ref{lem:nrc-types-constr}.

  \begin{itemize}
  \item Case $\infer { Γ(x) = A } { Γ ⊢ x : A }$:
    $
      \infer
      {\trace{\Gamma}(x) = \mathtt{T}(\mathtt{TRACE}\ C) \quad \text{(Definition \ref{def:trace-context})}}
      {\trace{\Gamma} \vdash x : \mathtt{T}(\mathtt{TRACE}\ C)}
    $

  \item Literals $c$ have base types \lstinline|Bool|, \lstinline|Int|, or \lstinline|String|.
    Their traces \lstinline|Lit| $c$ have types \lstinline|Trace Bool|, \lstinline|Trace Int|, or \lstinline|Trace String|, respectively.

  \item Case $ \infer
    { Γ ⊢ L : \texttt{Bool} \\
      Γ ⊢ M : A \\
      Γ ⊢ N : A }
    { Γ ⊢ \kw{if}~L~\kw{then}~M~\kw{else}~N : A }$:

    The right hand side of the self-tracing transform is another \kw{if-then-else} with condition $\texttt{value (Trace Bool)}\ \trace{L}$ and then-branch
    \[
      \mathit{dist}(\texttt{TRACE}\ C, \texttt{If}\ \langle \texttt{cond} = \trace{L}, \texttt{out} = \mathbb{H} \rangle, \trace{M})
    \] and similar else-branch.

    In the condition, we apply $\texttt{value}: \forall\alpha.\texttt{T}(\alpha) \rightarrow \texttt{T}(\texttt{VALUE}\ \alpha)$ to a subtrace of type \lstinline|TRACE Bool| by induction hypothesis.
    Therefore it has type \lstinline|VALUE (TRACE Bool)| which is equal to \lstinline|Bool| by Lemma~\ref{lem:valuetf-inverse}.

    For all base types $D$, $\texttt{If}\ \langle \texttt{cond} = \trace{L}, \texttt{out} = \mathbb{H} \rangle$ has type $\texttt{Trace}\ D$ assuming $\mathbb{H}:\texttt{Trace}\ D$.
    We have $\trace{M}: \mathtt{T}(\mathtt{TRACE}\ C)$ by IH.
    Therefore, by Lemma~\ref{lem:dist-type-correctness}, the whole term obtained by \textit{dist} has type $\texttt{TRACE}\ C$.
    The else-branch is analogous and the whole expression has type $\mathtt{T}(\mathtt{TRACE}\ C)$.

  \item Case $ \infer { } { Γ ⊢ [] : \texttt{List}\ A } $:
    \[
      \infer*
      {\infer*
        {\infer*
          { \trace{Γ} ⊢ \mathtt{T}(\mathtt{TRACE}\ C) : \mathit{Type} \text{ using } A = \mathtt{T}(C) }
          {
            \trace{Γ} ⊢ [] : \texttt{List}\ \mathtt{T}(\mathtt{TRACE}\ C)
          }
        }
        {\trace{Γ} ⊢ [] : \mathtt{T}(\texttt{List}^*\ (\mathtt{TRACE}\ C))}
      }
      {\trace{Γ} ⊢ [] : \mathtt{T}(\mathtt{TRACE}\ (\texttt{List}^*\ C))}
    \]

  \item Case $ \infer { Γ ⊢ M : A } { Γ ⊢ [M] : \texttt{List}\ A } $:
    \[
      \infer*
      { \infer*
        { \infer*
          { \text{IH} }
          { \trace{Γ} ⊢ \trace{M} : \mathtt{T}(\mathtt{TRACE}\ C)}
        }
        {
          \trace{Γ} ⊢ [\trace{M}]: \texttt{List}\ \mathtt{T}(\mathtt{TRACE}\ C)
        }
      }
      {
        \trace{Γ} ⊢ [\trace{M}]: \mathtt{T}(\mathtt{TRACE}\ (\texttt{List}^*\ C))
      }
    \]

  \item Case $ \infer
    { Γ ⊢ M : \texttt{List}\ A \and Γ ⊢ N : \texttt{List}\ A }
    { Γ ⊢ M \concat N  : \texttt{List}\ A }$:

    \[
      \infer*
      {
        \infer*
        {
          \infer*
          {
            \infer*
            { \text{IH} }
            { \trace{Γ} ⊢ \trace{M} : \mathtt{T}(\mathtt{TRACE}\ (\texttt{List}^*\ C)) }
          }
          { \trace{Γ} ⊢ \trace{M} : \texttt{List}\ \mathtt{T}(\mathtt{TRACE}\ C) } \and \text{ analogous for $N$}
        }
        { \trace{Γ} ⊢ \trace{M} \concat \trace{N} : \texttt{List}\ \mathtt{T}(\mathtt{TRACE}\ C) }
      }
      { \trace{Γ} ⊢ \trace{M} \concat \trace{N} : \mathtt{T}(\mathtt{TRACE}\ (\texttt{List}^*\ C)) }
    \]

  \item Case $ \infer
    { Γ ⊢ M : \texttt{List}\ B \\ Γ,x:B ⊢ N : \texttt{List}\ A }
    { Γ ⊢ \kw{for}~(x \leftarrow M)~N : \texttt{List}\ A } $:

    \[\footnotesize
      \infer*
      {
        \infer*
        { \infer*
          { \infer*
            { \text{IH} }
            { \trace{Γ} ⊢ \trace{M} : \mathtt{T}(\mathtt{TRACE}\ (\texttt{List}^*\ D)) }
          }
          { \trace{Γ} ⊢ \trace{M} : \mathtt{List}\ \mathtt{T}(\mathtt{TRACE}\ D) }
          \and
          \infer* { \star } {\trace{Γ},x:\mathtt{T}(\mathtt{TRACE}\ D) ⊢ b : \mathtt{List}\ \mathtt{T}(\mathtt{TRACE}\ C)} }
        { \trace{Γ} ⊢ \kw{for}~(x \leftarrow \trace{M})\ b : \texttt{List}\ \mathtt{T}(\mathtt{TRACE}\ C) }
      }
      { \trace{Γ} ⊢ \kw{for}~(x \leftarrow \trace{M})\ b : \mathtt{T}(\mathtt{TRACE}\ (\texttt{List}^*\ C)) }
    \]
    where $b = \mathit{dist}(\texttt{TRACE}\ C, \texttt{For}\ D\ \langle \texttt{in} = x, \texttt{out} = \mathbb{H} \rangle, \trace{N})$ and
    $\star$ follows from the induction hypothesis applied to $\trace{N}$ and Lemma~\ref{lem:dist-type-correctness}.

  \item The case for records is similar to that for list concatenation, in that we have multiple subtraces where the induction hypothesis applies, we just collect them into a record instead of another list concatenation.

  \item Case record projection: The projection was well-typed before tracing, so the record term $M$ contains label $l$ with some type $A$.
    By induction hypothesis and $A = \mathtt{T}(\mathtt{TRACE}\ C)$ the trace of $M$ contains label $l$ with type \lstinline|TRACE $C$|.
    \[
        \infer
        { \infer
          { \text{IH} }
          { \trace{Γ} ⊢ \trace{M} : \langle l^\bullet: \texttt{T(TRACE $C$)}, \dots \rangle } }
        { \trace{Γ} ⊢ \trace{M}.l : \texttt{T(TRACE $C$)} }
      \]

  \item Case \kw{table}: This is a slightly more complicated version of the base case for constants.
    We essentially map the \texttt{Cell} trace constructor over every table cell.
    Thus we go from a list of records of base types to a list of records of \texttt{Trace}d base types.

    {\small
      \[\infer
        { \infer
          { \trace{Γ} ⊢ \kw{table}~\dots \\
            \infer*
            { \infer
              { \trace{Γ}, y:\langle\overline{l :C}\rangle ⊢ y.l : C
              }
              { \star }
            }
            { \trace{Γ}, y:\langle\overline{l :C}\rangle ⊢ [\langle\overline{l = \mathit{cell}(n, l, y.\texttt{oid}, y.l)}\rangle ] : [\langle\overline{l : \texttt{Trace}~C}\rangle] }
          }
          { \trace{Γ} ⊢ \kw{for}\ (y \leftarrow \kw{table}~n~\langle \overline{l : C} \rangle)~[\langle\overline{l = \mathit{cell}(n, l, y.\texttt{oid}, y.l)}\rangle ] : [\langle\overline{l : \texttt{Trace}~C}\rangle] }
        }
        { \trace{Γ} ⊢ \kw{for}\ (y \leftarrow \kw{table}~n~\langle \overline{l : C} \rangle)~[\langle\overline{l = \mathit{cell}(n, l, y.\texttt{oid}, y.l)}\rangle ] : \texttt{T(TRACE [$\langle\overline{l :C}\rangle$])}
        }
      \]
    }

    There are a couple of steps missing at $\star$.
    The singleton list step is trivial.
    Then we have one precondition for each column in the table.
    Recall that \textit{cell} is essentially an abbreviation for \lstinline!Cell!, which records table name, column name, row number, and the actual cell data in a trace.
    We use the table name $n$ and the record label $l$ as string values for the table and column fields.
    We enforce in the typing rules that every table has the \lstinline|oid| column of type \texttt{Int}.

  \item Case equality:
    {\small
      \[
        \infer
        { \infer
          { }
          { \trace{Γ} ⊢ C : \mathit{Type}} \\
          \infer
          { \text{IH} }
          { \trace{Γ} ⊢ \trace{M} : \texttt{T(TRACE $C$)}} \\
          \infer
          { \text{IH} }
          { \trace{Γ} ⊢ \trace{N} : \texttt{T(TRACE $C$)}} \\
        }
        { \trace{Γ} ⊢ \texttt{OpEq}\ C\ \langle \texttt{left} = \trace{M}, \texttt{right} = \trace{N} \rangle : \texttt{Trace Bool} }
      \]
    }

  \item Case plus, with liberal application of \lstinline!T(TRACE Int) = Trace Int!:
    \[
      \infer
      { \infer
        { \text{Induction hypothesis} }
        { \trace{Γ} ⊢ \trace{M} : \texttt{T(TRACE Int)} } \\
        \infer
        { \text{Induction hypothesis} }
        { \trace{Γ} ⊢ \trace{N} : \texttt{T(TRACE Int)} } \\
      }
      { \trace{Γ} ⊢ \texttt{OpPlus}\ \langle \texttt{left} = \trace{M}, \texttt{right} = \trace{N} \rangle : \texttt{T(TRACE Int)} }
    \]
    \qedhere
  \end{itemize}
\end{proof}

\subsection{Proof of Lemma~\ref{lem:preservation-constructor}}\label{sec:prf:preservation-constructor}
\begin{proof}
  By induction on the kinding derivation.
  We look at the possible reductions (see Figure \ref{fig:app:constructor-computation}).
  Congruence rules allow for reduction in rows, function bodies, applications, list, trace, record, row map, and typerec. These all follow directly from the induction hypothesis.
  The remaining cases are:
  \begin{itemize}
  \item $(\lambda\alpha:K.C)\ D \leadsto C[\alpha \coloneqq D]$: by Lemma \ref{lem:substitution-lemmas}.
  \item $\kw{Rmap}\ C\ \cdot \leadsto \cdot$: both sides have kind \textit{Row}.
  \item $\kw{Rmap}\ C\ (l:D; S) \leadsto (l: C\ D; \kw{Rmap}\ C\ S)$: from the induction hypothesis we have that
    $C$ has kind $\mathit{Type} \rightarrow \mathit{Type}$,
    $D$ has kind $\mathit{Type}$, and $S$ has kind \textit{Row}.
    Therefore $C\ D$ has kind \textit{Type} and the whole right-hand side has kind \textit{Row}.
  \item Typerec $\beta$-rules:
    \begin{itemize}
    \item Base type right hand sides have kind \textit{Type} by IH.
    \item Lists:
      \[ \kw{Typerec}~\texttt{List}^*~D~(C_B, C_I, C_S, C_L, C_R, C_T) \leadsto C_L\ D\ (\kw{Typerec}~D~(C_B, C_I, C_S, C_L, C_R, C_T))\]

      $C_L$ has kind $\mathit{Type} \rightarrow K \rightarrow K$ by IH. $D$ has kind \textit{Type} by IH, and the typerec expression has kind $K$.

    \item Records:
      \[ \kw{Typerec}~\texttt{Record}^*~S~(C_B, C_I, C_S, C_L, C_R, C_T) \leadsto C_R\ S\ (\kw{Rmap}\ (\lambda \alpha.\kw{Typerec}~\alpha~(C_B, C_I, C_S, C_L, C_R, C_T))\ S) \]

      $C_L$ has kind $\mathit{Row} \rightarrow \mathit{Row} \rightarrow K$ by IH.
      $S$ has kind \textit{Row} by IH.
      The row map expression has kind \textit{Row}, because the type-level function has kind $\mathit{Type} \rightarrow \mathit{Type}$.

    \item The trace case is analogous to the list case.
      \qedhere
    \end{itemize}
  \end{itemize}
\end{proof}

\subsection{Proof of Lemma~\ref{lem:preservation}}\label{sec:prf:preservation}
\begin{proof} By induction on the typing derivation $\Gamma \vdash M : A$.
  Constants, variables, empty lists, and empty records do not reduce.
  We omit discussion of the cases that follow directly from the induction hypothesis, Lemma \ref{lem:preservation-constructor}, and congruence rules (see Figure \ref{fig:app:normalization-rules-congruence}), like $M + N$ being able to reduce in both $M$ and~$N$.
  The remaining, interesting reduction rules are the $\beta$-rules in Figure~\ref{fig:app:normalization-beta} and the commuting conversions in Figure~\ref{fig:app:normalization-cc}.
  We discuss them grouped by the relevant typing rule.
  \begin{itemize}
  \item Function application:
    \begin{itemize}
    \item $(\lambda x.M)\ N \leadsto M[x \coloneqq N]$: follows from Lemma \ref{lem:substitution-lemmas}.
    \item $(\kw{if}~L~\kw{then}~M_1~\kw{else}~M_2)\ N \leadsto \kw{if}~L~\kw{then}~M_1\ N~\kw{else}~M_2\ N$:

      We have: \[\footnotesize
        \infer*
        { \infer*
          { Γ ⊢ L : \mathtt{Bool} \and Γ ⊢ M_1 : A \rightarrow B \and Γ ⊢ M_2 : A \rightarrow B}
          { Γ ⊢ \kw{if}~L~\kw{then}~M_1~\kw{else}~M_2 : A \rightarrow B } \and Γ ⊢ N : A }
        { Γ ⊢ (\kw{if}~L~\kw{then}~M_1~\kw{else}~M_2)\ N : B }
      \]
      and can therefore show: \[\footnotesize
        \infer*
        { Γ ⊢ L : \mathtt{Bool} \and
          \infer* { Γ ⊢ M_1 : A \rightarrow B \and Γ ⊢ N : A } { Γ ⊢ M_1\ N : B } \and
          \infer* { Γ ⊢ M_2 : A \rightarrow B \and Γ ⊢ N : A } { Γ ⊢ M_2\ N : B }
        }
        { Γ ⊢ \kw{if}~L~\kw{then}~M_1\ N~\kw{else}~M_2\ N : B }
      \]
    \end{itemize}

  \item Type instantiation:
    \begin{itemize}
    \item $(\Lambda \alpha.M)\ C \leadsto M[\alpha \coloneqq C]$: follows from the constructor substitution lemma (Lemma \ref{lem:substitution-lemmas}). 
    \item $(\kw{if}~L~\kw{then}~M_1~\kw{else}~M_2)\ C \leadsto \kw{if}~L~\kw{then}~M_1\ C~\kw{else}~M_2\ C$: hoisting if-then-else out of the term works the same as application above.
    \end{itemize}

  \item Fixpoint: follows from the substitution lemma (Lemma \ref{lem:substitution-lemmas}).
  \item If-then-else: if the condition is a Boolean constant, the expression reduces to the appropriate branch, which has the correct type by IH.
    The commuting conversion for lifting if-then-else out of the condition is type-correct by IH and rearranging of if-then-else rules.
  \item List comprehensions:
    \begin{itemize}
    \item The if-then-else commuting conversion is as before.
    \item $\kw{for}~(x \leftarrow [])~N \leadsto []$: $[]$ has any list type and $N$ has a list type.
    \item $ \kw{for}~(x \leftarrow [M])~N \leadsto N[x \coloneqq M] $: by substitution (Lemma \ref{lem:substitution-lemmas}).
    \item $ \kw{for}~(x \leftarrow M_1 \concat M_2)~N \leadsto (\kw{for}~(x \leftarrow M_1)~N) \concat (\kw{for}~(x \leftarrow M_2)~N) $: reorder rules.
    \item $\kw{for}~(x \leftarrow \kw{for}~(y \leftarrow L)~M)~N \leadsto \kw{for}~(y \leftarrow L)~\kw{for}~(x \leftarrow M)~N $:

      We have:
      \[ \infer*
        { \infer*
          { Γ ⊢ L : [A_L] \\
            Γ, y:A_L ⊢ M : [A_M] }
          { Γ ⊢ \kw{for}~(y \leftarrow L)~M : [A_M]
          } \and Γ,x:A_M ⊢ N : [A_N]
        }
        { Γ ⊢ \kw{for}~(x \leftarrow \kw{for}~(y \leftarrow L)~M)~N : [A_N] }
      \]

      We need:
      \[ \infer*
        { Γ ⊢ L:[A_L] \and
          \infer*
          { Γ, y:A_L ⊢ M:[A_M] \and
            Γ, y:A_L, x:A_M ⊢ N : [A_N] }
          { Γ, y:A_L ⊢ \kw{for}~(x \leftarrow M)~N : [A_N] }
        }
        { Γ ⊢ \kw{for}~(y \leftarrow L)~\kw{for}~(x \leftarrow M)~N : [A_N] }
      \]

      We obtain $Γ, y:A_L, x:A_M ⊢ N : [A_N]$ from $Γ,x:A_M ⊢ N : [A_N]$ by weakening (Lemma \ref{lem:weakening}) and context swap (Lemma \ref{lem:context-swap}).
    \end{itemize}

  \item Projection: The $\beta$ rule is obvious, the if-then-else commuting conversion is as before.

  \item Type equality $\infer { Γ ⊢ N : B \\ Γ ⊢ A = B } { Γ ⊢ N : A }$: for all $N'$ with $N \leadsto N'$ we have that $\Gamma \vdash N' : B$ by the induction hypothesis. We also know that $Γ ⊢ A = B$, so $\Gamma \vdash N' : A$ by this typing rule and symmetry of type equality.

  \item Case \kw{rmap}:
    Typing rule:
    \[    \infer
      { Γ ⊢ M : \forall \alpha: \textit{Type}. \texttt{T}(\alpha) \rightarrow \texttt{T}(C\ \alpha) \\
        Γ ⊢ N : \texttt{T}(\texttt{Record}^*\ S)}
      { Γ ⊢ \kw{rmap}^S~M~N : \texttt{T}(\texttt{Record}^*\ (\kw{Rmap}\ C \ S))}
    \]
    Reduction rule:
    \[
      \kw{rmap}^{\langle \overline{l_i:C_i}  \rangle}\ M\ N
      \leadsto \langle \overline{l_i = (M\ C_i)\ N.l_i } \rangle
    \]
    Need to show that $\langle \overline{l_i = (M\ C_i)\ N.l_i } \rangle : \texttt{T}(\texttt{Record}^*\ (\kw{Rmap}\ C \ \langle \overline{l_i:C_i}  \rangle))$.
    By row type constructor evaluation, that type equals $\texttt{T}(\texttt{Record}^*\ \langle \overline{l_i:C\ C_i}  \rangle)$, which is the obvious type of $\langle \overline{l_i = (M\ C_i)\ N.l_i } \rangle$.

  \item Case \kw{rfold}:     Typing rule:
    \[    \infer
    { Γ ⊢ L: \texttt{T}(C) \rightarrow \texttt{T}(C) \rightarrow \texttt{T}(C) \\
      Γ ⊢ M : \texttt{T}(C) \\
      Γ ⊢ N : \texttt{T}(\texttt{Record}^*\ (\kw{Rmap}\ (\lambda\alpha.\alpha \rightarrow C)\ S)) }
    { Γ ⊢ \kw{rfold}^S\ L\ M\ N:\texttt{T}(C) }
  \]
  Reduction rule:
  \[
    \kw{rfold}^{(\overline{l_i: C_i})}\ L\ M\ N \leadsto L\ N.l_1\ (L\ N.l_2 \hdots (L\ N.l_n\ M) \hdots)
  \]
  Need to show that $L\ N.l_1\ (L\ N.l_2 \hdots (L\ N.l_n\ M) \hdots)$ has type \lstinline|T($C$)|.
  $M$ has type \lstinline|T($C$)|.
  $L$ has type $\texttt{T}(C) \rightarrow \texttt{T}(C) \rightarrow \texttt{T}(C)$.
  Each $N.l_i$ has type \lstinline|T($C$)|, because $N$ has a record type obtained by mapping the constant function with result $C$ over row $S$.

  \item Typecase typing rule:
    \[ \infer
      { Γ ⊢ C : \textit{Type} \\
        Γ, \alpha: \textit{Type} ⊢ B : \textit{Type} \\
        \beta, \rho, \gamma \notin \textit{Dom}(Γ) \\
        Γ ⊢ M_B : B[\alpha \coloneqq \texttt{Bool}^*] \\
        Γ ⊢ M_I : B[\alpha \coloneqq \inttyc] \\
        Γ ⊢ M_S : B[\alpha \coloneqq \texttt{String}^*] \\
        Γ, \beta : \textit{Type} ⊢ M_L : B[\alpha \coloneqq \texttt{List}^*\ \beta] \\
        Γ, \rho : \textit{Row} ⊢ M_R : B[\alpha \coloneqq \texttt{Record}^*\ \rho] \\
        Γ, \gamma : \textit{Type} ⊢ M_T : B[\alpha \coloneqq \texttt{Trace}^*\ \gamma]
      }
      { Γ ⊢ \kw{typecase}^{\alpha.B}~C~\kw{of}\ (M_B, M_I, M_S, \beta.M_L, \rho.M_R, \gamma.M_T) : B[\alpha \coloneqq C] } \]

    Reduction rules:
    \begin{itemize}
    \item $\kw{typecase}~\texttt{Bool}^*~\kw{of}\ (M_B, M_I, M_S, \beta.M_L, \rho.M_R, \gamma.M_T) \leadsto M_B$

      Need to show that $M_B:B[\alpha \coloneqq \texttt{Bool}^*]$, which is one of our hypotheses.
    \item $\kw{typecase}~\texttt{List}^*~C~\kw{of}\ (M_B, M_I, M_S, \beta.M_L, \rho.M_R, \gamma.M_T) \leadsto M_L[\beta \coloneqq C]$

      Need to show that the result of reduction $M_L[\beta \coloneqq C]$ has type $B[\alpha \coloneqq \texttt{List}^*\ C]$, the same as the typing rule.

      \[
        \infer { }
        { \Gamma \vdash M_L[\beta \coloneqq C] : B[\alpha \coloneqq \texttt{List}^*\ C] }
      \]

      Instantiating the constructor substitution lemma (Lemma \ref{lem:substitution-lemmas}) gives us
      \[ \Gamma[\beta \coloneqq C] \vdash M_L[\beta \coloneqq C] : (B[\alpha \coloneqq \texttt{List}\ \beta])[\beta \coloneqq C]
      \]
      from $\Gamma, \alpha: \mathit{Type} \vdash B: \mathit{Type}$ and $\beta \notin \textit{Dom}(Γ)$ we know that neither $B$ nor $\Gamma$ can contain $\beta$.
      Thus the only substitution for $\beta$ we need to perform is in the substitution for $\alpha$ and we can reassociate substitution like this:
      \[ \Gamma \vdash M_L[\beta \coloneqq C] : B([\alpha \coloneqq \texttt{List}\ \beta][\beta \coloneqq C])
      \]
      which is the same as
      \[ \Gamma \vdash M_L[\beta \coloneqq C] : B[\alpha \coloneqq \texttt{List}\ C] \]
    \end{itemize}
    The other cases are analogous.

  \item Case \lstinline|tracecase|: Typing rule: {\footnotesize \[    \infer
        { Γ ⊢ M : \texttt{Trace}\ A \\
          Γ, x_L : A ⊢ M_L : B \\
          Γ, x_I : \langle\mathsf{cond}: \texttt{Trace}\ \texttt{Bool}, \mathsf{then}: \texttt{Trace}\ A\rangle ⊢ M_I : B \\
          Γ, \alpha_F: \textit{Type}, x_F : \langle\mathsf{in}: \texttt{T}(\texttt{TRACE}\ \alpha_F), \mathsf{out}: \texttt{Trace}\ A\rangle ⊢ M_F : B \\
          Γ, x_C : \langle\mathsf{table}: \texttt{String}, \mathsf{column}: \texttt{String}, \mathsf{row}: \intty, \mathsf{data} : A\rangle ⊢ M_C : B \\
          Γ, \alpha_E : \textit{Type}, x_E : \langle\mathsf{left}: \texttt{T}(\texttt{TRACE}\ \alpha_E), \mathsf{right}: \texttt{T}(\texttt{TRACE}\ \alpha_E)\rangle ⊢ M_E : B \\
          Γ, x_P : \langle\mathsf{left}: \texttt{Trace}\ \intty, \mathsf{right}: \texttt{Trace}\ \intty\rangle ⊢ M_P : B \\
        }
        { Γ ⊢ \kw{tracecase}~M~\kw{of}\ (x_L.M_L, x_I.M_I, \alpha_F.x_F.M_F, x_C.M_C, \alpha_E.x_E.M_E, x_P.M_P) : B }
      \]}

    Reductions:
    \begin{itemize}
    \item $ \kw{tracecase}~\texttt{For}~C~M~\kw{of}\ (x.M_L, x.M_I, \alpha.x.M_F, x.M_C, \alpha.x.M_E, x.M_P) \leadsto M_F[\alpha \coloneqq C, x \coloneqq M] $

      We need to show
      $\infer{\star}{\Gamma \vdash M_F[\alpha \coloneqq C, x \coloneqq M] : A}$

      $\star$: We only need $M: \langle \mathsf{in}: \dots \rangle$ and $C: \mathit{Type}$, which we get by inversion of the typing rule for \lstinline|For| and the substitution lemmas.
    \end{itemize}
    The other cases are analogous.
    \qedhere
  \end{itemize}
\end{proof}

\subsection{Proof of Lemma~\ref{lem:progress-constructor}}\label{sec:prf:progress-constructor}
\begin{proof}
  By induction on the kinding derivation of $C$ or $S$ (see Figure \ref{fig:app:constructor-formation}).
  \begin{itemize}
  \item Base types \texttt{Bool}, \intty, \texttt{String} are in normal form.
  \item Type variables $\alpha$ are in normal form.
  \item Type-level functions $\lambda \alpha.C$: by IH, either $C \leadsto C'$, in which case $\lambda \alpha.C \leadsto \lambda \alpha.C'$, or $C$ is in normal form already, in which case $\lambda \alpha.C$ is in normal form, too.
  \item Type-level application $C\ D$: by IH either $C$ or $D$ may reduce, in which case the whole application reduces. Otherwise, $C$ and $D$ are in normal form.
    The following cases of $C$ do not apply, because they are ill-kinded: base types, lists, records, and traces.
    If $C$ is a normal form and a variable, application, or typerec then $C$ is a neutral form and $D$ is a normal form so $C\ D$ is a neutral (and normal) form.
    Finally, if $C$ is a type-level function, the application $\beta$-reduces.
  \item List types: by IH either the argument reduces, or is in normal form already.
  \item Record types: by IH either the argument (a row) reduces, or is in normal form already.
  \item Trace types: by IH either the argument reduces, or is in normal form already.

  \item $\kw{Typerec}\ C\ \kw{of} (C_B, C_I, C_S, \alpha.C_L, \rho.C_R, \alpha.C_T)$: by IH, either $C \leadsto C'$, in which case \lstinline|Typerec| reduces with a congruence rule, or $C$ is in one of the following normal forms:
    \begin{itemize}
    \item If $C$ is a base, list, record, or trace constructor, the \lstinline|Typerec| expression $\beta$-reduces to the respective branch.
    \item $C$ cannot be a type-level function, that would be ill-kinded.
    \item If $C$ is one of the following neutral forms: variables, applications, and \kw{Typerec}, then by IH the branches $C_B$, $C_I$, etc. either reduce and a congruence rule applies, or they are all in normal form and $\kw{Typerec}\ C\ \kw{of} (C_B, C_I, C_S, \alpha.C_L, \rho.C_R, \alpha.C_T)$ is in normal form.
    \end{itemize}


  \item The empty row $\cdot$ is in normal form.
  \item Row extensions $l:C;S$: by IH applied to $C$ and $S$ we have three cases:
    \begin{itemize}
    \item If $C \leadsto C'$, then $l:C;S \leadsto l:C';S$.
    \item If $S \leadsto S'$, then $l:C;S \leadsto l:C;S'$.
    \item If $C$ and $S$ are in normal form, then $l:C;S$ is in normal form.
    \end{itemize}

  \item $\kw{Rmap}\ C\ S$: we apply the induction hypothesis to $S$ and $C$. If either $C$ or $S$ takes a step, the whole row map expression takes a step via the respective congruence rule.
    Otherwise $S$ is in one of the following normal forms:
    \begin{itemize}
    \item Case empty row: $\kw{Rmap}\ C\ \cdot \leadsto \cdot$
    \item Case $l:D;S'$: $\kw{Rmap}\ C\ (l:D;S') \leadsto (l:C\ D; \kw{Rmap}\ C\ S')$.
    \item Case $\kw{Rmap}\ D\ U$: $\kw{Rmap}\ C\ (\kw{Rmap}\ D\ U)$ is in normal form.
    \item Case $\rho$: $\kw{Rmap}\ C\ \rho$ is in normal form.
    \end{itemize}
  \item The row variable $\rho$ is in normal form.
    \qedhere
  \end{itemize}
\end{proof}

\subsection{Proof of Lemma~\ref{lem:progress}}\label{sec:prf:progress}
\begin{proof} By induction on the typing derivation of $M$.

\begin{itemize}
\item Constants: in normal form.
\item Term variables: in normal form.
\item Term function: apply IH to body and either reduce or in normal form.
\item Fixpoint: we can always take a step by unrolling once.

\item Term application $M\ N$: apply induction hypothesis to $M$. If $M$ reduces to $M'$, then $M\ N$ reduces to $M'\ N$. Otherwise, $M$ is in \TLinks normal form. It cannot be any of the following, because these would be ill-typed: constants, type abstraction, operators, record introduction forms including record map, list introduction forms, trace introduction forms.
  In the following cases, we apply the induction hypothesis to $N$ and either reduce to $M\ N'$ or are in normal form already: variable, application, type application, record fold, tracecase, typecase.
  This leaves the following cases:
  \begin{itemize}
  \item If $M$ is a function, we $\beta$-reduce.
  \item If $M$ is of the form if-then-else, we reduce using a commuting conversion.
  \end{itemize}

\item Term-level type abstraction $\forall \alpha:M$: by IH, either $M \leadsto M'$, in which case $\forall \alpha:M \leadsto \forall \alpha:M'$, or $M$ is in normal form, in which case $\forall \alpha:M$ is in normal form as well.

\item Term-level type application $M\ C$: apply induction hypothesis to $M$. If $M$ reduces to $M'$, then $M\ C$ reduces to $M'\ C$. Otherwise, $M$ is in \TLinks normal form.
  It cannot be any of the following, because these would be ill-typed: constants, functions, operators, record introduction forms including record map, list introduction forms, trace introduction forms.
  In the following cases, the application is already in normal form: variable, application, type application, projection, record fold, tracecase, typecase.
  This leaves the following cases:
  \begin{itemize}
  \item If it is a term-level type abstraction, we $\beta$-reduce.
  \item If it is of the form if-then-else, we perform a commuting conversion.
  \end{itemize}

\item Case $\kw{if}~L~\kw{then}~M~\kw{else}~N$: apply induction hypothesis to all subterms.
  If any of the subterms reduce, then the whole if-then-else reduces.
  Otherwise, $L, M, N$ are in \TLinks normal form.
  The condition cannot be any of the following, because these would be ill-typed: functions, type abstractions, arithmetic operators, record introduction forms including record map, list introduction forms, trace introduction forms.
  In the following cases, the condition already matches the normal form: variable, application, type application, projection, record fold, tracecase, and typecase.
  This leaves the following cases for the condition:
  \begin{itemize}
  \item Constants: \lstinline|true| and \lstinline|false| reduce, other constants are ill-typed.
  \item If the condition is of the form if-then-else itself, we apply a commuting conversion.
  \item Operators with Boolean result like \lstinline!==! are in normal form.
  \end{itemize}

\item Records $\langle l = M; N \rangle$: apply induction hypothesis to $M$ and $N$. If either reduces, the whole record reduces, otherwise it is in normal form.

\item Projection $M.l$: apply induction hypothesis to $M$. If $M$ reduces to $M'$, then $M.l$ reduces to $M'.l$. Otherwise, $M$ is in \TLinks normal form.
  It cannot be any of the following, because these would be ill-typed: constants, functions, type abstraction, operators, list introduction forms, trace introduction forms.
  In any of the following cases of $M$, $M.l$ is already in normal form: variable, application, type application, projection, record map, record fold, typecase, tracecase.
  This leaves the following cases for $M$:
  \begin{itemize}
  \item If it is of the form if-then-else itself, we apply a commuting conversion.
  \item It cannot be an empty record, or a record expression where label $l$ does not appear---these would be ill-typed. If $M$ is a record literal that maps $l$ to $M'$ then $\langle l = M'; N \rangle.l$ reduces to $M'$.
  \end{itemize}

\item Record map $\kw{rmap}^S\ M\ N$: by Lemma \ref{lem:progress-constructor} we have that either $S$ reduces to $S'$, in which case $\kw{rmap}^S\ M\ N$ reduces to $\kw{rmap}^{S'}\ M\ N$, or is in normal form. Similarly, $M$ and $N$ may reduce by IH. Otherwise, we have $S$, $M$, and $N$ in normal form. By cases of $S$:
  \begin{itemize}
  \item If it is a closed row, we apply the $\beta$-rule.
  \item If it is an open row $U$, $\kw{rmap}^U\ M\ N$ is in normal form.
  \end{itemize}

\item Record fold $\kw{rfold}^S\ L\ M\ N$: same as record map.
\item Empty list: in normal form.
\item Singleton list: apply IH to element and reduce or is in normal form.
\item List concatenation: apply IH to both sides. If either reduces, the whole concatenation reduces, otherwise it is in normal form.

\item Comprehension $\kw{for}~(x \leftarrow M)~N$: apply induction hypothesis to $M$. If $M$ reduces to $M'$ then $\kw{for}~(x \leftarrow M)~N$ reduces to $\kw{for}~(x \leftarrow M')~N$. Otherwise, $M$ is in \TLinks normal form. It cannot be any of the following, because these would be ill-typed: constants, functions, type abstractions, primitive operators, record introduction forms including record map, and trace constructors. In the following cases we apply the IH to the body and either reduce or the whole comprehension is in normal form: variables, term application, type application, projection, tables, record fold, tracecase, typecase. This leaves the following cases for $M$:
  \begin{itemize}
  \item If-then-else: reduces with a commuting conversion.
  \item Empty list: the whole comprehension reduces to the empty list.
  \item Singleton list: $\beta$-reduces.
  \item List concatenation: reduces with a commuting conversion.
  \item Comprehension: reduces with a commuting conversion.
  \end{itemize}
\item Table: in normal form.
\item Trace constructors: apply IH and Lemma~\ref{lem:progress-constructor} to constituent parts. If either reduces, the whole trace constructor reduces, otherwise it is in normal form.

\item Tracecase: apply induction hypothesis to the scrutinee. If it reduces, the whole tracecase expression reduces. Otherwise it is in \TLinks normal form.
  It cannot be any of the following, because these would be ill-typed:
  constants, functions, type abstractions, primitive operators, record introduction forms, record map, empty or singleton lists, list concatenations or comprehensions, tables.
  If the scrutinee is any of the following, by IH we reduce in the branches or the whole tracecase is in normal form: variables, term application, type application, projection, record fold, tracecase, typecase.
  This leaves the following cases:
  \begin{itemize}
  \item If-then-else: reduces using commuting conversion.
  \item Trace constructor: $\beta$-reduces.
  \end{itemize}

\item Typecase: apply Lemma~\ref{lem:progress-constructor} to the scrutinee.
  Either it reduces, in which case the whole typecase expression reduces.
  Otherwise it is in normal form.
  It cannot be a type-level function, that would be ill-kinded.
  In the following cases, we apply the induction hypothesis to the branches of the typecase and reduce there, or we are in \TLinks normal form: type variables, type-level application, and typerec.
  And finally, if the outmost constructor is one of the following, a $\beta$-rule applies: bool, int, string, list, record, trace.

\item Primitive operators like \lstinline!==! and \lstinline|+|: by IH either the arguments reduce, in which case the whole expression reduces, or are in normal form, in which case the whole expression is in normal form.
  \qedhere
\end{itemize}
\end{proof}

\subsection{Proof of Lemma~\ref{lem:f-collapses}}
\begin{proof}
  By induction on the typing derivation.
  The term cannot be a record fold or typecase, because those necessarily contain a (row) type variable, which is unbound in the query context $\Gamma$.
  It cannot be a term application, type application, or tracecase, because the term in function position or the scrutinee, by IH, is of the form $x$ or $x.l$, both of which are ill-typed given that the query context $\Gamma$ does not contain function types, polymorphic types, or trace types.
  Projections $P.l$ are of the form $F.l$ or $(\kw{rmap}^U\ M\ N).l$.
  The former case reduces by IH to $x.l$ or $x.l'.l$, the first of which is okay, and the second is ill-typed.
  The latter case is impossible, because $U$ necessarily contains a row variable and would therefore be ill-typed.
  This leaves variables $x$ and projections of variables $x.l$.
\end{proof}

\subsection{Proof of Theorem~\ref{thm:nf-query-nrc}}\label{sec:prf:nf-query-nrc}
\begin{proof} By induction on the typing derivation.
  \begin{itemize}
  \item Constants, variables, empty lists, and tables are in both languages.
  \item Functions, type abstractions, and trace constructors do not have nested relational type.
  \item Function application:
    The typing rule
    \[ \infer
      { Γ ⊢ M': A \rightarrow B \\ Γ ⊢ N: A }
      { Γ ⊢ M' N : B } \]
    requires $M'$ to have a function type.
    Since $M$ is in normal form, $M'$ matches the grammar $F$.
    Lemma~\ref{lem:f-collapses} implies that $M'$ is either a variable $x$ or a projection $x.l$.
    The query context $\Gamma$ assigns record types with labels of base types to all variables --- not function types --- a contradiction.

  \item Type instantiation:
    The typing rule
    \[ \infer
      { Γ ⊢ M': \forall \alpha:K.A \\ Γ ⊢ C: K}
      { Γ ⊢ M'\ C : A[\alpha \coloneqq C]} \]
    requires $M'$ to have a polymorphic type.
    The normal form assumption requires $M'$ to match the normal form $F$.
    Therefore, Lemma~\ref{lem:f-collapses} applies, so $M'$ is either a variable $x$ or a projection $x.l$.
    The query context $\Gamma$ assigns record types with labels of base types to all variables --- a contradiction.

  \item Primitive operators, if-then-else, records, singleton list, and list concatenation: apply the induction hypothesis to the subterms.
  \item Projection $M'.l$: $M'$ is in normal form $P$, which is either of the form $F$ or a record map.
    Lemma \ref{lem:f-collapses} restricts $F$ to $x$ and $x.l'$, both of which are nested relational calculus terms.
    $P$ cannot be of the form $\kw{rmap}^U\ N'\ N''$, because $U$ necessarily contains a free type variable (see Remark \ref{remark:constructors-free-type-variable}), and thus cannot be well-typed in a query context $\Gamma$ which does not contain type variables.

  \item Record map and fold have normal forms $\kw{rmap}^U\ M'\ N$ and $\kw{rfold}^U\ L\ M'\ N$, respectively.
    $U$ necessarily contains a free type variable (see Remark \ref{remark:constructors-free-type-variable}), and thus cannot be well-typed in a query context $\Gamma$ which does not contain type variables.

  \item List comprehension $\kw{for}\ (x \leftarrow M')\ N$:
    The iteratee $M'$ is in normal form $T$, which includes tables and normal forms $F$.
    If $M'$ is a table, $x$ has closed record type with labels of base types, the induction hypothesis applies to $N$, and the whole expression is in nested relational calculus.
    If $M'$ is of the form $F$, Lemma~\ref{lem:f-collapses} applies and implies that $M'$ is either $x$ or $x.l$.
    Both cases are ill-typed, because the query context $\Gamma$ only contains variables with closed records with labels of base type --- a contradiction.

  \item Tracecase: much like the application case above, the typing derivation forces the scrutinee to be of trace type.
    The normal form forces the scrutinee to be of the form $F$, and from Lemma \ref{lem:f-collapses} follows that it has to be a variable, or projection of a variable.
    The query context $\Gamma$ assigns record types with labels of base types to all variables --- a contradiction.

  \item Typecase: the scrutinee is in normal form $E$ which contains at least one free type variable (see Remark \ref{remark:constructors-free-type-variable}).
    In a query context which only binds term variables, this cannot possibly be well-typed --- a contradiction.
    \qedhere
  \end{itemize}
\end{proof}

\end{document}